\pdfoutput=1
\documentclass[manuscript,screen]{acmart}

\AtBeginDocument{%
  }

\setcopyright{cc}
\setcctype{by}
\acmJournal{TQC}
\acmYear{2026}
\acmVolume{0}
\acmNumber{0}
\acmArticle{0}
\acmMonth{0}
\acmDOI{}

\usepackage{booktabs}
\usepackage{graphicx}
\usepackage{array}
\newcolumntype{L}[1]{>{\raggedright\arraybackslash}p{#1}}
\pdftrailerid{}
\usepackage{amsmath}

\usepackage{amssymb}
\usepackage{tikz}
\usetikzlibrary{arrows.meta,positioning,fit,backgrounds,calc}

\definecolor{qBlue}{HTML}{005A8D}     
\definecolor{qTeal}{HTML}{009E73}     
\definecolor{qOrange}{HTML}{E69F00}   
\definecolor{qYellow}{HTML}{F0E442}   

\AtEndPreamble{%
  \theoremstyle{acmplain}%
  \newtheorem{observation}[theorem]{Observation}%
  \theoremstyle{acmdefinition}%
  \newtheorem{remark}[theorem]{Remark}%
  \theoremstyle{acmplain}}

\newcommand{\graph}{G}                      
\newcommand{\syn}{\sigma}                   
\newcommand{\corr}{c}                       
\newcommand{\obs}{o}                        
\newcommand{\wt}[1]{W(#1)}                  
\newcommand{\bd}{\partial}                  
\newcommand{\relab}{\varphi}                
\newcommand{\yes}{\textsc{yes}}
\newcommand{\no}{\textsc{no}}
\newcommand{\silent}{\textsc{silent}}
\newcommand{\unassessable}{\textsc{unassessable}}

\begin{document}

\title{Capability-Gated Conformance Testing of Quantum Error-Correction Decoder Libraries}

\author{Jiachen Shen}
\orcid{0000-0002-7233-497X}
\affiliation{%
  \institution{University of Houston}
  \city{Houston}
  \state{Texas}
  \country{USA}}
\email{jshen28@cougarnet.uh.edu}

\author{Hui Zhong}
\authornote{Corresponding author.}
\orcid{0009-0005-7952-007X}
\affiliation{%
  \institution{Miami University}
  \city{Oxford}
  \state{Ohio}
  \country{USA}}
\email{zhongh7@miamioh.edu}

\begin{abstract}
A quantum error correction decoder is a library other people's results depend on, judged in one
dominant way. Sample errors, decode, and count wrong logical observables. We ask what else
can be checked there. Our conformance contract needs no oracle. One check asks that a returned
correction explain the syndrome in the caller's index space. The other hands a decoder one instance
under two presentations differing only in bookkeeping, where two feasible corrections of different
weight prove the heavier is not minimum-weight. Verdicts are gated on what each library declares, so
a firing contradicts a published guarantee. Nine configurations from five public libraries give
three results. Documentation answers $4$ of $54$ capability questions. Bounded-distance correctness,
the property callers most depend on, has a direct declaration yield of $0.0\%$, though its hypotheses
hold in $62.1\%$ of cases. Presentation sensitivity is real but shallow. One solver moved to a $26\%$ heavier
correction under a different numbering, which reaches the logical class at most once in twenty
thousand shots. Established evaluation misses corruptions that preserve logical parity, while one
summation over the caller's weights catches every one we injected. All $639$ certificates ship as
bundles a standalone verifier re-derives from first principles.
\end{abstract}

\begin{CCSXML}
<ccs2012>
<concept>
<concept_id>10003752.10003809.10003716.10011136.10011797.10011798</concept_id>
<concept_desc>Theory of computation~Quantum computation theory</concept_desc>
<concept_significance>500</concept_significance>
</concept>
<concept>
<concept_id>10011007.10011074.10011099.10011102.10011103</concept_id>
<concept_desc>Software and its engineering~Software testing and debugging</concept_desc>
<concept_significance>500</concept_significance>
</concept>
<concept>
<concept_id>10011007.10011074.10011099.10011102.10011106</concept_id>
<concept_desc>Software and its engineering~Software verification and validation</concept_desc>
<concept_significance>300</concept_significance>
</concept>
</ccs2012>
\end{CCSXML}

\ccsdesc[500]{Theory of computation~Quantum computation theory}
\ccsdesc[500]{Software and its engineering~Software testing and debugging}
\ccsdesc[300]{Software and its engineering~Software verification and validation}

\keywords{quantum error correction, decoder, conformance testing, metamorphic testing,
minimum-weight perfect matching, capability declaration, software measurement}

\maketitle

\section{Introduction}
\label{sec:intro}

A fault-tolerant quantum computer runs a decoder in its inner loop. Every round of syndrome extraction
hands a classical program a parity pattern, and that program returns a correction the machine then
applies. The decoder runs inside the experiment, not after it. It ships as a library, and other
people's results depend on it being right.

The field has one established way of asking whether a decoder is right. Sample errors, decode them,
and count how often the logical observable ends up wrong. That measurement is the basis of threshold
estimates \cite{wang2011surface,stephens2014fault}, of resource projections
\cite{gidney2021factor,beverland2022assessing}, and of the experiments that define progress
\cite{google2023suppressing,google2025belowthreshold}. It is a good measurement of what it measures.
What it measures is a decoder's predicted logical outcome, averaged over a noise distribution at
whatever fixed numbering the model happened to be built with.

Two things fall outside it. The first is the correction itself. A caller applies it, and an error rate
reads it only through the logical parity it induces, which leaves its feasibility, its edge content
within a logical class and its weight unread. The second is the choice of presentation:
which order the error mechanisms are listed in, and which integers the detectors are numbered with.
Neither is physics. Both are decisions made by whoever assembled the model, and a benchmark run at one
fixed presentation holds them constant instead of varying them.

This paper asks what can be checked at that interface without knowing the right answer, measures what
a checker built that way can reach across an audited panel of public decoder libraries, and
reports what it found.

\subsection{A decoder can be caught contradicting itself}
\label{sec:intro:certificate}

Give a decoder one instance under two presentations that differ only in bookkeeping, transport both
answers into the caller's index space, and check that each explains the syndrome it was given. If both
do and their total weights differ, the heavier answer is not a minimum-weight explanation: the lighter
one is a feasible competitor, and that is a proof. No oracle solves the instance, no ground truth is
consulted, and no assumption of a unique optimum is needed.

The principle is not new. Paxian and Biere refute a MaxSAT solver's claimed optimum by exhibiting a
better feasible solution found by a different solver \cite{paxian2023maxsat}. What is different here
is that the competing solution comes from the \emph{same} decoder, under a relabelling of its input,
so one implementation suffices; and that what a firing means is decided by what the library declares,
not by the arithmetic. Most decoder libraries are approximate on purpose, and catching an approximate
algorithm being approximate is not a finding.

That last point is why this paper's checker carries a conformance layer at all. A capability the
documentation claims can be tested and failed. A capability it declines is not tested. A capability it
never mentions can be measured on a concrete instance but cannot be failed. The vocabulary is
conformance testing's \cite{iso9646part1,iso9646part7}, and the same mechanism runs today in
cryptographic algorithm validation \cite{acvp}.

\subsection{What the audited panel turned out to look like}
\label{sec:intro:finding}

We applied this to nine decoder configurations from five public libraries, over a population of
instances small enough to enumerate exhaustively, so the counts taken over it carry no sampling error.

The headline is a fact about documentation, not about code. We audited what each configuration's
shipped documentation says about each of six capability dimensions, $54$ cells in all, and four of
them carry a claim. Bounded-distance correctness --- the property a caller most directly depends on
--- is declared by nobody. It is also declined by nobody. Every subject is silent on it, so its direct declaration
yield is $0.0\%$ over 7119 subject-by-case cells, while its hypotheses hold on $62.1\%$ of the cases.
The arithmetic is available and the claim to check it against is not. Nor is that zero an artifact of
reading declarations strictly: allowing a proved implication to carry a declared capability to it
lifts the yield to $13.8\%$ and stops there, because seven of the nine subjects declare nothing that
any implication can start from.

The rest of the measurements sit under that one. Every returned correction explained its syndrome in
the caller's index space, across $717{,}244$ judged results, which is a null we report as a null.
Presentation dependence appeared only among the approximate solvers on these instances. One of them,
mwpf's union-find configuration, returned the unique optimum under one detector numbering and a
$26\%$ heavier correction under another, on $26$ of $2867$ unique-optimum syndromes. The effect
survives replication over twenty independent relabellings, and a post hoc paired sign test gives a
directional tendency among the three configurations of that library. Measured instead on syndromes a
circuit-level experiment would really produce, the same relabelling changes a belief-propagation
decoder's correction on two to seven shots per thousand and its logical class at most once in twenty
thousand, so the effect lives at the interface and a logical error rate is nearly blind to it. The two
libraries that do declare exactness were never contradicted, on the enumerated population or in a
pre-registered search of $160{,}000$ paired comparisons on circuit-derived graphs an order of
magnitude larger. The certificate establishes non-minimality whenever two presentation-equivalent
feasible returns carry unequal weight, which is what that null covers. Exactness itself is assessed
separately, against the optimum: we attempted an exact solve of all $1600$ canonical syndromes of
that search. The oracle settled $1599$, and on every settled one both libraries returned exactly the
minimum weight (Section~\ref{sec:hunt:optimum}).

\subsection{Contributions}
\label{sec:intro:contributions}

\begin{enumerate}
\item \textbf{A capability vocabulary for decoder libraries, with its central pair proved
incomparable.} Three
capability levels and three flags, three declaration states, and a dispatch policy that keeps
\textsc{unassessable} apart from a legal skip. We prove that bounded-distance correctness and exact
minimum-weight optimisation are incomparable, in both directions, with explicit constructions
(Theorem~\ref{thm:incomparable}), so on the general weight domain a library declaring either has not
implied the other. We also state the restricted domain on which one direction does hold, and separate
what it licenses --- binding a subject to a capability on a case the evaluator has checked --- from
what it never licenses, which is inferring a declaration the library did not make
(Section~\ref{sec:model}).

\item \textbf{An oracle-free suboptimality certificate, proved, implemented and independently
verifiable.} Two feasible corrections of different total weight prove the heavier one non-minimal
(Lemma~\ref{lem:certificate}). Over $76{,}680$ paired comparisons the implementation produced $639$
certificates and contradicted an exact oracle on none of them. All $639$ are exported as
self-contained bundles and re-checked by a program that imports nothing from this project:
$639$ verified, $0$ failed (Sections~\ref{sec:certificates} and~\ref{sec:artifact}).

\item \textbf{A measurement of what a capability-gated checker can reach across the audited panel.}
Bounded-distance correctness is \textsc{silent} in $7$ of $7$ declarations and \textsc{no} in $0$ of
$7$, giving a direct declaration yield of $0.0\%$ and a derived yield of $13.8\%$ against an
applicability of $62.1\%$. No returned correction failed to explain its syndrome across $717{,}244$ judged results.
Presentation dependence moves an approximate decoder between optimal and suboptimal. Our own
$\mathit{int\text{-}weights}$ flag predicts the wrong subjects. This is a defect of our capability
vocabulary, not of any library (Section~\ref{sec:measurement}).

\item \textbf{A map of which part of a decoder's answer each evaluation method inspects.} Over $381$
distinct behaviours, error-rate benchmarking read on the observable vector a library returns, and
exhaustive observable checking, detect $0$ of the $219$ that corrupt the correction. The same
benchmark read on an observable the caller recomputes from the returned correction, which is what
seven of the nine subjects require, detects $176$ of them at the pre-registered $\alpha$, $173$
family-wise, and is blind to the $43$ that change the correction without moving its logical class. Exhaustive correction checking detects all $219$ and our
contract $215$. In the other direction our contract detects $0$ of $23$ consistent observable
truncations that the observable-based methods all catch, and neither they nor our contract detect any
of the $72$ that perturb the returned weight, while one summation over the caller's own
weights detects all $72$. The separations follow from the field each method reads, not from detection power. Neither dominates.
The correction check differs from our contract because it needs an exact oracle. On the populations measured here that oracle is affordable and the difference is a factor of
four to six, so the reason to prefer the contract is that it carries no solver dependency and emits a
checkable witness, not that the alternative is out of reach. A pre-registered falsifier fired in that
study and the numbers are from the repaired arm (Section~\ref{sec:accessmap}).
\end{enumerate}

Sections~\ref{sec:hunt} and~\ref{sec:cost} support the third and fourth: a pre-registered search for
a conformance violation against the two exactness declarations, which returned none, and a cost
measurement showing where an oracle-free check is worth having.

\subsection{How to read this paper}
\label{sec:intro:roadmap}

Section~\ref{sec:background} fixes the setting and introduces a running example that
Section~\ref{sec:certificates} certifies. Section~\ref{sec:overview} gives the whole construction
informally in two pages. Sections~\ref{sec:model} to~\ref{sec:relations} are the model, the properties
and the population. Sections~\ref{sec:measurement} to~\ref{sec:cost} are the measurements. They appear in decreasing
order of how much each says about the audited panel, not our instrument.
Section~\ref{sec:artifact} is the artifact and Section~\ref{sec:related} the related work. A reader
with fifteen minutes should read Sections~\ref{sec:overview}, \ref{sec:measurement}
and~\ref{sec:accessmap}.

\section{Decoding, and the interface a caller actually sees}
\label{sec:background}

This section fixes the setting and introduces the example the rest of the paper returns to. A reader
who knows quantum error correction can read it for the interface vocabulary alone.

\subsection{Decoding is a weighted combinatorial problem}
\label{sec:bg:problem}

A quantum error-correcting code is measured repeatedly. Each round yields parity information, not the
error itself. That information is organised as a detector error model, a list of
independent error \emph{mechanisms}, each with a probability, a set of \emph{detectors} it flips and
a set of logical \emph{observables} it flips \cite{gidney2021stim}. A run produces a \emph{syndrome},
the set of detectors that fired.

The decoder's job is to propose a correction, a subset of the mechanisms whose combined detector
pattern is exactly the syndrome. Since each mechanism $i$ has probability $p_i$ and the mechanisms are
independent, the probability of a given subset is a fixed constant times
$\exp\!\left(-\sum_i \log\frac{1 - p_i}{p_i}\right)$ over its members. That is decreasing in the
sum, so the most likely subset is the one that minimises it.
Writing $w_i$ for the log-likelihood-ratio weight of mechanism $i$ turns decoding into a
minimum-weight problem over a hypergraph. When every mechanism flips at most two detectors the
hypergraph is a graph, the virtual boundary $b$ absorbs the mechanisms that flip only one, and the
problem is a minimum-weight $T$-join, with $T = \syn$ when $|\syn|$ is even and
$T = \syn \cup \{b\}$ when it is odd \cite{dennis2002topological,fowler2012surface}. Solvers reach it as minimum-weight perfect matching
after pairing the fired detectors and the boundary in a derived complete graph, which is Edmonds'
problem \cite{edmonds1965paths} and is what the modern implementations solve
\cite{kolmogorov2009blossomv,higgott2025sparseblossom,fusionblossom}.

That reduction is why a decoder library has an interface worth testing. A caller compiles a circuit,
extracts a model, hands a graph and a syndrome to a library, and receives an answer in the indices it
supplied --- an exchange whose every step is bookkeeping the physics does not determine.

\subsection{Three things a decoder can return}
\label{sec:bg:returns}

Libraries differ in what they hand back, and the difference is not cosmetic.

Some return a \emph{correction}, the subset of mechanisms, as indices into the list the caller
supplied. That subset is an explanation in the model's own terms, not a physical operation. The caller computes
which logical observables the subset flips. The caller then maps the subset to a recovery or a frame
update on its own data.

Some return an \emph{observable vector} directly, one bit per declared logical observable, computed
inside the library. A caller that only needs to know whether its logical qubit flipped can stop there.

Some return a \emph{weight} alongside either of the above, the total $\sum_i w_i$ of the correction
the library chose. It is the quantity a caller would use to compare two candidate answers, or to decide
how much to trust one.

Figure~\ref{fig:interface} draws the interface and marks where each evaluation method looks. Which of
the three components each method inspects is what Section~\ref{sec:accessmap} measures.

\begin{figure}[t]
\centering
\begin{tikzpicture}[
  font=\small,
  core/.style={draw=black!70, fill=black!3, rounded corners=2pt, align=center,
               text width=25mm, inner xsep=1.5mm, inner ysep=1.5mm, minimum height=9mm},
  returned/.style={draw, rounded corners=2pt, align=center, text width=26mm,
                   inner xsep=1.5mm, inner ysep=1.3mm, minimum height=9mm, line width=.7pt},
  method/.style={draw, dashed, rounded corners=2pt, align=center, text width=37mm,
                 inner xsep=1.5mm, inner ysep=1.2mm, minimum height=11mm, font=\footnotesize},
  flow/.style={-{Latex[length=2mm]}, draw=black!80, line width=.65pt}]

\node[core] (caller) {caller\\[-.2ex]{\footnotesize model $M$, syndrome $\syn$}};
\node[core, text width=22mm, right=5mm of caller] (lib) {decoder library};

\node[returned, draw=qBlue, fill=qBlue!8, right=8mm of lib] (obsv) {observable vector $\obs$};
\node[returned, draw=qTeal, fill=qTeal!9, above=4mm of obsv] (corr) {correction $\corr$};
\node[returned, draw=black, fill=qYellow, line width=.9pt, below=4mm of obsv] (wgt)
  {weight $\wt{\corr}$};

\node[method, draw=qTeal, right=4mm of corr] (corrread)
  {read by: correction check,\\this paper's contract};
\node[method, draw=qBlue, right=4mm of obsv] (obsread)
  {read by: logical error rate,\\observable check, width check};
\node[method, draw=black, fill=qYellow, densely dashed, line width=.9pt, right=4mm of wgt] (wgtread)
  {\textbf{read by: no method}\\we compared, nor by our contract};

\draw[flow] (caller.east) -- node[above, font=\footnotesize] {$M, \syn$} (lib.west);
\draw[flow, draw=qTeal] (lib.north east) -- ++(3mm,0) |- (corr.west);
\draw[flow, draw=qBlue] (lib.east) -- (obsv.west);
\draw[flow] (lib.south east) -- ++(3mm,0) |- (wgt.west);

\draw[flow, draw=qTeal] (corr.east) -- (corrread.west);
\draw[flow, draw=qBlue] (obsv.east) -- (obsread.west);
\draw[flow] (wgt.east) -- (wgtread.west);
\end{tikzpicture}
\caption{What crosses the interface, and which evaluation method inspects each returned component.
The bottom row is the result of Section~\ref{sec:accessmap}. Of the three components a decoder can
return, the weight is read by none of the established evaluation methods compared there, and not by
our own contract either. It is checkable in one summation, and that summation is the arm this paper
adds to the comparison.}
\label{fig:interface}
\end{figure}

\subsection{What a logical error rate scores}
\label{sec:bg:ler}

The established measure of a decoder is its logical error rate. Sample many errors from the noise
model, decode each one, compare the decoder's predicted observable flips against the truth the
simulator knows, and report the fraction that disagree. It is the measure behind threshold estimates
\cite{wang2011surface,stephens2014fault}, resource projections \cite{gidney2021factor}, and the
hardware demonstrations that mark progress in the field
\cite{google2023suppressing,google2025belowthreshold}.

It is the right measure for the question it asks, which is how often a decoder's answer lands in the
wrong logical class over a distribution of errors. Two properties of that question matter here. It
runs at one fixed presentation, holding the mechanism ordering and detector numbering constant, so
sampling errors alone cannot report what a second presentation would have returned. And it scores the
observable prediction, so a library returning a correction is scored through arithmetic the caller
performs afterwards, and a change of correction inside one logical class does not move it at all.

\subsection{The running example}
\label{sec:bg:example}

Take the $X$-error sector of a distance-$3$ unrotated surface code memory experiment in the $Z$ basis,
under code-capacity noise at $p = 0.01$. Its detector error model has $6$ detectors and $13$
mechanisms, and because the noise is uniform every mechanism carries the same weight
$w = \log\frac{1-p}{p} = 4.595$. Weight-minimisation is therefore counting:
the best correction is the one with the fewest mechanisms.

Consider the syndrome $\syn = \{0, 1, 2, 5\}$. One correction that explains it uses the three
mechanisms $\{0,1\}$, $\{2,4\}$ and $\{4,5\}$, written by their detector endpoints, for a total weight
of $13.785$. Another uses five: $\{0,2\}$, $\{1,b\}$, $\{2,4\}$, $\{2,b\}$ and $\{4,5\}$, where $b$ is
the virtual boundary, for a total of $22.976$. Both are feasible. Detector $2$ appears in three of the
five, so it fires an odd number of times, and the reader can check the parity of every detector by
hand.

The three-mechanism answer and the five-mechanism answer came from the same decoder, on the same
instance, in the same session. The only difference between the two calls was the order in which the
$13$ mechanisms were listed. Section~\ref{sec:certificates} shows what that pair of answers proves,
without anyone solving the instance.

\section{Overview}
\label{sec:overview}

Before the definitions, here is the whole construction in plain terms, and the shape of the study
built on it.

\subsection{The check, informally}
\label{sec:ov:check}

Hand a decoder the same problem twice, described two different ways, and compare what comes back.

The two descriptions must differ only in bookkeeping, with mechanisms listed in a different order or
detectors numbered differently. Neither choice belongs to the physics. Both belong to whoever built
the model. So the two calls pose one question, and the answers can be carried into a common index
space and compared.

Two things are then worth checking, and neither needs anyone to know the right answer. Each returned
correction should explain the syndrome it was given, in the indices the caller supplied, which is a
parity computation the caller can do. And if the two answers explain the same syndrome but carry
different total weight, the heavier cannot be a minimum-weight explanation, because the lighter is a
feasible competitor. That is a proof, and the lighter answer is the evidence a reader can check.

The second check is the useful one, and its firing condition is exact. It fires when two
presentation-equivalent feasible answers carry unequal total weight, and what it then establishes is
that the heavier one is not minimum-weight. What a firing \emph{means} for a library is decided by
what that library declares.

\subsection{Why a declaration decides what a firing means}
\label{sec:ov:declaration}

Most decoder libraries are approximate on purpose. Union-find trades exactness for speed
\cite{delfosse2021almostlinear}, and belief propagation with post-processing makes no exactness
promise at all \cite{panteleev2021degenerate}. A heavier answer from one of those is the algorithm
doing what it says on the box. The same measurement against a library publishing an exactness
guarantee contradicts that guarantee.

So the verdict layer is gated on what the library declares, in the vocabulary conformance testing has
used for thirty years \cite{iso9646part1,iso9646part7} and cryptographic validation uses today
\cite{acvp}. A claimed capability can be tested and failed. A declined one is not tested. One the documentation
never mentions can be measured on an instance but cannot be failed. That third state is
where the audited panel turns out to live.

\subsection{Two disciplines that run through the study}
\label{sec:ov:study}

Section~\ref{sec:intro:roadmap} gives the order of what follows. Two disciplines run through all of
it and are worth stating before any of it.

Every confirmatory experiment was pre-registered and committed alone, before the code that acts on it
existed, and every correction made to an instrument during the work is recorded in the artifact
alongside the result it affects.

One population in this paper is enumerated completely, so the counts taken over it carry no sampling
error. Three are sampled and say so where they appear: the relabellings drawn for replication, the
shots behind an error-rate arm, and the syndromes of the search in Section~\ref{sec:hunt}.

\section{The interface and what a library declares}
\label{sec:model}

This section fixes the object we measure. We first say what crosses the boundary between a caller and
a decoder library. We then say what a library has to publish before a checker may hold it to a
guarantee, and we prove that two of those guarantees do not imply each other. The dispatch rule at the
end of the section follows from that proof.

\subsection{What crosses the interface}
\label{sec:model:interface}

A caller builds a model of its own experiment and hands that model to a decoder. In Stim the model is
a detector error model, and it is a list of error mechanisms. Each mechanism carries a probability, a
set of detectors it flips, and a set of logical observables it flips. The caller then sends syndromes
and receives answers.

Two answers are possible and they are not interchangeable. A \emph{correction} names the mechanisms
the decoder believes fired. A \emph{predicted observable vector} says only which logical observables
it believes flipped. A correction is expressed in indices, and those indices belong to the caller. The decoder was given
the caller's list, so index $i$ in the answer must mean mechanism $i$ of that list. We call this the caller's index space, and preserving it is an obligation the decoder inherits
the moment it accepts the model.

\begin{definition}[Subject]
\label{def:subject}
A \emph{subject} is a decoder exposing at least one of two operations over a detector error model
$M$ supplied by the caller. The operation $\mathrm{dec\_corr}$ maps a syndrome
$\syn \in \mathbb{F}_2^{D}$ to a set of mechanism indices in the index space of the flattened $M$.
The operation $\mathrm{dec\_obs}$ maps $\syn$ to a predicted observable vector of length $|L(M)|$.
A subject may also return the total weight of the correction it chose.
\end{definition}

Definition~\ref{def:subject} names three components an answer can carry, none of them universal: a
correction, an observable vector, and a weight. Section~\ref{sec:accessmap} measures which of the
three each established evaluation method inspects, and the answer is not all of them.

Every graphlike detector error model becomes a weighted graph in the standard way. Detectors become
vertices, a mechanism that flips two detectors becomes an edge between them, and a mechanism that
flips one detector becomes an edge to a virtual boundary vertex. Each edge carries the
log-likelihood-ratio weight $w_e = \log\!\big((1-p_e)/p_e\big)$ of its mechanism, which is the
quantity whose sum over a subset fixes that subset's probability up to a constant. Writing $\bd_D S$
for the set of detectors that an edge set $S$ flips an odd number of times, a correction $\corr$
explains a syndrome exactly when $\bd_D \corr = \syn$, and its cost is
$\wt{\corr} = \sum_{e \in \corr} w_e$.

\begin{definition}[Scope]
\label{def:scope}
The framework is defined over $M$, and the properties in this paper are evaluated on the graphlike
fragment of $M$: models in which every mechanism flips at most two detectors, carries a probability in
$(0, 1/2)$, and flips at least one detector. Any circuit whose detector error model lies in that
fragment lies inside the interface. Hyperedge mechanisms, decomposed correlated components and
logical-only mechanisms are outside it, and a model containing them is reduced before an adapter sees
it. The harness reads no topology, no gate set, and no device description.
\end{definition}

The framework performs no decoding. Its coverage quantity is the fraction of the population on which
a check can run at all, which Definition~\ref{def:yield} names.

\subsection{What a library declares}
\label{sec:model:declare}

A checker that asks a decoder for a guarantee has to know which guarantees it offers. An exact
minimum-weight solver and a fast heuristic will disagree on hard instances and only one has promised
not to, so holding both to the same standard measures the wrong thing. Conformance testing solved
this for communication protocols decades ago with an implementation conformance statement, in which
the implementer answers capability by capability before any test is selected
\cite{iso9646part1,iso9646part7}. The same idea drives cryptographic module validation today
\cite{acvp}.

We use four capabilities, and they divide the way conformance testing divides them
\cite{iso9646part1}. One is \emph{mandatory}: it is incurred by exposing the interface at all, it is
never declared, and it is always tested. The other three are \emph{optional}: they are declared or
they are not, and Definition~\ref{def:declstate} governs what a checker may do with the answer.

Capability $L_0$ is the mandatory one, and it is the correction interface's own contract. A library
that accepts the caller's model and calls its return a \emph{correction} has thereby undertaken two
things: that the return is syntactically well formed, a list of indices in range and not repeated,
and that it explains the syndrome it was given \emph{in the caller's index space}. Nothing about
optimality, distance or determinism is part of it. Every subject that returns a correction incurs $L_0$ without declaring it, because it accepts the
model. The residual property of Definition~\ref{def:pb} is therefore dispatched to every subject on
that path, and a subject exposing only an observable vector has no correction for it to constrain. A firing there is a conformance
failure, not an observation.

Testing $L_0$ is not testing something true by definition. $L_0$ is a normative obligation on the
implementation, not a description of what implementations do, and conformance testing asks exactly
whether an implementation meets such an obligation. A decoder that solves a different model perfectly
and reports in that model's indices fails it.

Capability $L_1$ is bounded-distance
correctness, and it is a conjunction whose two halves have to be checked in order. For an error within
the code's correction radius, the returned correction must explain the syndrome, \emph{and} the
residual must lie in the identity logical class. Where a subject also returns an observable vector,
the claim covers that vector: it has the width the model declares and it agrees with the returned
correction. The order is forced, not stylistic. For a stabiliser code, the residual has a logical class only when
its syndrome vanishes. A correction that does not explain the syndrome therefore has no class to be
wrong about. Appendix~\ref{app:proofs} proves the ordering. Capability
$L_2$ is exact minimum-weight optimisation. Capability $L_3$ is exact maximum-likelihood decoding.
Three flags accompany them: $\mathit{det}$ for a determinism guarantee, $\mathit{rand}$ for a declared
randomised schedule, and $\mathit{int\text{-}weights}$ for a solver whose internal weights are
integers.

\begin{definition}[Declaration state]
\label{def:declstate}
For an \emph{optional} capability $\ell$ and a subject $S$, the declaration state is $\yes$ when the
documentation of $S$ claims $\ell$, $\no$ when it declines $\ell$, and $\silent$ otherwise. A
declaration state is a fact about documentation at a stated version. It is never inferred from a
name, from behaviour, or from the absence of some other claim. A mandatory capability has no
declaration state, because it is not offered and cannot be declined.
\end{definition}

The last sentence of Definition~\ref{def:declstate} is load bearing. A capability read out of a
solver's class name, or out of the absence of an adjacent claim, is a declaration the library never
made.

\subsection{Two of the capabilities are not a chain}
\label{sec:model:incomparable}

It is tempting to order the capabilities and let a strong declaration imply a weaker one. A decoder
that always returns a minimum-weight correction sounds like it must also correct every small error.
It need not, because minimum weight and small Hamming weight are different objectives. Weights come
from error probabilities, so one unlikely mechanism can outweigh several likely ones, and the
cheapest explanation of a syndrome can flip a logical observable that a Hamming-weight-minimal
explanation would leave alone.

\begin{theorem}[$L_1$ and $L_2$ are incomparable]
\label{thm:incomparable}
$L_1 \not\Rightarrow L_2$ and $L_2 \not\Rightarrow L_1$. Both directions have explicit constructions
over legal log-likelihood-ratio weights.
\end{theorem}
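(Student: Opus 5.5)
The plan is to give two small explicit graphlike models, one per direction, with every edge probability in $(0,1/2)$ so every log-likelihood-ratio weight $w_e = \log((1-p_e)/p_e)$ is strictly positive. Each model will carry a single logical observable. In each, a decoder is defined whose behaviour is pinned down syndrome by syndrome, and we check the two capabilities directly. $L_1$ is checked as the ordered conjunction: the returned correction explains the syndrome, and the residual is in the identity class. $L_2$ is checked by comparing the weight of the return against the minimum over all $T$-joins for that syndrome. Both models will have a handful of edges, so every check is a finite enumeration.

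\emph{For $L_2 \not\Rightarrow L_1$}, I would use a repetition-code-like path: boundary $b$, detector $u$, boundary again. There are two ways to explain the syndrome $\{u\}$. The first is a single edge $e_1 = \{u,b\}$ with probability $p_1$ small, say $10^{-4}$, so its weight is large. The second is a path of two edges $e_2 = \{u,v\}$, $e_3 = \{v,b\}$ through an auxiliary detector $v$, each with probability near $1/2$, say $0.4$, so each has small weight. Place the observable on $e_1$ only, so the two explanations differ by a logical. Take the code's correction radius to cover Hamming-weight-$1$ errors, with the radius measured in mechanisms as the paper's notion of "small error" suggests. The error $\{e_1\}$ is then within radius. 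Its syndrome $\{u\}$ has minimum-weight explanation $\{e_2,e_3\}$, since $2\log(1.5) < \log(9999)$. The exact minimum-weight decoder returns that explanation, and the residual $\{e_1,e_2,e_3\}$ is a cycle through the boundary that flips the observable. So the decoder satisfies $L_2$ and violates $L_1$. This formalises the motivating paragraph: one unlikely mechanism outweighs two likely ones.

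\emph{For $L_1 \not\Rightarrow L_2$}, I would use a decoder that is correct within the radius but not weight-optimal outside it, or on within-radius syndromes where a non-minimal explanation exists in the same class. Concretely, reuse a graph with two explanations of one syndrome, both in the same logical class but of unequal weight. An example is a detector pair $u,v$ joined by a direct edge and by a two-edge detour with no observable on either. Define the decoder to return the heavier explanation on that syndrome and a minimum-weight one elsewhere. Every correction still explains its syndrome. Every residual is a cycle with no observable, hence in the identity class, so $L_1$ holds. But the return is not minimum weight, so $L_2$ fails. This is immediate from the definitions; Lemma~\ref{lem:certificate} style reasoning, where the lighter feasible competitor exists, certifies the failure.

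\textbf{Main obstacle.} I expect the delicate step to be fixing what "correction radius" means, so that the first construction really has an in-radius error. The radius could be taken as $\lfloor (d-1)/2\rfloor$ in mechanism count. If the model's distance is $1$ because the observable sits on a single edge, the radius is $0$ and the construction collapses. So I would enlarge the construction to genuine distance $3$: a length-$3$ repetition code with boundary edges, making exactly one mechanism very improbable so that the minimum-weight path avoids it and wraps the other way through the logical. Then I would verify by enumeration that some single-mechanism error is decoded into the wrong class. I would also note that weights are legal, since all $p\in(0,1/2)$. Finally I would remark that under uniform weights the first direction fails, which is the restricted domain the section mentions.
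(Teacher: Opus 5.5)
Your $L_2 \not\Rightarrow L_1$ construction is the paper's argument specialised to $d=3$ with no off-cycle edges. It uses a minimum-support logical cycle through $b$ with one heavy edge, injects an error on that edge, and lets the minimum-weight explanation go the other way round and close the logical. It already works as first stated, and your ``main obstacle'' is misdiagnosed. Putting the observable on one edge does not make the distance $1$. In the triangle on $u$, $v$, $b$ the only nonempty edge set with empty detector boundary is the whole triangle, and it flips the observable. So the graph distance is $3$, $t=1$, and $\{e_1\}$ is in radius; the ``enlarged'' length-$3$ repetition code is literally the same model. The radius collapses only if distance is taken over full Pauli errors, where a repetition code has distance $1$ (Appendix~\ref{app:population} makes exactly this point). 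To be safe under that reading, put the same heavy-edge cycle on a surface-code matching graph, with a boundary-to-boundary logical string as the cycle. The paper's general-$d$ construction allows exactly this.

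The gap is in $L_1 \not\Rightarrow L_2$. The claim ``every residual is a cycle with no observable'' is justified only on the one syndrome you altered. Even there it needs an argument: all in-radius errors with that syndrome share one class, because two of them differ by an edge set of size at most $2t<d$ with empty detector boundary, which therefore cannot be a logical. On every other syndrome the residual is the minimum-weight return plus the injected error. Whether that lies in the identity class is precisely whether minimum-weight decoding satisfies $L_1$, and your own first construction shows it can fail. If you reuse the first graph's weights, ``minimum-weight elsewhere'' already violates $L_1$ on $\{e_1\}$. If the model is only the observable-free triangle, there is no nontrivial logical class, no distance and no radius, so $L_1$ holds only vacuously. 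To repair it, first fix Hamming-ordering (say uniform) weights on a code-capacity instance. Then a minimum-weight return $\corr$ has $|\corr| \le |e| \le t$, and every residual has size at most $2t<d$. Only then swap in a heavier same-class explanation on one syndrome, for instance $s\setminus\{f\}$ in place of $\{f\}$ for an edge $f$ of a weight-four stabiliser $s$.

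The paper sidesteps all of this. It XORs a fixed nonidentity stabiliser $s$ with $\wt{s}>0$ into \emph{every} output of an arbitrary $L_1$ decoder that returns $\emptyset$ on the empty syndrome. That preserves feasibility and logical class on every syndrome at once, so $L_1$ is inherited wholesale with no weight regime to choose. The $L_2$ failure is then read off at the empty syndrome, where the optimum is $\emptyset$ at weight $0$.
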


For $L_2 \not\Rightarrow L_1$, take a code whose logical operator is a simple cycle of length $d$ in
the graph. Give one cycle edge weight $d$, give the remaining cycle edges weight $1$, and give every
off-cycle edge a weight larger than $d-1$. A single error on the heavy cycle edge produces a syndrome
whose exact minimum-weight explanation is the rest of the cycle, at total weight $d-1$. That
explanation completes the logical cycle, so an exact solver fails on a weight-one error. For
$L_1 \not\Rightarrow L_2$, a decoder that returns a nonidentity stabiliser on the empty syndrome
corrects every error inside the radius and returns a correction of positive weight where the optimum
has weight zero. Appendix~\ref{app:proofs} gives both constructions in full.

Theorem~\ref{thm:incomparable} decides the shape of the capability model. $L_1$ and $L_2$ do not form
a chain, so a gate on $\ell$ is opened by a declaration of $\ell$ and by nothing weaker, stronger or
adjacent. No implication is claimed among the remaining capabilities, and $L_3$ is used here only as
a declaration slot.

One implication does hold, and stating it is what keeps the dispatch rule honest. Restrict attention
to a code-capacity instance whose weights order candidate corrections by physical Hamming weight.
There the objectives of $L_1$ and $L_2$ coincide on the relevant candidates, and an exact
minimum-weight decoder corrects every error within the radius, so $L_2 \Rightarrow L_1$ on that
restricted domain. The construction above escapes this because its weights are deliberately not
Hamming-ordering.

The implication does not change what counts as a conformance verdict, and this is where a framework
of this kind usually goes wrong. An adverse verdict against $\ell$ requires the subject to have
declared $\ell$ itself. Definition~\ref{def:declstate} forbids inference from names, from behaviour
and from the absence of an adjacent claim, and an adjacent declaration is exactly such an inference.

The implication licenses a narrower conclusion about one case, not a library. A subject is bound to $L_1$ on \emph{this case} when two conditions hold. The subject declares
$L_2$. The evaluator has verified the case as code capacity with Hamming-ordering weights. The bound
comes from the theorem, not the documentation. That is a standard assume-guarantee argument, and refusing it would be as unprincipled
as inferring a declaration. Definition~\ref{def:yield} therefore reports both yields, and
Section~\ref{sec:measurement:yield} puts them side by side: the gap between them says whether a zero
came from the ecosystem or from the rule.

\subsection{What a gate does with each answer}
\label{sec:model:dispatch}

A checker meets three kinds of library. Some declare the capability, some decline it, and most say
nothing. The first two are easy. The third is where a testing framework quietly
loses its meaning, because silence looks like a passed test if the checker simply skips.

\begin{definition}[Dispatch policy]
\label{def:dispatch}
A property gated at $\ell$ returns a verdict when the declaration state of the subject at $\ell$ is
$\yes$. It returns a legal skip when the state is $\no$, following ISO/IEC 9646-7
Appendix~I.4.1.3, under which an optional capability answered NO is not tested and is not a case of
non-conformance. It returns $\unassessable$ when the state is $\silent$, following ISO/IEC 9646-1
clause 5.6.1, under which a capability statement precedes conformance assessment.
\end{definition}

\begin{figure}[t]
\centering
\begin{tikzpicture}[
  font=\fontsize{7.15}{8.6}\selectfont, node distance=3mm,
  every node/.style={inner xsep=1.4mm, inner ysep=1.15mm},
  machine/.style={draw=qBlue, fill=qBlue!7, rounded corners=2pt, align=center, line width=.7pt},
  branch/.style={rounded corners=2pt, align=center, line width=.75pt, text width=22mm},
  property/.style={draw=qOrange, fill=qOrange!11, rounded corners=2pt, align=center, line width=.8pt},
  outcome/.style={rounded corners=2pt, align=center, text width=18mm, line width=.8pt},
  gate/.style={draw=black!75, fill=black!4, align=center, line width=.8pt,
               rounded corners=2pt, text width=19mm},
  gatelabel/.style={midway, above, inner sep=.4mm},
  stem/.style={draw=black!75, line width=.6pt},
  flow/.style={-{Latex[length=1.5mm]}, draw=black!80, line width=.6pt}]

\node[machine] (circ) {circuit};
\node[machine, right=3mm of circ] (dem) {model $M$};
\node[machine, right=3mm of dem] (gen) {G: generator};
\draw[flow] (circ) -- (dem);
\draw[flow] (dem) -- (gen);

\node[branch, draw=qBlue, fill=qBlue!7, above right=1.2mm and 8mm of gen.east] (ad)
  {A: adapter\\[-.4ex]subject and declaration};
\node[branch, draw=qTeal, fill=qTeal!9, below right=1.2mm and 8mm of gen.east] (orc)
  {K: true model\\[-.4ex]independent oracle};

\coordinate (gfork) at ($(gen.east)+(4mm,0)$);
\draw[stem] (gen.east) -- (gfork);
\draw[flow] (gfork) |- (ad.west);
\draw[flow] (gfork) |- (orc.west);

\coordinate (amid) at ($(ad.east)!0.5!(orc.east)$);
\node[gate, right=4mm of amid] (gate) {gate at $\ell$\\[-.4ex]declaration state};
\coordinate (pjoin) at ($(gate.west)-(2.5mm,0)$);
\draw[stem] (ad.east) -| (pjoin);
\draw[stem] (orc.east) -| (pjoin);
\draw[flow] (pjoin) -- (gate.west);

\node[property, right=6mm of gate] (prop) {C: property};
\node[outcome, draw=qTeal, fill=qTeal!9, right=4mm of prop] (verdict) {verdict};
\draw[flow] (gate) -- node[gatelabel] {\bfseries \yes} (prop);
\draw[flow] (prop) -- (verdict);

\node[outcome, draw=qOrange, fill=qOrange!11, dashed, above=3.2mm of verdict] (skip)
  {legal skip};
\node[outcome, draw=black, fill=qYellow, densely dotted, below=3.2mm of verdict] (unk)
  {\unassessable};
\draw[flow] (gate.north) |- node[inner sep=.4mm, pos=.74, anchor=south, yshift=.5mm] {\bfseries \no} (skip.west);
\draw[flow] (gate.south) |- node[inner sep=.4mm, pos=.74, anchor=north, yshift=-.5mm] {\bfseries \silent} (unk.west);
\end{tikzpicture}
\caption{The evaluation pipeline. Three things it is meant to make visible are claims in this paper:
the whole path, not the property alone, is what is frozen; the oracle sits beside the subject rather
than inside it, so the residual and the optimum are computed against the caller's model; and the
declaration state, not the subject's behaviour, decides whether a gated property runs at all.}
\label{fig:pipeline}
\end{figure}

Figure~\ref{fig:pipeline} shows where the decision sits. The three outcomes are reported in three
columns and never added together, because pooling a skip with an $\unassessable$ turns a library that
publishes nothing into a library that passes everything.

\begin{definition}[Evaluation layer]
\label{def:layer}
A \emph{conformance} result is a verdict against either a declared optional capability or a
mandatory interface capability incurred by exposing the relevant operation. An
\emph{instance-measured} result is a statement about one concrete instance, established without such
a conformance obligation, whether by an independent oracle or by a sound oracle-free witness. It
holds for that instance alone. The two layers are reported separately.

Both halves have to be stated this way, because the narrower reading excludes results this paper
reports. A residual failure is a verdict and there is no declaration behind it, since $L_0$ is
mandatory. And a certificate against a subject that declares nothing is an instance-measured result
reached with no oracle at all, which is the whole point of Lemma~\ref{lem:certificate}. Where the
older term \emph{claim-based} appears in the artifact it means this conformance layer; the
\emph{direct declaration yield} keeps its name, because that quantity really does count
documentation.
\end{definition}

One number summarises what a gate lets through, and it is the number
Section~\ref{sec:measurement:yield} reports for the panel.

\begin{definition}[Dispatch yield]
\label{def:yield}
Fix a property $P$ and a population of \emph{cells}, where a cell is one subject paired with one
case. $P$ dispatches on a cell exactly when its hypotheses hold on that case \emph{and} its gate
opens for that subject, and those two conditions are independent: the hypotheses are a property of
the case and the gate is a property of the documentation. Three yields follow, and they must be
reported separately because they answer different questions.

The \emph{direct declaration yield} counts the cells where the subject's documentation declares the
gated capability itself, and declares any flag the property requires. The \emph{derived assurance
yield} also counts a cell where the gate opens on a different declared capability together with
instance assumptions the evaluator has verified and a proved implication. The \emph{applicability
yield} counts the cases where the hypotheses hold and ignores declarations entirely.

A property gated on a \emph{mandatory} capability is marked as one wherever it is reported. No
subject can decline such a gate, so its direct column is $100\%$ by construction and is not evidence
that anybody wrote anything down. Reading it as a declaration count is a category error, which is why
the residual row of the yield table in Section~\ref{sec:measurement:yield} carries a marker.
\end{definition}

Separating the three is what keeps a zero readable. A zero applicability yield means the population is
unsuitable and a different population fixes it. A high applicability yield beside a zero direct yield
means no population fixes it, and the question becomes whether a proved implication can close the
gap, which is what the derived yield measures.

The derived yield is where compositional reasoning enters, and it holds a subject to $\ell$ on a
specific case only. The subject declared $\ell'$, the evaluator verified that this case satisfies a
theorem's hypotheses, and the theorem carries $\ell'$ to $\ell$ there. Inferring a declaration
remains forbidden by Definition~\ref{def:declstate}, and Definition~\ref{def:layer} keeps the
assume-guarantee argument apart from a documentary claim. This paper uses two such implications, both
proved: $L_2 \Rightarrow L_1$ on a
code-capacity instance whose weights order candidates by Hamming weight
(Remark~\ref{app:rem:domain}), and a verified-unique optimum forcing the returned edge set
(Corollary~\ref{app:cor:det}).

The layer distinction is what lets one measurement carry two meanings. A library that declares
exactness and returns a heavier correction than another presentation of the same instance has
produced a conformance verdict. A library that declares nothing and does the same has produced an
observation. The arithmetic is identical and the standing is not.

\section{Certificates that need no oracle}
\label{sec:certificates}

Checking a decoder usually means knowing the right answer. Not here. Both checks compare a decoder's
output against arithmetic the caller can already do, and the second compares a decoder against
itself.

\subsection{The answer has to explain the syndrome}
\label{sec:cert:residual}

The weakest thing a caller can ask is that the returned correction accounts for the detectors that
fired. That is a parity computation over the model the caller supplied, needing no guarantee from the
library and no knowledge of the true error.

\begin{definition}[Residual property]
\label{def:pb}
Gate $L_0$, which is mandatory, so the gate opens on every subject. Hypothesis: the subject exposes
$\mathrm{dec\_corr}$. Claim: for the returned index set $\corr$ and the syndrome $\syn$,
$\bd_D \corr = \syn$, with the detectors of each mechanism read in the caller's index space. A violation is a conformance failure against the interface contract of
Section~\ref{sec:model:declare}, not an instance-measured observation. This follows because $L_0$ is
incurred, not declared.
\end{definition}

The phrase about the caller's index space is the whole content of the check. A decoder that solves a
different model perfectly and reports in that model's indices produces exactly this failure, so the
caller's model is the reference and a mismatch is a defect of the interface even when the internal
solve was flawless.

\subsection{Two presentations of one instance}
\label{sec:cert:relabel}

Now the second check. A decoding instance can be handed to a library in many ways that differ only in
bookkeeping. The mechanisms can be listed in any order. The detectors can be numbered in any order.
Neither choice belongs to the physics. Both belong to whoever built the model.

\begin{definition}[Presentation]
\label{def:presentation}
Let $\graph = (D \cup \{b\}, E, w, \lambda)$ be a weighted decoding graph with virtual boundary
$b$, where $\lambda(e) \in \mathbb{F}_2^{k}$ records which of the $k$ declared observables edge $e$
flips. A \emph{relabelling} $\relab$ is a pair of bijections, one on $D \cup \{b\}$ fixing $b$ and
one on $E$, that preserves incidence, preserves the weight of every edge, and preserves its observable
label: $w(\relab e) = w(e)$ and $\lambda(\relab e) = \lambda(e)$. The graph $\relab \graph$
together with the permuted syndrome $\relab \syn$ is a \emph{presentation} of the same instance.
\end{definition}

Hand one instance to a decoder under two presentations and transport the second answer back through
$\relab^{-1}$. Both answers now live in the same index space and can be compared. The comparison that
matters is not whether they agree. It is what it means when they do not.

\begin{lemma}[Weight disagreement certifies non-minimality]
\label{lem:certificate}
Let $S_1 \subseteq E$ and $S_2' \subseteq \relab E$ satisfy $\bd_D S_1 = \syn$ and
$\bd_D S_2' = \relab \syn$, and write $S_2 = \relab^{-1}(S_2')$. If $\wt{S_1} \neq \wt{S_2}$, then the
heavier of the two is not a minimum-weight explanation of $\syn$.
\end{lemma}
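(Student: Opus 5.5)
The plan is to reduce everything to one transport fact: the relabelling $\relab$ carries feasibility and weight back unchanged. The lemma then follows because both $S_1$ and $S_2$ are feasible competitors for the same instance $\syn$ in the original graph $\graph$.

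\emph{Step one: transport preserves feasibility.} I would show $\bd_D(\relab^{-1} S) = \relab^{-1}(\bd_D S)$ for every edge set $S \subseteq \relab E$. By Definition~\ref{def:presentation}, $\relab$ is a pair of bijections preserving incidence. So an edge $e$ has endpoints $u,v$ exactly when $\relab e$ has endpoints $\relab u, \relab v$. Because the vertex bijection fixes $b$, boundary edges go to boundary edges. A detector $x$ therefore lies in an odd number of edges of $\relab^{-1}S$ exactly when $\relab x$ lies in an odd number of edges of $S$. Applying this with $S = S_2'$ and $\bd_D S_2' = \relab\syn$ gives $\bd_D S_2 = \relab^{-1}\relab\syn = \syn$. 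So $S_2$ is a feasible explanation of $\syn$ in $\graph$. This is the step where care is needed, because the parity count has to be done edge by edge. I would check that it uses only incidence preservation and $\relab(b)=b$, and nothing about weights.

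\emph{Step two: transport preserves weight.} Weight preservation, $w(\relab e) = w(e)$, gives $\wt{S_2} = \sum_{e\in S_2} w(e) = \sum_{e' \in S_2'} w(e')$. This holds because the edge bijection restricts to a bijection $S_2 \to S_2'$. So the weight the decoder reported under the second presentation equals the weight of the transported set, and comparing the two is meaningful. Strictly, the lemma's hypothesis already speaks of $\wt{S_2}$, so this step only justifies reading the comparison off the raw returns.

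\emph{Step three: the comparison.} Suppose without loss of generality that $\wt{S_1} < \wt{S_2}$. Then $S_1$ is a feasible explanation of $\syn$ with strictly smaller weight, so $S_2$ is not a minimum. Pulling this back through $\relab$, $S_2'$ is not a minimum-weight explanation of $\relab\syn$ in $\relab\graph$ either, since $\relab$ gives a weight-preserving bijection between the feasible sets of the two presentations. The symmetric case is identical. Nowhere does the argument need the optimum, its value, or its uniqueness; that is the oracle-free content of the lemma.
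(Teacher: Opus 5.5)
Your proposal is correct and follows the paper's proof. You transport the boundary through the incidence-preserving, $b$-fixing relabelling to get $\bd_D S_2 = \syn$, note that $\relab$ preserves weight so $\wt{S_2} = \wt{S_2'}$, and use the lighter feasible set as the witness that the heavier one is not minimum-weight. Your aside is also accurate: since the hypothesis is already stated for $\wt{S_2}$, the weight-transport step is not logically needed, and both you and the paper include it only to justify comparing against the second decode call's returned weight.
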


\begin{proof}
Since $\relab$ preserves incidence and fixes $b$, it carries the detector boundary along:
$\bd_D(\relab S) = \relab(\bd_D S)$ for every $S \subseteq E$. Applying this to $S_2$ gives
$\bd_D S_2 = \relab^{-1}(\bd_D S_2') = \relab^{-1}(\relab \syn) = \syn$, so $S_2$ explains $\syn$ in
the original presentation. Both $S_1$ and $S_2$ are therefore feasible for $\syn$. Since $\relab$
preserves every edge weight, $\wt{S_2} = \wt{S_2'}$, so the two totals are comparable in one weight
function. Suppose without loss of generality that $\wt{S_1} > \wt{S_2}$. Then $S_2$ is feasible for
$\syn$ and strictly cheaper than $S_1$, so $S_1$ does not attain
$\min\{\wt{S} : \bd_D S = \syn\}$.
\end{proof}

The proof is one paragraph and it uses nothing beyond feasibility. No oracle solves the instance, no
uniqueness is assumed, and no ground truth is consulted. Two decode calls and an addition produce a
statement that a specific returned correction is not minimum-weight, and the lighter answer is the
evidence.

\begin{observation}[When the certificate speaks]
\label{obs:whenfires}
The certificate fires exactly on the pairs where the two presentations return different total weight.
Suboptimality that both presentations share leaves the totals equal and produces no certificate, and
syndromes carrying two minimum-weight solutions of equal weight also produce none. What it produces on every
firing is a witness that a reader can check with the arithmetic of Lemma~\ref{lem:certificate}.
\end{observation}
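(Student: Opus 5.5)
The Observation makes three claims, and I would prove them in order. First, on every pair where the two presentations return different totals, the certificate fires, and only on such pairs. Second, shared suboptimality and equal-weight co-optimal returns produce no certificate. Third, every firing comes with a reader-checkable witness. The whole argument rests on Lemma~\ref{lem:certificate} plus the definition of the check itself.

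\emph{Firing condition.} I would define the check operationally: for returns $S_1$ and $S_2'$, first verify feasibility, i.e.\ $\bd_D S_1 = \syn$ and $\bd_D S_2' = \relab\syn$, which is the residual property of Definition~\ref{def:pb}. Then transport $S_2 = \relab^{-1}(S_2')$ and compare $\wt{S_1}$ with $\wt{S_2}$. Because $\relab$ preserves edge weights (Definition~\ref{def:presentation}), $\wt{S_2} = \wt{S_2'}$. So comparing the transported totals is the same as comparing the totals the two presentations returned. This gives the ``exactly'' in both directions.
\begin{itemize}
\item If the totals differ, Lemma~\ref{lem:certificate} applies verbatim and the check fires.
\item If the totals are equal, the firing condition is false by construction, so the check does not fire.
\end{itemize}
The feasibility hypothesis is where care is needed. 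I would state explicitly that the Observation ranges over feasible pairs; an infeasible return is routed to the residual verdict instead.

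\emph{Silent cases.} These are two instances of equal totals.
\begin{itemize}
\item If both returns are suboptimal with the same weight $W > \min$, the totals agree and nothing fires, even though each return is non-minimal. So the certificate is sound but not complete.
\item If $S_1 \neq S_2$ but both attain the minimum, the weights are equal and nothing fires. This is consistent with soundness, since neither return is non-minimal.
\end{itemize}
I would give a one-line example of the first bullet to show the gap is real. The cleanest is a decoder that ignores presentation entirely and always returns the same fixed suboptimal correction. No relabelling can then ever produce a certificate.

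\emph{Checkable witness.} On a firing, the witness is the tuple $(\graph, \syn, \relab, S_1, S_2')$. A reader re-derives the certificate with three kinds of arithmetic: parity computations for $\bd_D$, a check that $\relab$ is an incidence-, weight- and label-preserving bijection fixing $b$, and two weight summations. These are exactly the steps in the proof of Lemma~\ref{lem:certificate}, and none of them requires solving the instance.

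I expect the main obstacle to be definitional rather than mathematical. The delicate points are pinning down what ``the certificate'' is as a procedure, including the feasibility precondition, and handling floating-point weight equality. For the latter, I would treat weights as exact rationals as written in the model, or else state a tolerance convention, so that ``different total weight'' is a decidable predicate.
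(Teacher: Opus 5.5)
Your proposal is correct and follows the paper's own argument: the Observation is the firing condition read against Lemma~\ref{lem:certificate}, with weight preservation giving $\wt{S_2} = \wt{S_2'}$, restricted to feasible pairs exactly as Remark~\ref{app:rem:notneeded} states, and the witness is the bundle that the eleven-check verifier of Appendix~\ref{app:schema} re-derives. Your readings of the two silent cases (shared suboptimality at equal weight, and both returns attaining a tied minimum) are the ones the paper intends; just phrase your illustrative decoder as presentation-equivariant, returning the image under $\relab$ of a fixed suboptimal set, because a literally fixed index set would generally fail feasibility under the second presentation.
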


\subsection{The running example, certified}
\label{sec:cert:example}

Section~\ref{sec:bg:example} left two answers on the table. Both explain the syndrome
$\syn = \{0,1,2,5\}$ of the distance-$3$ unrotated surface code, one with three mechanisms at total
weight $13.785$ and one with five at total weight $22.976$. They came from one decoder, belief
propagation with ordered-statistics post-processing \cite{panteleev2021degenerate,roffe2020decoding},
called twice on the same instance under two orderings of the mechanism list.

The relabelling here permutes the mechanism indices and fixes every detector, so it preserves
incidence and every weight, and Definition~\ref{def:presentation} is satisfied. Transporting the
second answer back through $\relab^{-1}$ puts both in the caller's index space, where the first is
$\{0, 6, 10\}$ and the second is $\{1, 3, 6, 7, 10\}$. Both have detector boundary exactly $\syn$.
The totals differ. By Lemma~\ref{lem:certificate} the five-mechanism answer is not a minimum-weight
explanation of $\syn$, and the three-mechanism answer is the proof of it.

\begin{figure}[t]
\centering
\begin{tikzpicture}[
  font=\fontsize{7.15}{8.6}\selectfont, node distance=3mm,
  every node/.style={inner xsep=1.4mm, inner ysep=1.15mm},
  instance/.style={draw=black!70, fill=black!3, rounded corners=2pt, align=center, text width=20mm},
  presentation/.style={rounded corners=2pt, align=center, line width=.75pt, text width=25mm},
  answer/.style={rounded corners=2pt, align=center, line width=.8pt, text width=34mm},
  certificate/.style={draw=qTeal, fill=qTeal!9, rounded corners=2pt, align=center, line width=1pt,
                      text width=31mm},
  oplabel/.style={midway, above, font=\fontsize{7.15}{8.6}\selectfont, inner sep=.5mm},
  stem/.style={draw=black!75, line width=.6pt},
  flow/.style={-{Latex[length=1.5mm]}, draw=black!80, line width=.6pt}]

\node[presentation, draw=qBlue, fill=qBlue!7] (pa)
  {\textbf{presentation A}\\[-.4ex]the caller's own ordering};
\node[presentation, draw=qBlue, fill=qBlue!7, densely dashed, below=5mm of pa] (pb)
  {\textbf{presentation B}\\[-.4ex]reordered by a relabelling};

\coordinate (pmid) at ($(pa.west)!0.5!(pb.west)$);
\node[instance, left=5mm of pmid] (inst)
  {\textbf{one instance}\\[-.4ex]syndrome $\{0,1,2,5\}$};
\coordinate (ifork) at ($(inst.east)+(3mm,0)$);
\draw[stem] (inst.east) -- (ifork);
\draw[flow] (ifork) |- (pa.west);
\draw[flow] (ifork) |- (pb.west);

\node[answer, draw=qTeal, fill=qTeal!9, right=13mm of pa] (ra)
  {\textbf{lighter answer}\\[.15ex]correction $\{0,6,10\}$\\[.15ex]
   3 mechanisms, total weight $13.785$};
\node[answer, draw=qOrange, fill=qOrange!11, line width=1pt, right=13mm of pb] (rb)
  {\textbf{heavier answer}\\[.15ex]mapped back: $\{1,3,6,7,10\}$\\[.15ex]
   5 mechanisms, total weight $22.976$};
\draw[flow] (pa.east) -- node[oplabel] {decode} (ra.west);
\draw[flow] (pb.east) -- node[oplabel] {decode} (rb.west);

\coordinate (rmid) at ($(ra.east)!0.5!(rb.east)$);
\node[certificate, right=5mm of rmid] (out)
  {\textbf{certificate}\\[.15ex]
   both explain the syndrome, weights differ\\[.15ex]
   the heavier answer is \textbf{not} minimum-weight\\[.15ex]
   the lighter answer is the proof};
\coordinate (rjoin) at ($(out.west)-(3mm,0)$);
\draw[stem, draw=qTeal] (ra.east) -| (rjoin);
\draw[stem, draw=qOrange] (rb.east) -| (rjoin);
\draw[flow] (rjoin) -- (out.west);
\end{tikzpicture}
\caption{The certificate on the running example of Section~\ref{sec:bg:example}: one instance, two
presentations differing only in the order the mechanisms are listed, and two feasible answers of
different total weight. Since $13.785 < 22.976$ and both explain the syndrome, the heavier answer
is not minimum-weight and the lighter one is the proof. No oracle appears anywhere in the diagram.}
\label{fig:certificate}
\end{figure}

Figure~\ref{fig:certificate} draws the whole argument. Nothing in it solved the instance. No exact
matching was run, no ground-truth error was consulted, and the argument would read the same if the
true minimum were two mechanisms or three. What the caller learns is that on one of the two
orderings, an ordering it did not choose deliberately, the library returned an answer $67\%$ heavier
than one the library itself produced. Observation~\ref{obs:whenfires} says where that earns its
place: an exact oracle answers a harder question and costs accordingly, and Section~\ref{sec:cost}
measures how much the certificate's independence from any solver is worth.

\subsection{Reading a firing correctly}
\label{sec:cert:conditions}

A weight disagreement is arithmetic, and turning one into a statement about a library requires the
arithmetic to be about that library. Several ways of getting it wrong have nothing to do with the
decoder. The conditions for reading a firing as a verdict were fixed before the search of
Section~\ref{sec:hunt} ran, and Appendix~\ref{app:conditions} lists all of them. Four shape the
experiment itself.

\emph{The comparison is in exact integers.} Weights enter the libraries as exact even integers drawn
from a broad range, every total is accumulated exactly, and a firing must differ by at least two
units in the solver's own domain. A reader suspecting a floating-point tie-break is asking the right
question, and integer arithmetic answers it in advance.

\emph{The two presentations are verified to be one problem.} Before either decode call the permuted
graph is checked edge by edge against the image of the canonical one, for the same incidence,
boundary status and weight. What is checked is the object handed to the library, after any adaptation
layer, and the inverse map is applied and compared as well.

\emph{The graph is simple.} One edge per endpoint pair, asserted at construction. Parallel edges of
different weight would make a returned endpoint pair ambiguous, and merge policies would then decide
the answer instead of the solver.

\emph{The declaration has to cover the configuration.} A verdict against exactness requires the
subject to declare exactness for the operation actually called. A library that reduces a different
problem to matching and then calls an exact matcher is answering a question of its own, and a
disagreement there is a statement about the reduction.

\subsection{Where the properties sit}
\label{sec:cert:catalogue}

Table~\ref{tab:properties} lists the properties this paper measures with their gates and hypotheses.
Three need no declaration and no oracle. Three need both, and one of those appears in
Section~\ref{sec:accessmap} as the strongest established alternative to the certificate. The
certificate itself has no gate, deliberately. A weight disagreement is arithmetic and is always a
measurement, and it is an adverse verdict only against a subject that declared exactness.

\begin{table}[t]
\caption{Properties, their gates, and what each one needs. \emph{Oracle} is two different things in
this paper and the columns keep them apart: a \emph{known injected ground truth}, which
bounded-distance correctness needs because its claim is about the error that was injected, and an
\emph{optimisation solver}, which exact optimality and equivariance need because their claims are
about a minimum. Only the latter is what Section~\ref{sec:cost} prices, and only the latter is what
\emph{oracle-free} refers to throughout. The gate names the capability a subject must declare before
the property yields an adverse conformance verdict, except for $L_0$, which is mandatory: it is
incurred by exposing the interface and is never declared. A property with no gate still returns a
measurement, and Definition~\ref{def:layer} keeps that measurement apart from a verdict.}
\label{tab:properties}
\small
\setlength{\tabcolsep}{4pt}
\begin{tabular}{L{0.20\linewidth}L{0.11\linewidth}ccL{0.40\linewidth}}
\toprule
& & injected & optimisation & \\
property & gate & ground truth & solver & what it claims \\
\midrule
residual (Def.~\ref{def:pb}) & $L_0$, mandatory & no & no & the correction explains the syndrome \\
purity & $\mathit{det}$ & no & no & the answer is a function of the syndrome \\
relabelling certificate (Lem.~\ref{lem:certificate}) & none & no & no & a heavier answer is not minimal \\
bounded distance & $L_1$ & \textbf{yes} & no & the correction explains $\syn$ \emph{and} the residual is identity \\
exact optimality & $L_2$ & no & \textbf{yes} & the returned weight equals the optimum \\
matching-core equivariance & $L_2$ & no & \textbf{yes} & a unique optimum is returned permuted \\
\bottomrule
\end{tabular}
\end{table}

The purity property is the one line of Table~\ref{tab:properties} we have not yet described. It asks
whether decoding a syndrome after other calls returns what decoding it on a fresh instance returns. A
caller would like to assume it, but wanting a guarantee is not the same as being owed one, so it
yields a conformance verdict only against a declared determinism guarantee, is withheld from a subject
that declares a randomised schedule, and elsewhere is an instance-measured observation.

\section{Relations and the population they run on}
\label{sec:relations}

Section~\ref{sec:certificates} used one relabelling. A decoding instance admits several other changes
of presentation, and they are not all of one kind. R1 to R4 are bookkeeping and leave the model
itself untouched. R5 rewrites the model while preserving the detector and observable distribution it
induces. R6 preserves neither. Scaling every weight rescales the channel probabilities encoded by the model. It preserves the
minimum-weight argmin. The test therefore checks an exactness claim, not a distributional claim. This section names the six, says which subjects each can be asked
of, and fixes the population.

\subsection{Six ways to present one instance}
\label{sec:rel:six}

Each relation transforms a presentation and states what the answer must satisfy afterwards. The
transformations come from the degrees of freedom the interface has, and Table~\ref{tab:relations}
lists them.

\begin{table}[t]
\caption{The six relations. Each row transforms the presentation of one instance and states the
invariant the answer must satisfy. $\relab$ is a relabelling of Definition~\ref{def:presentation},
$\corr$ is the returned correction and $\obs$ the returned observable vector of length $k$.}
\label{tab:relations}
\small
\begin{tabular}{L{0.15\linewidth}L{0.35\linewidth}L{0.41\linewidth}}
\toprule
name & transformation & invariant required of the answer \\
\midrule
R1 mechanism order & permute the listing order of the mechanisms
  & $\relab^{-1}(\corr')$ explains $\syn$ and carries the same weight as $\corr$ \\
R2 detector labels & permute the detector numbering, and $\syn$ with it
  & $\corr' = \corr$ as an edge set, and $\obs' = \obs$ \\
R3 observable width & declare one further observable that no mechanism flips
  & $\obs'$ has length $k{+}1$, agrees with $\obs$ on the first $k$ entries, and is zero on the last \\
R3b observable offset & shift every observable bit up by one, leaving bit $0$ unused
  & $\obs'$ is zero on bit $0$ and agrees with $\obs$ above it \\
R4 call order & decode from a fresh subject, after other calls, and twice in a row
  & the same $(\corr, \obs)$ in every case \\
R5 mechanism split & replace each mechanism by two parallel ones whose probabilities combine to it
  & \emph{where the split leaves the optimum where it was}, the correction still explains $\syn$ and
  lands in the same observable class \\
R6 weight scale & multiply every weight by $\gamma > 0$
  & the chosen correction is unchanged, since scaling cannot move an argmin \\
\bottomrule
\end{tabular}
\end{table}

R1 and R2 are the two index spaces the interface has. R3 and R3b are the width of the declared
observable set. R4 is purity. R5 is compilation provenance, and it matters because Stim emits parallel
mechanisms whenever two circuit-level error sources produce the same detector and observable pattern.
R6 is numeric representation, and it is the relation the $\mathit{int\text{-}weights}$ flag exists to
predict.

R5 carries the hypothesis its row states, and it is easy to assume the relation is unconditional.
Splitting a mechanism of probability $p$ into two parallel mechanisms of probabilities $p_1$ and
$p_2$ requires
$p = p_1 + p_2 - 2 p_1 p_2$, and for $p_1, p_2 \in (0, 1/2)$ this gives
$p - p_1 = p_2(1 - 2p_1) > 0$ and likewise $p > p_2$. Since the weight
$\log\frac{1-p}{p}$ decreases in $p$, \emph{both} replacements are strictly heavier than the mechanism
they replace. A cheapest explanation that ran through the original therefore becomes more expensive,
and on some syndromes the minimum moves to a different correction, possibly in a different logical
class. R5 preserves the detector and observable distribution the model induces, but it does not preserve
the optimisation problem.

So the harness decides faithfulness per cell, not by assumption. A syndrome is excluded, not scored,
when the split moves the optimum. Of $38{,}340$ judged R5 results, $12{,}537$ are
excluded on that ground and counted as exclusions, which is the same discipline
Definition~\ref{app:def:scale} applies to R6 at the ends of its $\gamma$ ladder. An excluded cell is
never a pass.

That discipline has a price, and it separates R5 from every other relation here. Faithfulness for R1
and R2 is decided structurally. A relabelling that preserves every weight preserves the feasible
family and the objective, and checking that costs a pass over the edge table. Faithfulness for R6 is
decided the same way, by inspecting the scaled weights. R5 admits no such test. Whether the split
moved the optimum is a question about the optimum, and the only way we know to answer it is to solve
the split model exactly and check that its solution carries the base observable parity. R5 therefore
costs one exact solve per judged syndrome, and it is the only relation in this paper that does.

We account for this consequence explicitly. R5 belongs to the measurement instrument of
Section~\ref{sec:measurement}, where an exact oracle is deliberate ground truth and its cost is
accounted for. It is \emph{not} part of the deployable contract whose cost Section~\ref{sec:cost}
reports and whose detection Section~\ref{sec:accessmap} measures. That contract is R1, R2, R3, R3b,
R4 and R6, every relation whose precondition a caller can settle from the model in front of it. A
version of R5 without the faithfulness test would be oracle-free and unsound, reporting a legitimate
re-optimisation as a violation, and an unsound detection is not a cheaper one.

Cross-decoder disagreement is not an oracle here and is never used as one. Two decoders that differ
tell a reader which two decoders differ. Every verdict in this paper comes from the caller's own model
by the arithmetic of Section~\ref{sec:certificates}, or from an exact solve of that model.

\subsection{A relation runs where its quantity exists}
\label{sec:rel:gating}

Table~\ref{tab:relations} asks about two different returned objects, and the panel splits on which
one a library produces. All nine subjects expose a correction in the caller's index space, which is
why the relations over corrections are dispatched to all of them. Only two also compute a predicted
observable vector inside the library. For the other seven the caller computes observables from the
correction, so a relation over the observable vector would measure our own arithmetic and report it
as a property of the library.

\begin{definition}[Applicability]
\label{def:applicability}
R1, R2 and R5 are dispatched to every subject, since every subject exposes a correction in the
caller's index space. R3 and R3b are dispatched only to subjects that compute the observable vector
themselves, and are recorded as \textsc{not-applicable} for the rest. R4 yields a conformance verdict only against a
subject that declares $\mathit{det}$, is a legal skip for a subject that declares $\mathit{rand}$,
and for a subject silent on both is dispatched on the instance-measured layer only, where it records
an observation and can never produce an adverse verdict. Reading silence as ``not randomised'' would
be an inference from the absence of a claim, which is what Definition~\ref{def:declstate} forbids. R6 is dispatched to every subject and the
result is reported separately for those that carry $\mathit{int\text{-}weights}$, declared or
measured. The flag gates nothing, so its provenance changes how the split is described and not which
subjects fall on which side.
\end{definition}

A \textsc{not-applicable} entry sits in its own column and is never added to a pass, for the reason
Definition~\ref{def:dispatch} keeps $\unassessable$ apart from a legal skip. The split is itself a
measured property of the panel, and Section~\ref{sec:accessmap} returns to it.

\subsection{The population}
\label{sec:rel:population}

Table~\ref{tab:instances} lists the instances. Every syndrome of every one of them is tested, so
there is no seed, no sample and no sampling error to report, and the scope of a claim is exactly the
population in the table.

\begin{table}[t]
\caption{The instance population. Built at code capacity with $p = 0.01$, so syndrome extraction is
noiseless. Every syndrome of every instance is tested. Bounded-distance correctness has further
hypotheses that these instances do not all meet, and Appendix~\ref{app:population} says which. The
$d$ column is the nominal code distance; for the repetition rows it is the protected-sector distance,
and the full-Pauli distance those instances certify is $1$.}
\label{tab:instances}
\small
\begin{tabular}{llrrrr}
\toprule
id & family & $d$ & detectors & edges & syndromes tested \\
\midrule
I1 & repetition & 3 & 2 & 3 & all 4 \\
I2 & repetition & 5 & 4 & 5 & all 16 \\
I3 & repetition & 7 & 6 & 7 & all 64 \\
I4 & rotated surface & 3 & 4 & 7 & all 16 \\
I5 & unrotated surface & 3 & 6 & 13 & all 64 \\
I6 & rotated surface & 5 & 12 & 21 & all 4096 \\
\midrule
\multicolumn{4}{l}{total} & & 4260 \\
\bottomrule
\end{tabular}
\end{table}

Of those 4260 syndromes, 2867 have a minimum-weight solution that an exact solver proves unique, and
none was left undecided by the solver. Invariants that compare chosen corrections are counted over
those 2867, since two distinct optima of equal weight leave a decoder free to return either.

R3 and R3b run on a second family built for the purpose, in which mechanism $k$ flips detector $k$ and
observable $k$. Every syndrome there has exactly one explanation, so no tie-breaking is involved, and
the declared observable count is the only quantity that varies. We use the counts
$k \in \{1, 2, 8, 30, 31, 32, 33, 62, 63, 64, 65\}$, which bracket the $32$-bit and $64$-bit machine
word boundaries on both sides.

\subsection{The panel}
\label{sec:rel:panel}

Table~\ref{tab:panel} lists the subjects, what each declares, and which return path it uses.
Appendix~\ref{app:audit} carries the sentence behind every declaration and the terms searched behind
every silence.

\begin{table}[t]
\caption{The panel: nine configurations of five public libraries. Declarations are documentary facts
at the pinned version, in the vocabulary of Definition~\ref{def:declstate}. \textsc{pmc} is configured for correlated decoding and is
behaviourally identical to \textsc{pm}, for the reason given in Section~\ref{sec:measurement:corr}.}
\label{tab:panel}
\small
\begin{tabular}{lllll}
\toprule
id & library and configuration & $L_1$ & $L_2$ & returns \\
\midrule
\textsc{pm} & pymatching 2.4.0, sparse blossom & \silent & \yes & observables \\
\textsc{pmc} & pymatching 2.4.0, correlations enabled & \silent & \silent & observables \\
\textsc{fb} & fusion-blossom 0.2.13 & \silent & \yes & correction \\
\textsc{ldpc} & ldpc 2.4.1, BP with OSD-0 & \silent & \silent & correction \\
\textsc{ldpc-r} & ldpc 2.4.1, randomised serial schedule & \silent & \silent & correction \\
\textsc{mwpf-j} & mwpf, serial joint-single-hair & \silent & \silent & correction \\
\textsc{mwpf-s} & mwpf, serial single-hair & \silent & \silent & correction \\
\textsc{mwpf-u} & mwpf, serial union-find & \silent & \silent & correction \\
\textsc{tess} & tesseract-decoder & \silent & \silent & correction \\
\bottomrule
\end{tabular}
\end{table}

Two libraries declare exact minimum-weight optimisation, PyMatching \cite{higgott2022pymatching}
and fusion-blossom \cite{fusionblossom}, and those two declarations are what Section~\ref{sec:hunt}
puts under load.

\section{What the panel declares, and what that costs}
\label{sec:measurement}

This section reports the paper's main finding. It measures the audited panel, not any one library on it. It also says what a capability-gated
checker can reach there.

\subsection{Nobody declares bounded-distance correctness}
\label{sec:measurement:decl}

We read the documentation of every subject in Table~\ref{tab:panel} at its pinned version and
recorded a declaration state for each capability, in the vocabulary of
Definition~\ref{def:declstate}. The result is Table~\ref{tab:declarations}. Appendix~\ref{app:audit}
carries the audit itself, with the sentence behind every $\yes$ and the terms searched behind every
$\silent$.

\begin{table}[t]
\caption{Declaration states across the seven declarations on the panel. A state describes the
documentation at the pinned version and never the implementation. The seven declarations cover six
behaviourally distinct configurations, since \textsc{pmc} computes what \textsc{pm} computes.}
\label{tab:declarations}
\small
\begin{tabular}{lrrr}
\toprule
capability & \yes & \no & \silent \\
\midrule
$L_1$ bounded-distance correctness & \textbf{0/7} & 0/7 & \textbf{7/7} \\
$L_2$ exact minimum-weight optimisation & 2/7 & 0/7 & 5/7 \\
$L_3$ exact maximum-likelihood decoding & 0/7 & 0/7 & 7/7 \\
\bottomrule
\end{tabular}
\end{table}

Two numbers in Table~\ref{tab:declarations} carry the section. $L_1$ is $\silent$ seven times out of seven: bounded-distance
correctness is what a caller most directly depends on, and nobody claims it in writing. And $\no$ is
empty at every capability, so the safe harbour ISO/IEC 9646-7 Appendix~I.4.1.3 grants an explicit NO
applies nowhere here. These are not libraries stating their limits but libraries saying nothing, and
\textsc{silent} is never evidence that a capability is absent or withheld.

A $\no$ requires a sentence declining the capability. A paraphrase does not supply one, and asserting
a NO from an unquoted paraphrase is the same inference as reading a YES from a class name.

\subsection{What silence costs a checker}
\label{sec:measurement:yield}

Definition~\ref{def:dispatch} turns silence into $\unassessable$. Table~\ref{tab:yield} measures what
that does to a capability-gated checker over 9 subjects and 791 cases, so 7119 cells per property. The
cases come from a grid of 29 instance configurations spanning both code families at $d = 3, 5, 7$,
both weight regimes, both noise regimes and thirteen declared-observable strata, at up to 30 cases
each. It is broader and shallower than the exhaustive population of Table~\ref{tab:instances}, and it
is used only here.

\begin{table}[t]
\caption{The three dispatch yields of Definition~\ref{def:yield}, over 9 subjects $\times$ 791 cases
$=$ 7119 cells. \emph{Direct} counts cells where the documentation declares the gated capability, and
any required flag, itself. \emph{Derived} also counts cells where a proved implication carries a
declared capability to the gated one on a case the evaluator has checked. \emph{Applicable} counts
cases where the hypotheses hold, ignoring declarations, and so has 791 as its denominator.
$\ast$ marks a \emph{mandatory} gate that is incurred by exposing the interface, not declared. Its
$100\%$ is therefore a property of the gate, not a count of declarations. Each row carries its own
gate, its own hypotheses and its own denominator, and is read on its own terms.}
\label{tab:yield}
\small
\begin{tabular}{lllrrr}
\toprule
property & gate & needs & direct & derived & applicable \\
\midrule
residual & $L_0$\textsuperscript{$\ast$} & --- & 100.0\% & 100.0\% & 100.0\% \\
exact optimality & $L_2$ & --- & 22.2\% & 22.2\% & 100.0\% \\
matching-core equivariance & $L_2$ & --- & 20.5\% & 20.5\% & 92.2\% \\
strict equivariance & $L_2$ & --- & 1.7\% & 1.7\% & 7.5\% \\
\textbf{bounded-distance correctness} & $L_1$ & --- & \textbf{0.0\%} & 13.8\% & \textbf{62.1\%} \\
\bottomrule
\end{tabular}
\end{table}

\begin{figure}[t]
\centering
\includegraphics[width=0.85\linewidth]{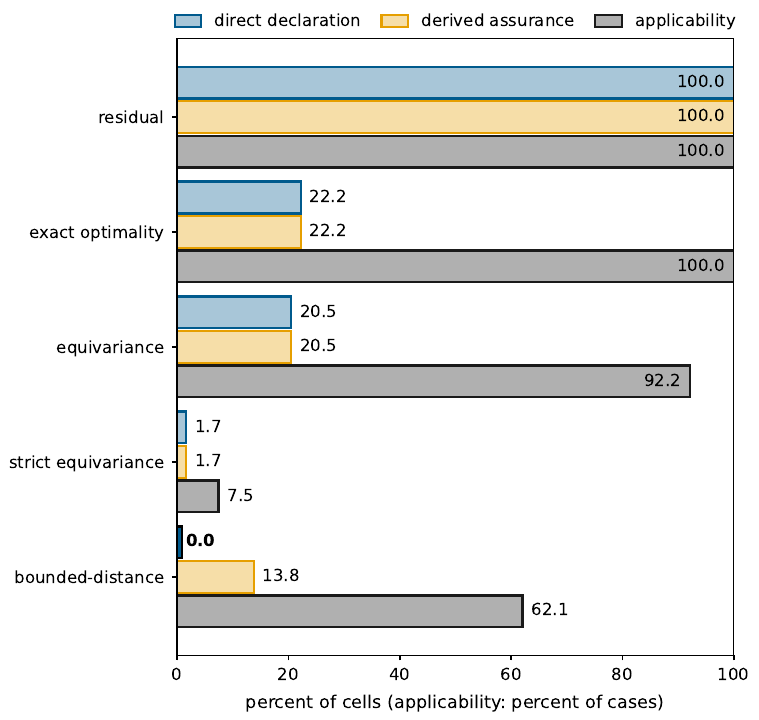}
\caption{The three yields of Table~\ref{tab:yield}, drawn from the artifact that table reads. Direct declaration and derived assurance are percentages of the $7119$ subject-by-case cells.
Applicability is a percentage of the $791$ cases. It depends on the population, not on any
documentation.
The row to read is bounded-distance correctness, at the foot of the chart. It is arithmetically
available on
$62.1\%$ of cases and declared by nobody, and the distance from its middle bar to its lower one is
what the panel's silence costs a caller.}
\label{fig:yield}
\end{figure}

The bounded-distance row is the paper's main finding and the three columns say why, and
Figure~\ref{fig:yield} draws the distance between them. Its hypotheses
hold on $62.1\%$ of cases, so the arithmetic is available on nearly two thirds of the population.
Its direct declaration yield is $0.0\%$ uniformly. Every one of the nine subjects contributes $0$ of
its $491$ applicable cells. A conformance regime reaches exactly as far as a library's willingness to
write its guarantee down, and here that reach is zero.

The obvious objection is that the zero was manufactured by an over-strict dispatch rule, and the
derived column answers it. Switch compositional reasoning on --- let a declared $L_2$ carry to $L_1$
on the code-capacity cases whose weights order candidates by Hamming weight, which
Remark~\ref{app:rem:domain} proves --- and the yield moves from $0.0\%$ to $13.8\%$. It moves that far
and no further, because the entire gain comes from the two subjects that declare $L_2$. The other
seven declare nothing that implies anything, so no theorem can help them. The deficit is therefore a
property of what the panel publishes and not of how strictly we read it.

The two equivariance rows gate on $L_2$ alone, and a determinism flag is not among their
requirements. Both properties assume a \emph{verified-unique} optimum, and
Corollary~\ref{app:cor:det} proves that a unique optimum forces the answer, so determinism follows
from the hypotheses instead of being one. Nobody on the panel declares determinism, so demanding it
would have put both rows at $0.0\%$ for a reason unrelated to what the properties test. The rows read
$20.5\%$ and $1.7\%$, which is $2/9$ of their applicability.

Exact optimality is the control that shows the machinery is not inflating anything. Its hypotheses
hold everywhere, two of nine subjects declare $L_2$, and its direct yield is $22.2\%$, which is
exactly $2/9$. Nothing in the gate, the implication rule or the population moves it off that value.

The residual property is the one that asks for nothing, because $L_0$ is mandatory and cannot be
declined. It dispatches on every cell. The rest of this section can therefore measure all nine subjects at
once. Its $100\%$ reflects the mandatory gate, not a declaration.

Strict equivariance is a negative result about our own contract, not about any library. It
requires an automorphism preserving the complete weighted model including the observable masks. At
code capacity no non-trivial such automorphism exists on either family, at any distance, under either
weight regime, so it is dispatched $0$ times out of $1200$. At circuit level on the rotated surface
code they do exist, and it is dispatched $4320$ times. It never fires. Inference from strict
equivariance is defined on populations that contain a reachable positive control, and this population
contains none.

\subsection{Corrections come back in the caller's index space}
\label{sec:measurement:resid}

We ran every relation of Table~\ref{tab:relations} against every subject its gate opens for, over the
whole population of Table~\ref{tab:instances}, and the two observable relations over the second
family built for them in Section~\ref{sec:rel:population}. That is $717{,}244$ judged results.

\begin{observation}[No malformed or syndrome-infeasible correction was observed]
\label{obs:resid}
Across all $717{,}244$ judged results, every returned correction explained its syndrome in the
caller's index space. No return contained an out-of-range index and none contained a repeated index.
\end{observation}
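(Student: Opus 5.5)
Observation~\ref{obs:resid} is an empirical statement over a finite, fully enumerated population, so its justification is an exhaustive computation rather than a derivation. The plan is to establish it by running a checker that is independent of every subject and reads only the caller's model, and to make the checker's soundness the part that carries the argument. For each judged result we have the caller's flattened model $M$, the syndrome $\syn$ handed over (transported back through $\relab^{-1}$ where a relation permuted it), and the returned index list $\corr$. We perform three checks in order. First, every index satisfies $0 \le i < |E(M)|$. Second, no index repeats, which is a sort-and-compare or a set-size comparison. Third, we compute $\bd_D \corr$ by XOR-accumulating, over $i \in \corr$, the detector set of mechanism $i$ as listed in $M$, and we test equality with $\syn$. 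This order mirrors Definition~\ref{def:pb}: the syntactic checks must pass before the parity check is meaningful, because an out-of-range index has no detector set to read.

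Coverage comes next. We enumerate the judged results as the product of relations, subjects and syndromes. The relations are those of Table~\ref{tab:relations}. Each subject is included exactly where Definition~\ref{def:applicability} dispatches it. The syndromes are all $4260$ of Table~\ref{tab:instances}, plus the observable-width family of Section~\ref{sec:rel:population}. We check that the count reaches $717{,}244$ once \textsc{not-applicable} cells and the R5 exclusions are accounted for separately. Two points need care here. An excluded R5 cell is still a returned correction, so its residual is still checked: exclusion concerns optimum movement, not $L_0$. And for R1 and R2, the correction must be checked against the presentation actually given to the library. By the identity $\bd_D(\relab S) = \relab(\bd_D S)$ from the proof of Lemma~\ref{lem:certificate}, checking in either index space is then equivalent.

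The main obstacle is the validity of the reference. The check must read detectors in the caller's index space after any adapter has run, and must not read them from an adapter-side model. If it used the adapter's model, it would vacuously accept exactly the failure that Section~\ref{sec:cert:residual} warns about: a correct solve of the wrong model. To rule this out, we first run positive controls. These are injected corrections that are out of range, duplicated, or shifted by a permutation of the index space, and the checker must flag every one of them. We then report the null only after those controls fire. With that in place, the observation is the statement that the checker returned zero failures on every enumerated cell.
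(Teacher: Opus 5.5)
Your plan is essentially how the paper supports this observation: the residual property is evaluated on every judged result, malformed returns are rejected before $\bd_D \corr = \syn$ is read against the caller's model in a judging process that imports no decoder library, and the RESID column of Table~\ref{tab:panelfull} is zero in every cell. Note that the reported R5 and R6 totals, $38{,}340$ and $460{,}080$, already contain their excluded cells, so nothing is subtracted on the way to $717{,}244$. The one step to fix is your control gate, because a permuted correction need not be infeasible: every permutation fixes the empty correction on the empty syndrome, and the contract, residual property included, misses four O1 permutation behaviours in Table~\ref{tab:detectionprimary}. So demand a firing only from controls that are infeasible by construction, such as out-of-range or repeated indices, or a single edge dropped or added, which moves the boundary because every mechanism flips at least one detector.
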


On this population the interface obligation of Section~\ref{sec:cert:residual} is met by every
library we can reach, in every presentation we hand them. The property evaluates boundary
preservation under the caller's interpretation of the returned indices, and the observation
establishes that every return satisfies it over the complete tested syndrome space.

\subsection{Which presentation you choose can decide whether an approximate decoder is optimal}
\label{sec:measurement:presentation}

The two exactness-declaring subjects returned the unique minimum-weight correction under every
presentation, on all $2867$ syndromes whose optimum is unique. So did \textsc{tess}, which declares
nothing. The approximate solvers behaved differently, and \textsc{mwpf-u} shows the effect most
clearly. On $26$ of those $2867$ syndromes it returned the unique optimum under one detector numbering
and a suboptimal correction under another. Here is one, on the rotated surface code at $d=5$:

\begin{quote}\small
syndrome $011111101000$. Under the model's own detector numbering the answer is edges
$\{3,4,10,12\}$ at weight $17.687434$, which is the unique optimum. Under a relabelling it is
$\{3,5,6,7,12\}$ at weight $22.282554$. Both explain the syndrome.
\end{quote}

A logical-error-rate benchmark runs at one fixed detector numbering, so it never varies the quantity
that moved here. Run under a second numbering it would catch this particular syndrome, whose two
answers sit in different logical classes, and would still miss any change of correction inside one
class. The numbering is a convention of whoever built the model, not a property of the experiment.

\subsection{The effect survives replication over independent relabellings}
\label{sec:measurement:replication}

Exhausting the syndrome space replicates over syndromes. It says nothing about the one arbitrary
choice we made, which is \emph{which} relabelling to apply. So we drew $20$ independent relabellings
per instance per relation and repeated the whole measurement.

\begin{table}[t]
\caption{Changed answers on unique-optimum syndromes, over 20 independent relabellings per relation.
Ranges are over draws. The four subjects at the top produced no change and no certificate in any of
their 40 draws.}
\label{tab:replication}
\small
\begin{tabular}{lrrrr}
\toprule
subject & R1 median (range) & R2 median (range) & R1 certificates & R2 certificates \\
\midrule
\textsc{pm}      & 0 (0--0)   & 0 (0--0)     & 0--0   & 0--0 \\
\textsc{pmc}     & 0 (0--0)   & 0 (0--0)     & 0--0   & 0--0 \\
\textsc{fb}      & 0 (0--0)   & 0 (0--0)     & 0--0   & 0--0 \\
\textsc{tess}    & 0 (0--0)   & 0 (0--0)     & 0--0   & 0--0 \\
\textsc{mwpf-j}  & 2 (1--4)   & 2 (1--5)     & 2--12  & 5--16 \\
\textsc{mwpf-s}  & 3.5 (1--6) & 5 (2--8)     & 5--30  & 13--38 \\
\textsc{mwpf-u}  & 20 (3--43) & 104.5 (69--160) & 14--42 & 55--108 \\
\textsc{ldpc}    & 12 (6--17) & 0 (0--0)     & 160--224 & 0--3 \\
\textsc{ldpc-r}  & 79 (52--135) & 0 (0--0)   & 154--293 & 0--1 \\
\bottomrule
\end{tabular}
\end{table}

The draws are paired. A relabelling is drawn per instance and per relation, not per subject, so every
subject receives the same one. Comparing marginal ranges throws that pairing away and a paired sign
test over the $20$ draws uses it. The test gives a directional tendency among the three mwpf solvers,
reported as post hoc since the paired analysis was added after the data were in hand. Single-hair
exceeds joint-single-hair
on $16$ of $20$ draws under R1 with four ties and no reversals, and on all $20$ under R2. Union-find
exceeds single-hair on $18$ of $20$ under R1 and on all $20$ under R2. Across the six comparisons the
$p$ values run from $1.91 \times 10^{-6}$ to $4.02 \times 10^{-4}$, and all six survive a Bonferroni
correction over the $111$ comparisons the checker performs, whose threshold is
$4.50 \times 10^{-4}$. What this supports is the tendency $\textsc{mwpf-j} < \textsc{mwpf-s} <
\textsc{mwpf-u}$ and not pointwise dominance: single-hair exceeds union-find on $2$ of the $20$ R1
draws.

The four zeros at the top of Table~\ref{tab:replication} need a precise reading. Those subjects
produced no changed answer on a unique-optimum syndrome and no certificate in any draw, but they did
change the correction returned on syndromes with two optima of equal weight, which the problem
permits: $12{,}835$ such changes for \textsc{pm}, the same for \textsc{pmc}, $17{,}017$ for
\textsc{fb} and $20{,}253$ for \textsc{tess} across forty draws each. The invariant is about unique
optima.

\subsection{Whether any of this matters at operating rates}
\label{sec:measurement:rates}

Everything above is measured on an enumerated code-capacity population, where every syndrome counts
once regardless of how likely it is. That establishes the effect exists but not whether a running
experiment would meet it, so we measured that separately. The setting is circuit level on the rotated surface code with the $10^{-3}$ symmetric noise profile.
It uses native physical weights, not the integers of Section~\ref{sec:hunt}. The syndromes are
sampled from the noise model, not enumerated. Mean
syndrome weight is $1.8$ at $d = 5$ and $5.3$ at $d = 7$, the sparse regime a low-noise experiment
produces. Twenty thousand shots per cell, one relabelling, the same R1 used throughout.

\begin{table}[t]
\caption{Presentation dependence on syndromes drawn from the noise model, 20{,}000 shots per cell,
circuit level with native weights. The last column is the only one a logical error rate could see.}
\label{tab:rates}
\small
\begin{tabular}{llrrr}
\toprule
$d$ & subject & correction changed & weight changed & \textbf{logical class changed} \\
\midrule
5 & \textsc{ldpc}  & 47 \; (0.23\%) & 34 \; (0.17\%) & \textbf{1} \; (0.005\%) \\
5 & \textsc{pm}    & 0 & 0 & 0 \\
5 & \textsc{fb}    & 0 & 0 & 0 \\
\midrule
7 & \textsc{ldpc}  & 135 \; (0.68\%) & 89 \; (0.45\%) & \textbf{0} \\
7 & \textsc{pm}    & 0 & 0 & 0 \\
7 & \textsc{fb}    & 3 \; (0.01\%) & \textbf{0} & 0 \\
\bottomrule
\end{tabular}
\end{table}

The effect is real here and it is small. Belief propagation with post-processing returns a different
correction on two to seven shots per thousand, and a correction of different total weight on two to
five per thousand, purely because the mechanisms were listed in a different order. Those are
certificates firing on syndromes an experiment would really see.

It almost never reaches the logical class, at one shot in twenty thousand at $d = 5$ and none at
$d = 7$. Presentation dependence at these parameters is therefore an interface-level effect that reaches the
correction two to seven times per thousand shots and the logical class once in twenty thousand. A
logical error rate reads only the place where it is rarest. It is not
a defect, and this paper does not call it one. \textsc{ldpc} declares no capability that a heavier or
different answer could contradict, so under Definition~\ref{def:layer} what is reported here is an
instance-measured observation about an approximate algorithm. The word would be available only
against a subject that had declared exactness. The
logical error rates themselves are not compared, because at $10^{-3}$ noise these cells produce two or
three logical errors in twenty thousand shots.

\textsc{fb}'s three changed corrections at $d = 7$ show the certificate declining to fire when it
should. All three changed the edge set and none the weight, an exact solver choosing between two
minimum-weight solutions of equal weight, which is a freedom the problem grants it.

Table~\ref{tab:rates} is a measurement of \textsc{ldpc} at these parameters. The mwpf family showed
the largest effect on the enumerated population and is not installed in the environment this
measurement runs in. What the table supports is that statement and one direction. Presentation
dependence reaches the logical class at a rate far below the rate at which it reaches the correction.

\subsection{The certificate against an exact oracle}
\label{sec:measurement:soundness}

Lemma~\ref{lem:certificate} is a proof, and the implementation of a proof is a separate question. Over
$76{,}680$ paired comparisons the relabelling relations produced $639$ certificates. For each one we
asked an exact solver for the minimum weight of that syndrome and checked that the heavier answer
exceeds it.

\begin{observation}[The implementation agrees with the lemma]
\label{obs:soundness}
All $639$ certificates place the heavier correction strictly above the exact minimum weight. No
certificate contradicts the oracle.
\end{observation}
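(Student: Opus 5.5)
The plan is to split the observation into a part that follows from Lemma~\ref{lem:certificate} alone and a part that is a fact about the implementation, and to discharge the second by recomputation. For any certificate, write $S_1$ for the heavier correction and $S_2$ for the lighter one, both transported into the caller's index space. If both are feasible, so $\bd_D S_1 = \bd_D S_2 = \syn$, and if the two totals were computed in the caller's weight function, then the lemma's argument gives the chain
\[
\mathrm{OPT}(\syn) \le \wt{S_2} < \wt{S_1},
\]
where $\mathrm{OPT}(\syn) = \min\{\wt{S} : \bd_D S = \syn\}$. So the heavier answer lies strictly above the exact minimum whatever the oracle returns. A contradiction with the oracle can therefore arise only from one of three sources: a feasibility or transport error in the harness, a weight computed in the wrong domain, or an oracle that solved a different problem. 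The proof of the observation is a check that none of the three occurred on any of the $639$ certificates.

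First, for each certificate I would re-derive feasibility from first principles. I would recompute $\bd_D$ of both corrections by parity over the caller's model, in the caller's indices, after applying $\relab^{-1}$. This is the same computation that Definition~\ref{def:pb} performs, and Observation~\ref{obs:resid} already reports that it passed on every judged result. Second, I would confirm that the relabelling satisfies Definition~\ref{def:presentation}: compare the permuted graph edge by edge against the image of the canonical one, and check that the graph is simple. These are the conditions of Section~\ref{sec:cert:conditions}. Third, I would recompute both totals in exact integers and require a gap of at least two units. This removes any floating-point tie-break from the strict inequality. Once these three checks pass, the displayed chain holds for every certificate, with no reference to any solver.

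Fourth, I would run an exact minimum-weight $T$-join solver on the caller's model for each certificate syndrome, with $T = \syn$ or $T = \syn \cup \{b\}$ according to parity, and record whether $\wt{S_1} > \mathrm{OPT}(\syn)$. Given the first three steps, this comparison can fail only if the oracle is wrong. Its role is therefore an independent cross-check of the harness, not a premise of the argument.

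The main obstacle is this last step: ensuring the oracle solves exactly the model the certificate refers to. It must use the same boundary convention and the same integer weights, and it must operate on the caller's presentation rather than on an adapter's reduced one. Otherwise an agreement could be vacuous and a disagreement spurious. I would address this by having the oracle consume the same exported bundle as the standalone verifier of Section~\ref{sec:artifact}. On the small instances I would also confirm $\mathrm{OPT}$ by brute-force enumeration over edge subsets.
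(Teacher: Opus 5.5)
\textbf{The gap is in your third step.} You import the arithmetic conditions of the circuit-level search of Section~\ref{sec:hunt}: even integer weights, exact integer totals, a two-unit margin, a simple graph. That search produced no certificates at all (Table~\ref{tab:hunt}). The $639$ come from R1 and R2 on the enumerated code-capacity population of Table~\ref{tab:instances}, and the count fits: $76{,}680 = 2 \times 9 \times 4260$ paired comparisons. Those weights are native log-likelihood-ratio doubles, for example the running example's $13.785$ against $22.976$, or \textsc{mwpf-u}'s $17.687434$ against $22.282554$. You cannot recompute those totals in exact integers without rounding, and rounding changes the weight function that both Lemma~\ref{lem:certificate} and the oracle refer to. A gap of two units also means nothing there. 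So the step your argument relies on to rule out a floating-point tie-break cannot be carried out as written. The simplicity requirement is also a search condition, not a hypothesis of the lemma. The restated Lemma~\ref{app:lem:certificate} carries edge identities and does not need it.

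\textbf{The repair is cheap.} Take the caller's weights to be the IEEE-754 doubles they are. Each is an exact dyadic rational, so you can compare the two totals exactly, either by rational summation or by exactly rounded summation. Alternatively, do what the paper does: compensated summation with a stated tolerance. On these instances the weights take only a handful of distinct values, so any real weight difference is far above summation error. With that change your chain $\mathrm{OPT}(\syn) \le \wt{S_2} < \wt{S_1}$ holds exactly. The rest of your plan then goes through, and your fourth step is precisely the paper's evidence. The paper asks an exact solver for each certificate's minimum weight and checks that the heavier answer exceeds it, and it range- and repetition-checks every return first.

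\textbf{How your route differs from the paper's.} The difference is emphasis. You derive the first sentence of the observation from the lemma plus re-verified hypotheses, so the oracle becomes a cross-check that can only fail if the oracle is wrong. The paper treats the oracle comparison as the test of the implementation. It puts your first-principles re-derivation (transport, feasibility, the variant graph as an image, recomputed totals) into a separate result, the standalone verifier of Observation~\ref{obs:certkit}. Your framing has the advantage of not depending on the oracle. That matters: the paper's own MIP oracle was later found to stop at a relative gap (Section~\ref{sec:hunt:oraclegap}), and an over-reported minimum could have produced a spurious contradiction. The paper's framing has the advantage that the oracle shares no code path with the harness whose correctness is in question.
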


Weights are summed with compensated arithmetic and every return is checked for range and repetition
before it is used, so neither summation order nor a malformed answer can manufacture a certificate.
The certificates also reach where this particular oracle-based count does not. That count compares
the returned correction with a designated optimal edge set, and there is no such set when two optima
have equal weight; comparing the returned \emph{weight} with the optimum value stays well defined
everywhere, and a weight difference between two feasible answers is a proof either way.

\subsection{The flag for numeric representation predicted the wrong subjects}
\label{sec:measurement:intweights}

Relation R6 multiplies every weight by $\gamma$, an operation that cannot move an argmin. The
$\mathit{int\text{-}weights}$ flag exists to predict which subjects are sensitive to it, because a
solver that rounds weights to integers can collapse two distinct weights onto one value.

Over $\gamma \in [10^{-6}, 10^{4}]$ the four subjects that carry $\mathit{int\text{-}weights}$
changed no answer at all. Five carry no such flag, and two of those five, both BP with OSD, changed
$1598$ and $1748$ answers on unique-optimum syndromes and moved optimality $1590$ and $1730$ times.
The other three changed nothing. The flag marked four invariant subjects and missed both sensitive
ones. One of the
four carries the flag as a declaration, with the sentence quoted in Appendix~\ref{app:audit}. The
other three carry it as a measurement, because our own driver integerises their weights and nothing
in their documentation claims integer arithmetic. The grouping is the one the finding is about
either way, and marking its provenance is the point of Definition~\ref{def:declstate}. The reason
is not rounding. Scaling the weights rescales the channel probabilities that belief propagation is
initialised from, and belief propagation is not scale invariant even where the minimum-weight solution
is. A flag that predicts the wrong subjects is a defect of our capability vocabulary, and it is one
this measurement found.

One third of the R6 cells are excluded and counted, at both ends of the $\gamma$ ladder, where the
probability representation clamps and the transformation stops being a scaling.
Appendix~\ref{app:tables} gives the per-instance counts.

\subsection{One configuration on the panel is a duplicate}
\label{sec:measurement:corr}

\textsc{pmc} enables correlated decoding, and it returns what \textsc{pm} returns on every syndrome of
every instance. Correlated decoding consumes correlations carried by the non-graphlike mechanisms of a
detector error model, and the reduction to a matching graph has already merged those away before an
adapter sees the model. The mode is therefore not reachable through the matching-graph interface, and
we report the panel as seven declarations over six behaviourally distinct configurations. The
declaration of \textsc{pmc} still differs from that of \textsc{pm}, which is why it stays in
Table~\ref{tab:declarations}.

\section{Putting the exactness declarations under load}
\label{sec:hunt}

Two libraries on the panel declare exact minimum-weight optimisation. Section~\ref{sec:measurement}
found no certificate against either one on the enumerated population. This section asks the same
question on instances an order of magnitude larger, where enumerating the syndrome space is out of
reach and the check has to work from two decode calls alone.

\subsection{One question, and the conditions for answering it}
\label{sec:hunt:question}

The question is narrow. Does a subject that declares exact minimum-weight optimisation ever return
two feasible corrections of different total weight for two presentations of one instance? A firing
there is a conformance verdict against a published claim. A firing on a subject that declares nothing
is an observation about an approximate algorithm doing approximate work, which
Section~\ref{sec:measurement} has already reported.

The conditions for reading a firing were fixed before the search ran. Appendix~\ref{app:conditions}
lists all twelve, and Section~\ref{sec:cert:conditions} gave the four that shape the experiment. The
search was budgeted at one workstation day and the budget was not renewed.

We also recorded what we expected. From $80$ prior subject-and-draw blocks with no certificate on
either exactness-declaring subject, a Jeffreys posterior gives a rate of $0.0062$ and a
posterior-predictive probability of at least one firing of $10.5\%$ over $20$ further independent
cells and $18.3\%$ over $40$. Relabellings of one graph exercise correlated code paths, so the
committed planning figure was $5$ to $15\%$. A null here is the expected outcome.

\subsection{Where we looked}
\label{sec:hunt:where}

The instances are circuit-derived, which is where the graphs stop being small. A rotated surface code
at $d=13$ over $13$ rounds gives $2184$ detectors and $11{,}398$ edges, and its syndrome space cannot
be enumerated. Exact matching at that size is still polynomial, so the point is not that an oracle is
impossible but that the certificate needs two decode calls and none, which is what lets the search run
at this size inside its budget.

The physical weights are replaced by exact even integers from a broad deterministic spectrum. The
result is a set of circuit-derived matching graphs with artificial weights, not an extracted
circuit-level noise model. The comparison happens in integer arithmetic, where a firing must differ
by at least two units. Syndromes are the detector boundaries of deterministic edge subsets,
stratified over $|\syn| \in \{8, 16, 32, 64\}$ as the protocol fixed in advance, because low detector
counts make exact matching easy and measure little.
Both presentations are built through each library's own graph interface, and the permuted graph is
verified edge by edge against the image of the canonical one before either decode call.

\subsection{The answer}
\label{sec:hunt:answer}

Table~\ref{tab:hunt} reports it, together with the instrument health the answer depends on.

\begin{table}[t]
\caption{The search over the two subjects that declare exact minimum-weight optimisation. Both saw
identical canonical graphs, weights and syndromes. Instrument health was clean throughout: no
isomorphism check failed and no return was infeasible or unmappable.}
\label{tab:hunt}
\small
\begin{tabular}{llrrrrr}
\toprule
subject & instance & detectors & edges & syndromes & relabellings & certificates \\
\midrule
fusion-blossom 0.2.13 & $d{=}11$, circuit level & 1320 & 6718 & 800 & 50 & \textbf{0} \\
fusion-blossom 0.2.13 & $d{=}13$, circuit level & 2184 & 11398 & 800 & 50 & \textbf{0} \\
pymatching 2.4.0 & $d{=}11$, circuit level & 1320 & 6718 & 800 & 50 & \textbf{0} \\
pymatching 2.4.0 & $d{=}13$, circuit level & 2184 & 11398 & 800 & 50 & \textbf{0} \\
\midrule
\multicolumn{5}{l}{total paired comparisons} & & 160{,}000 \\
\bottomrule
\end{tabular}
\end{table}

Over $160{,}000$ paired comparisons neither declaration was contradicted. Both libraries returned
corrections of identical total weight under every relabelling we drew, on graphs an order of magnitude
larger than the enumerated population.

\subsection{The same population against an exact optimum}
\label{sec:hunt:optimum}

Section~\ref{sec:cost:oracle} establishes that the exact oracle is affordable on these graphs, so the
declaration can be checked against the optimum directly. We attempted an exact solve of all $1600$
canonical
syndromes and compared what each subject returned against the optimum, in exact integers. The oracle
settled $1599$ inside its budget. The one it did not is reported as unsettled and never as a pass.

\begin{table}[t]
\caption{Every canonical syndrome of the search population against an exact optimum. A syndrome the
oracle did not settle inside its time limit is reported as unsettled, never as a pass.}
\label{tab:canonopt}
\small
\begin{tabular}{lrrrrr}
\toprule
instance & syndromes & solved & oracle time & pymatching $=$ optimum & fusion-blossom $=$ optimum \\
\midrule
$d{=}11$, circuit level & 800 & 800 & $1062$ s & \textbf{800/800} & \textbf{800/800} \\
$d{=}13$, circuit level & 800 & 799 & $3171$ s & \textbf{799/799} & \textbf{799/799} \\
\bottomrule
\end{tabular}
\end{table}

On every settled syndrome both subjects returned a correction of exactly the optimum weight.
Neither was ever above it, and neither ever returned one the oracle could not match. This is a
stronger statement than the paired null above it and it is still bounded in the same way. It covers
the syndromes we drew, on two instances, and $1$ of the $1600$ is not covered at all, because the
oracle did not settle it within its budget.

\subsection{What the full solve found in our own oracle}
\label{sec:hunt:oraclegap}

Running the direct check is also what exposed a defect in the instrument, and the defect had been
there since the instrument was built. The first version of Table~\ref{tab:canonopt} reported
$799$ of $800$ at $d = 11$ for both subjects, with the remaining syndrome in neither the equal nor the
above column. There is only one way to be in neither, and it is to have returned a feasible correction
\emph{lighter} than the value we called the minimum.

It was not a decode failure and not an unexplained syndrome. Over all $1600$ syndromes neither
subject ever declined to answer, and every returned correction had zero residual. On canonical
syndrome $763$, with sixty flipped detectors, our oracle reported the minimum as $40784$ while
pymatching and fusion-blossom independently returned feasible corrections of weight $40782$. Two
exact matching implementations agreeing with each other against our mixed-integer program is already
the verdict, and the size of the disagreement names the cause. $2/40784$ is $4.9 \times 10^{-5}$,
inside the solver's default \emph{relative} MIP gap of $10^{-4}$, which our wrapper never overrode.
The solver returns its optimal status the moment that gap closes, and the wrapper read that status as
a proved optimum. At a zero gap the same solve returns $40782$ in the same $1.6$ seconds, so the
tolerance had bought nothing.

The error has a direction, and it is the forgiving one. A gap-terminated program returns a feasible
incumbent, so the value it reports is at or above the true optimum and never below it. An exactness
violation is credited when a subject returns \emph{more} than the reported optimum, so it exceeds the
true optimum too, and a certificate asserts that the heavier of two feasible answers exceeds the
reported minimum, which implies it exceeds the true one. Both survive. What does not follow is a
uniqueness verdict, which is a dispatch precondition for both equivariance properties.

So we measured how far it reached. Re-solving the enumerated populations
of Sections~\ref{sec:measurement} and~\ref{sec:accessmap} at both settings, over $1096$ syndromes
spanning both code families, both noise regimes and $d = 3, 5, 7$, gives $0$ minimum weights that
differ and $0$ uniqueness verdicts that differ, with $859$ syndromes unique under both. The gap needs
an objective large enough for $10^{-4}$ relative to span a real weight difference, and it found one
only on a search graph with $6718$ edges. Every number in this paper comes from the corrected
solver, and the uncorrected run is kept in the artifact.

\subsection{What a null of this shape is worth}
\label{sec:hunt:worth}

A null is worth what its conditions make it worth, and the conditions here were designed in rather
than checked afterwards. The comparison ran in exact integers with a two-unit margin, so the result is
a statement about exact arithmetic and not about rounding. The graphs are simple, one edge per
endpoint pair, so no merge policy could decide an answer. The permutation and its inverse were both verified against the canonical graph. A bug in our own
transport would therefore have surfaced as an isomorphism failure, not a finding. Neither presentation went through anything but the
library's own graph interface.

Read together with Section~\ref{sec:measurement}, the two sections give a caller a bounded statement.
Across an enumerated population and a search on much larger graphs, neither published exactness
declaration was contradicted under any presentation we constructed. The paired certificate returned no unequal-weight feasible pair over those presentations, and the
exact-optimum arm establishes minimum-weight agreement on every settled syndrome. That is why
Section~\ref{sec:hunt:optimum} does not stop there and settles the stronger question directly, with an
independent optimum for every syndrome the oracle settles, and on all $1599$ of those both subjects
returned exactly it. The
paired null still earns its place, because it is what a caller can run without an oracle at all, and
what it establishes is that the check which would have caught a presentation-dependent failure was
actually run, at this size, under conditions fixed in advance, and that it is cheap enough for
continuous integration.

\subsection{A third interface class}
\label{sec:hunt:chromobius}

We also planned a probe on a colour-code decoder, since colour codes are a genuinely different family
and the panel is otherwise built from matching. Chromobius 1.1.1 exposes
$\mathrm{predict\_obs\_flips\_from\_dets\_bit\_packed}$ and a weighted variant of it. Both return
predicted observable flips. Neither returns a correction.

The certificate of Lemma~\ref{lem:certificate} compares the weight of a returned correction, so it has
nothing to act on here. This is the same split that Definition~\ref{def:applicability} draws, one step
further along. The panel already divides into libraries that return observables and libraries that
return corrections. Chromobius is a third case, where the published interface exposes neither the
correction nor an index space to transport one into, so a correction-space check cannot be dispatched
to it at all. That is a fact about what the interface publishes, and Section~\ref{sec:accessmap}
measures the same fact from the other direction.

\section{Which part of the answer does your evaluation actually read?}
\label{sec:accessmap}

Definition~\ref{def:subject} gives an answer three possible components: a correction, an observable
vector, and the weight of the correction chosen. This section measures which of the three each
evaluation method inspects. The separations it reports are deterministic, and saying so is part of
reporting them.

\subsection{Thirteen operations on one interface}
\label{sec:access:design}

We take a decoder that Section~\ref{sec:measurement} measured to be exact on every unique-optimum
syndrome and wrap it so that exactly one enumerated interface operation is perturbed. We chose the
operations, six instances and twelve behaviours per cell. What is not chosen is which distinct
behaviours arise inside that design, every parameter being drawn from a hash of the operator, the
instance and an index.

Thirteen operations. Six act on the correction: permute its indices, shift them, drop an edge, add an
edge, duplicate an edge, and sort the edges by weight. Four act on the observable vector: truncate it,
pad it, rotate it, and flip a bit. One acts on the call history and returns the answer from one to
three calls ago. One rounds the model's weights before decoding. The last one perturbs the returned
weight.

Behaviours that coincide are merged before anything is counted, over the whole domain the detectors
exercise: every presentation, the twenty-thousand-call sequence the error-rate arm uses, and the
fresh-decoder lifecycle the purity relation needs. That leaves $387$ distinct behaviours after $48$
cross-operator collisions are removed, of which $6$ reproduce the unmutated decoder exactly and $381$
do not. No behaviour counted as equivalent is detected by anything.

Seven measurement columns are compared, and it is worth being exact about where each comes from.
\emph{Five are established practice}. They are the logical error rate along each of its two return
paths, observable-width validation, exhaustive observable checking, and exhaustive correction
checking against an exact oracle. \emph{The sixth is ours}, the capability-gated contract this paper builds.
\emph{The seventh is proposed here}. It is an assertion a caller can make directly from the returned
object. Leaving it out would have made the comparison incomplete, not fairer. A sixth
established method, the library's own test suite, is recorded \textsc{not-applicable} throughout,
since a wrapper is not the library, and that was ruled before the study ran.

The logical error rate appears twice because the panel splits on which object a library hands back.
Two subjects compute the observable vector inside the library. Seven return a correction and leave
the caller to derive it. A benchmark reads whichever it is given, so a single error-rate arm would
make a statement about one return path and be read as a statement about both. \emph{Error rate,
observable-native} samples $20{,}000$ error patterns per
behaviour, decodes them, and compares the observable vector the subject returns against the truth with
McNemar's exact test on the paired shots at $\alpha = 0.01$. \emph{Error rate, correction-native} runs
the same exact test on the same paired shots, with the observable recomputed from the returned
correction under the caller's own observable masks, which is what a caller of the other seven subjects
actually computes.

\emph{Width} checks that the returned observable vector has the length the
model declares. \emph{Observable check} compares the prediction against the classes an exact
minimum-weight solution can reach, on every syndrome. \emph{Correction check} compares the returned
correction against an exact optimum, on every syndrome and in every presentation. \emph{Contract} is
the residual property of Definition~\ref{def:pb} together with the deployable relations of
Table~\ref{tab:relations} --- R1, R2, R3, R3b, R4 and R6, with R5 excluded because
Section~\ref{sec:relations} shows its precondition cannot be settled without the oracle this arm
exists to avoid --- self-referential throughout, and credited only where a relation fires on the
perturbed decoder and not on the baseline. \emph{Weight check} is the seventh, and it is the assertion a
caller can always make when it holds both the returned correction and its own model: the returned
weight must equal $\sum_{e \in \corr} w_e$ under the caller's weights. It needs no oracle and no
second decoder.

This study is confirmatory. Its protocol was committed alone, before the code that acts on it
existed, with data that postdate that commit.

A pre-registered falsifier fired in it. The correction-check arm first detected $215$ of the $219$
correction behaviours, having been given only the base presentation, and the protocol had committed
in advance that such a result meant the arm was misimplemented and could support no cost claim. The
arm was repaired as the protocol prescribes and the study re-run. Every number below comes from the
run in which that arm receives every presentation.

\subsection{The map}
\label{sec:access:map}

\begin{table}[t]
\caption{What each method detects, over 381 distinct non-equivalent behaviours from 13 operators.
Entries are behaviours detected out of the behaviours in that class. The error rate appears in both
of its return paths: \emph{obs.} reads the observable vector the subject returns, \emph{corr.}
recomputes it from the returned correction. Wilson $95\%$ intervals accompany the pooled scores,
which summarise the operator mix specified above.}
\label{tab:accessmap}
\footnotesize
\setlength{\tabcolsep}{4pt}
\begin{tabular}{lrrrrrrrr}
\toprule
& & \multicolumn{2}{c}{error rate} & & & & & \\
\cmidrule(lr){3-4}
perturbed component & live & obs. & corr. & width & observable & correction & contract & weight \\
\midrule
correction, six operations & 219 & \textbf{0} & 176 & \textbf{0} & \textbf{0} & \textbf{219} & \textbf{215} & 139 \\
observable, truncate and pad & 23 & 2 & 0 & \textbf{23} & \textbf{23} & 0 & \textbf{0} & 0 \\
observable, rotate and flip & 18 & 12 & 0 & 0 & 12 & 0 & 12 & 0 \\
call history & 41 & 41 & 41 & 0 & 39 & 41 & 41 & 0 \\
model weights in & 8 & 0 & 0 & 0 & 0 & 2 & 0 & 0 \\
\textbf{returned weight} & \textbf{72} & \textbf{0} & \textbf{0} & \textbf{0} & \textbf{0} & \textbf{0} & \textbf{0} & \textbf{72} \\
\midrule
pooled & 381 & 14.4\% & 57.0\% & 6.0\% & 19.4\% & 68.8\% & 70.3\% & 55.4\% \\
Wilson 95\% & & [11.3, & [51.9, & [4.1, & [15.8, & [63.9, & [65.6, & [50.4, \\
 & & 18.3] & 61.8] & 8.9] & 23.7] & 73.2] & 74.7] & 60.3] \\
\bottomrule
\end{tabular}
\end{table}

Table~\ref{tab:accessmap} has three rows worth reading closely.

\emph{The correction row, and the two paths through it.} Along the observable-native path the answer
is a flat zero. Error-rate benchmarking and exhaustive observable checking detect none of the $219$
behaviours that corrupt the correction, because both score a decoder through its predicted observable
vector and all six operations leave that vector untouched. The zero is exact and narrow, describing
only the two subjects that return an observable vector.

The correction-native path is the one seven of the nine subjects put a caller on, and it splits the
class in two. Of the $219$ behaviours, $176$ move the logical class at the pre-registered nominal
$\alpha = 0.01$, or $173$ under a Bonferroni correction over the whole $1206$-test family, and $43$
do not. The $43$ permute, shift or duplicate edges in a way that changes the set the caller applies
while leaving its observable parity alone. So such a caller catches the $176$ and is blind to the
$43$, and no observable-native benchmark catches any of the $176$. The overlap between the two
error-rate columns on this class is empty.

The class an error rate cannot see is therefore smaller than a single-path measurement suggests and
is not empty, and which number applies to a reader is decided by which object their library hands
back. The correction check detects all $219$ on either path; the contract detects $215$, a
coincidence of arithmetic with the falsifier count above and a different arm on a different cause.
The four it misses are the ones R5 detected alone, for the reason Section~\ref{sec:relations} gives.

The multi-observable population says what the blind fraction depends on. There the observable-native
path again detects $0$ of $98$, the correction check $98$ of $98$, the contract $95$, and the
correction-native path $95$ of $98$. The invisible class falls from $20\%$ at one logical observable
to $3\%$ above it, because two, eight or thirty-two observables give a corrupted correction far fewer
ways to keep every parity intact. The blindness is worst exactly where a memory experiment sits.

\emph{The observable row.} The contract detects none of the $23$ behaviours that truncate or pad the
observable vector, while the width check and the observable check detect all $23$. A relation compares
a decoder with itself, so a corruption applied consistently on every call preserves every relation,
and the residual property never inspects the observable vector at all. The property that would catch
the class is bounded-distance correctness, whose claim covers the width and content of the returned
observable vector wherever a subject exposes one (Definition~\ref{app:def:bd}), and
Section~\ref{sec:measurement:yield} measured its direct declaration yield at $0.0\%$, because it
gates on a capability no library on the panel declares. The panel's silence is what leaves this class unassessable by a capability-gated
contract. Where the model declares more than one observable the offset relation recovers part of it,
$14$ of $31$ on the second population, so the blindness is complete at one observable and partial
above it.

\emph{The weight row.} None of the five methods compared here detects any of the $72$ behaviours that
perturb the returned weight, on either return path, and none detects any of the $36$ on the second population
either. Nor does our contract. The seventh column detects all $72$ and all $36$, and it is one
summation. Add the caller's own weights over the returned edges and compare. It needs no oracle, no second decoder and no ground truth, and our own
certificate verifier had been performing exactly this computation on every witness bundle all along.

This is a gap in the methods compared here, not a property of the interface. The returned weight is not unobservable but
unchecked, which is a different and more actionable thing. A caller holding the correction and the
model can validate it in one line, and none of the five ways compared here does so.

\subsection{Reading the table correctly}
\label{sec:access:reading}

The separations follow from which field each method reads, and we state that with the numbers rather
than after them. The six correction operations change only the correction. Two arms read only the observable vector.
Those zeros are therefore consequences of the construction, not discoveries. What
the table settles is the question answered wrongly by default. A reader who scores a decoder by its
logical error rate has measured its observable prediction, and the correction the caller applies has
gone unexamined.

The pooled percentages summarise the operator mix, in which six operations act on the correction and
four on the observable vector. The two separations and the weight row hold across any mix, being
statements about which field each method reads. That field-access question is what these behaviours
establish, and it is the estimand throughout \cite{papadakis2019mutation}.

Two arms come close enough to invite the wrong summary. The contract's $70.3\%$ against the
correction check's $68.8\%$ is $1.5$ points on our own operator mix with overlapping Wilson
intervals, and the contract loses $0$ to $23$ on the observable class, so neither dominates. Both
detect every correction behaviour the observable-based methods miss. What separates them is that the
correction check needs an exact oracle and the contract needs none, and one solve per canonical
syndrome suffices on the oracle side, because a weight-preserving relabelling carries the optimum.
Section~\ref{sec:cost} measures both sides and reports a factor of four to six, not an order of
magnitude.

\section{What the check costs, and where that decides anything}
\label{sec:cost}

Section~\ref{sec:accessmap} left one difference between the two arms that detect the correction
class. One needs an exact optimum and the other two decode calls. A cost ratio is only meaningful
with its numerator, its denominator and a statement of what each side excludes, so every figure below
carries all three.

\subsection{The oracle is the whole bill, and uniqueness is most of it}
\label{sec:cost:oracle}

Two oracles appear in this paper. One returns a minimum-weight explanation of a syndrome. The other
also proves it unique, which Section~\ref{sec:measurement} needs before it may compare chosen
corrections. The second is a solve-exclude-solve pair, and the second solve is what grows.

\begin{table}[t]
\caption{Oracle cost on the rotated surface code at code capacity, per syndrome and projected over the
benchmark's syndrome set. That set is a deterministic stride-spread selection from the error patterns
of weight at most four, capped per instance, and it is not the instance's complete syndrome space. The
per-syndrome timings come from a 30-syndrome timing sample repeated three times. Recorded with platform, processor, interpreter
and every library version alongside the numbers.}
\label{tab:oraclecost}
\small
\begin{tabular}{lrrrr}
\toprule
$d$ & syndromes in sweep & min-weight, one solve & with uniqueness, two solves & projected over the sweep \\
\midrule
7  & 3797 & 6.66 ms & 28.44 ms & 25.3 s \; / \; 108.0 s \\
11 & 4604 & 7.22 ms & 66.75 ms & 33.2 s \; / \; 307.3 s \\
15 & 4709 & \textbf{7.54 ms} & \textbf{95.05 ms} & \textbf{35.5 s \; / \; 447.6 s} \\
\bottomrule
\end{tabular}
\end{table}

\begin{figure}[t]
\centering
\includegraphics[width=\linewidth]{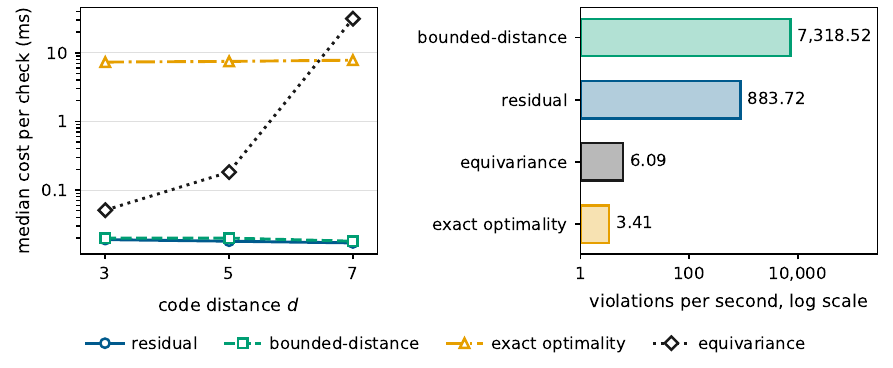}
\caption{Left: median per-check cost against code distance, four properties, logarithmic in cost. The
two solver-free properties are flat at $0.015$ ms, exact optimality is flat at $6.3$ to $6.7$ ms, and
the equivariance property, which needs the uniqueness oracle, rises from $0.041$ ms at $d = 3$ to
$24.80$ ms at $d = 7$, a factor of $605$ between those two cells. Right: violations found per second
per property on the injected panel, on the same colour for each property as the left panel,
logarithmic because the fastest property finds $7319$ per second and the slowest $3.41$, a span of
$2100$.}
\label{fig:cost}
\end{figure}

Table~\ref{tab:oraclecost} says something we did not expect and had predicted the other way. Computing
an exact minimum weight is nearly flat over this range, from $6.66$ to $7.54$ ms. Proving the optimum
unique grows by $3.3\times$ across the same range. Uniqueness verification, not exact optimisation, is the expensive primitive. It bounds the
population of Table~\ref{tab:instances}, not any decoder.

The same split shows up inside the checker. For the two properties that invoke an optimisation
solver, exact optimality and matching-core equivariance, the solver is $97$ to $99\%$ of wall-clock
time and the decode call is under $1.5\%$. The residual property and bounded-distance correctness
invoke none and cost $0.015$ ms flat, unmoved between $d = 3$ and $d = 7$. Bounded-distance correctness relies on a known injected error. It is solver-free, not
ground-truth-free. Table~\ref{tab:properties} separates the two. The cost of this contract is the cost of
being sure, not the cost of decoding, and it is why dispatching an oracle-backed property only where
its gate opens is worth doing: a median exact optimality check costs $6.29$ ms at $d = 3$ against
$0.015$ ms, a factor of $419$ on that instance.

\subsection{Four ratios, each answering a different question}
\label{sec:cost:ratios}

At $d = 15$ the relabelling sweep costs $0.31$ s for pymatching, $0.35$ s for fusion-blossom, and
$0.48$ to $0.54$ s for the two BP-with-OSD configurations, against a single min-weight oracle pass of
$35.5$ s over the same syndromes, which is the comparison Figure~\ref{fig:cost} plots. Four honest
framings follow in Table~\ref{tab:costratios}, two orders of magnitude apart, so naming which one
is in use is not a formality.

\begin{table}[t]
\caption{The oracle-free test against an exact oracle at $d = 15$, over the same syndrome set. The
per-subject figures are for one subject; the panel figures are for eight, which share a single oracle
pass. The uniqueness row is the oracle a study actually pays for when it compares chosen corrections.}
\label{tab:costratios}
\small
\begin{tabular}{llrr}
\toprule
oracle side & test side & oracle cost & speed-up \\
\midrule
min-weight only & one subject & 35.5 s & 26$\times$ to 113$\times$ \\
min-weight only & the eight-subject panel, 5.81 s & 35.5 s & 6.1$\times$ \\
with uniqueness & one subject & 447.6 s & 334$\times$ to 1426$\times$ \\
with uniqueness & the eight-subject panel & 447.6 s & 77$\times$ \\
\bottomrule
\end{tabular}
\end{table}

The panel rows are the conservative ones and belong in any comparison of whole studies, since one
oracle pass serves every subject while each subject pays its own test sweep. The per-subject rows
belong to a caller adding one library to a continuous integration job, who amortises nothing.

\subsection{Where the oracle is the cheaper option}
\label{sec:cost:notthere}

On the population of Section~\ref{sec:accessmap}, the oracle-free contract is the more expensive arm.
The figures below give direct timings for the deployable contract itself. An earlier version
timed a profile that still ran R5 and called the result an upper bound, which answers a different
question. An upper bound cannot say whether the oracle beats the contract that ships, only that it
beats something slower. The study was re-run with R5 out of the relation battery, where
Section~\ref{sec:relations} puts it.

The correction check ran in $15.5$ s and the deployable contract, the residual property with R1, R2,
R3, R3b, R4 and R6, in $69.1$ s, both from that single run. That is a factor of $4.5$ in the
oracle's favour.
R5's share is legible because the correction check barely moves between the two runs, $15.9$ s with
R5 in the battery and $15.5$ s without, so the contract's $81.5$ s to $69.1$ s is R5, about a
seventh. Both runs are from the repaired arm of Section~\ref{sec:access:design}.

The reason the oracle wins here is amortisation. That study built its oracles once per instance and
six per instance for the correction check, one per presentation, which the next subsection shows is
more than it needed. Either way their cost is spread across $381$ behaviours on a $4260$-syndrome
population, which drives the per-behaviour share near zero. The contract pays no oracle at all and
still loses, because it runs the full relation battery with a fresh decoder per syndrome for the
purity relation.

The comparison on the search population turns on how the oracle is amortised. The search of
Section~\ref{sec:hunt} makes $160{,}000$ paired comparisons,
but it does not need $160{,}000$ solves. A relabelling preserves every edge weight, so by
Theorem~\ref{app:thm:equiv} it carries the feasible family and the objective: the minimum weight of a
syndrome is the same in every presentation, and one canonical solve serves them all. It serves every
subject too, since the oracle answers a question about the instance. The honest baseline is one solve
per canonical syndrome, $2 \times 800 = 1600$ of them.

\begin{table}[t]
\caption{The oracle-free check against a correctly amortised exact oracle, on the graphs of
Section~\ref{sec:hunt}. The projected column is the median of ten canonical solves per graph scaled
to that graph's 800 syndromes; the realised column is the wall clock of the complete pass in
Table~\ref{tab:canonopt}. The ratio is computed
against the realised time. Certificate timings are the search's own recorded wall clock for both
subjects.}
\label{tab:amortised}
\small
\begin{tabular}{lrrrrr}
\toprule
instance & detectors & certificate, both subjects & oracle, projected & oracle, realised & ratio \\
\midrule
$d = 11$, circuit-derived & 1320 & 240.7 s & 530 s & \textbf{1062 s} & \textbf{4.4}$\times$ \\
$d = 13$, circuit-derived & 2184 & 533.3 s & 1814 s & \textbf{3171 s} & \textbf{5.9}$\times$ \\
\bottomrule
\end{tabular}
\end{table}

Table~\ref{tab:amortised} settles it. The oracle is affordable on this population. A full canonical
pass is eighteen minutes at $d = 11$ and just under an hour at $d = 13$, and every syndrome but one
was settled inside its budget. On this population the oracle-free check is faster by a factor of four
to six, and the exact oracle follows the search comfortably.

The two oracle columns differ by about a factor of two, and the reason belongs in the table rather
than in a footnote. A median of ten solves is a poor estimator of a total over eight hundred, because
the solve time has a long right tail: the realised pass averages $1.33$ s per syndrome at $d = 11$
against a sampled median of $0.66$ s, and $3.96$ s against $2.27$ s at $d = 13$. Every ratio in the
table is computed from the realised column, and both are shown so that a reader can see which number
each one came from.

The direction of the amortisation runs the same way. The oracle's cost is per syndrome and shared
across every subject and presentation. The certificate's is per subject and grows with the panel, so
on a large enough panel the oracle is the cheaper arrangement. What the panel does not change is that
the certificate needs no solver in the dependency set, asks no uniqueness question, and emits a
witness a reader can check.

\subsection{What this costs a caller who adopts it}
\label{sec:cost:adoption}

The figures above are for a study. A caller integrating the check pays something smaller. Two decode
calls, one relabelling, one parity computation and one weight comparison per syndrome is the whole
per-syndrome obligation, and the relabelling can be drawn once per instance and reused. The benchmark
figure that bounds it is the full sweep at $d = 15$, which runs the base presentation and both
relabelling relations for pymatching in $0.314$ s over $4709$ syndromes, or $0.067$ ms per syndrome
across all three passes, against $7.54$ ms for one exact solve of a single syndrome. Neither the residual property nor the certificate asks the
library to declare anything, which is what makes them the two rows of Table~\ref{tab:properties} that
reach every subject on the panel. Section~\ref{sec:artifact} describes the packaging that turns that
obligation into a dependency and a function call.

\section{Every certificate travels with its own proof}
\label{sec:artifact}

A certificate is worth what a reader can check without our tooling. Each one is therefore exported as
a self-contained bundle, and a second program that shares no code with the checker re-derives the
claim from the bundle alone.

\subsection{What a bundle contains}
\label{sec:art:bundle}

A bundle is one JSON object holding the whole instance and nothing else: the graph as an edge list
with a virtual boundary and a weight per edge, the syndrome, the relabelling as two explicit
permutations, the variant edge list actually handed to the library, both returned corrections in the
index space each was returned in, and the two claimed weights. Provenance travels with it, naming the
study, the instance, the relation, the seed, and the subject's name and pinned version.

Weights are carried as strings, never as JSON numbers. Exact integers are written in decimal, and
IEEE-754 doubles are written in the round-trip hexadecimal form, so a bundle survives serialisation
without losing a bit. Float totals are accumulated with compensated summation, which matters because a
certificate is a statement about two sums being different.

\subsection{The verifier}
\label{sec:art:verifier}

\texttt{deccon-cert} is a standard-library Python program that imports nothing from this project and
nothing from any decoder library. Given a bundle it runs eleven checks in order and stops at the first
failure. Four are structural: the schema is the expected one, the graph is well formed with every
endpoint a detector or the boundary, both permutations are bijections, and the variant edge list is
edge by edge the image of the canonical one under the relabelling. Three concern the inputs and
outputs, in that the syndrome names detectors that exist, both corrections are duplicate-free and in
range, and the second correction transports back through the inverse permutation. The last four are
the claim itself. Both transported corrections have detector boundary exactly the syndrome, the two
claimed weights are the weights the graph gives, the two weights differ, and the side the bundle calls
heavier is the heavier one. Appendix~\ref{app:schema} lists them individually.

The eleventh check is the conclusion of Lemma~\ref{lem:certificate} recomputed from the bundle. A
reader who runs the verifier has not taken our word for the arithmetic, the transport, or the
feasibility of either correction.

\begin{observation}[Independent re-verification]
\label{obs:certkit}
All $639$ certificates were exported and re-verified. $639$ passed and $0$ failed. A positive control
passes all eleven checks, and a negative control built by setting the two weights equal is rejected at
the tenth.
\end{observation}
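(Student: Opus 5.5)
The statement is Observation~\ref{obs:certkit}, an empirical claim about running the verifier. Its justification is therefore an execution argument backed by a soundness argument, not a derivation.

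\emph{Step 1: the eleven checks jointly imply the conclusion of Lemma~\ref{lem:certificate}.} The four structural checks establish what that lemma presupposes: the graph is well formed, both permutations are bijections, and the variant edge list is the image of the canonical one. Together these make the relabelling $\relab$ incidence-, boundary- and weight-preserving in the sense of Definition~\ref{def:presentation}. The three input/output checks make $S_1$ and $S_2 = \relab^{-1}(S_2')$ well-defined edge sets in the caller's index space. The four claim checks then do the rest. Check eight gives $\bd_D S_1 = \bd_D S_2 = \syn$. Check nine recomputes $\wt{S_1}$ and $\wt{S_2}$ from the graph, using exact decimal integers or round-trip hexadecimal doubles with compensated summation. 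Checks ten and eleven give the strict inequality and its orientation. So a bundle passing all eleven checks satisfies the hypotheses of the lemma, and the lemma yields non-minimality of the heavier correction.

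\emph{Step 2: the checker's output is independent of its verification.} The verifier imports nothing from the project or from any decoder library. A shared bug therefore cannot make a flawed certificate pass both programs. Every bundle is serialized losslessly, which guarantees that the verifier's totals equal the checker's.

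\emph{Step 3: run and controls.} Export all $639$ certificates and run the verifier on each, recording pass or fail. The result should be $639$ passes and $0$ failures. To show the verifier is not vacuous, add two controls:
\begin{itemize}
\item a positive control, a valid bundle, which must pass all eleven checks;
\item a negative control, obtained from a valid bundle by setting the two claimed weights equal, which must be rejected.
\end{itemize}
The negative control deserves a careful argument, and I expect it to be the main obstacle. Setting the weights equal must leave checks one through nine intact, so the rejection has to land exactly at check ten. If the control merely rewrote the claimed weights, check nine would reject it first. The construction must therefore make the two weights actually equal under the graph's own weights. One way is to adjust the graph weights on the edges in the symmetric difference $S_1 \triangle S_2$ so that the totals coincide. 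The other is to pair a correction with its own transported image. Either way, the control must still satisfy Definition~\ref{def:presentation}, which requires the variant list to be rebuilt from the modified canonical graph. Once that is done, the observed rejection index certifies that the weight-inequality check is live, which completes the evidence.
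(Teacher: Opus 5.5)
Your proposal follows the paper's own support for this observation: each witness is exported as a self-contained bundle and re-checked by the standalone verifier of Section~\ref{sec:art:verifier} and Appendix~\ref{app:schema}, whose checks recompute feasibility and the weight inequality directly in the canonical index space, and a positive control and an equal-weight negative control show the verifier neither rejects a valid witness nor accepts an invalid one. One small correction to Step~1: check four verifies incidence and boundary status only, not weights, so the bundle does not establish weight preservation, but the canonical-space conclusion does not need it. Your concern about the negative control is well founded and sharper than the paper's one-line description: check nine compares the claimed totals against totals recomputed from the graph, so a rejection at check ten requires the recomputed totals themselves to coincide, which is exactly the reason the paper reports, namely that the two corrections have the same weight.
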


Observation~\ref{obs:certkit} establishes that each certificate is internally checkable without our
tooling. Section~\ref{sec:relations} specifies the population those certificates come from and
Section~\ref{sec:measurement} reports its results.

\subsection{The rest of the artifact}
\label{sec:art:rest}

The artifact also carries the material needed to reproduce and audit the numbers in this paper. Each
study artifact records the commit that carried the protocol it ran under. The confirmatory protocols
were committed before the code that acts on them. The one exploratory protocol is labelled as such in
Section~\ref{sec:accessmap}, because it was committed together with its data. The evaluation pipeline is frozen as a
manifest of file digests with the interpreter, the platform and every library version, and each freeze
names the freeze it supersedes. The artifact carries the errata file, which records each correction
made during this work and what it changed. The
claim inventory names, for every table in this paper, the artifact it was read from, and a script
re-reads those artifacts and diffs them against the manuscript source.

The check itself installs as a dependency. A caller supplies a detector error model and a decoder, and
receives the residual verdict and any certificate the relabelling produces. Nothing in that path needs
an oracle, a ground-truth error, or a declaration from the library, which is the property that lets it
run in a continuous integration job on every commit.

\subsection{Where the artifact is}
\label{sec:art:where}

The artifact is archived at \url{https://doi.org/10.5281/zenodo.22235839}, which resolves to the
version current at publication. The repository it was built from is at
\url{https://github.com/Mercury0828/deccon-artifact}.

\section{Related work}
\label{sec:related}

This work sits between two literatures that rarely cite each other. One is decoding for quantum error
correction, the other the testing of programs whose correct output nobody can cheaply compute. We take each
in turn and say at the end which parts of our construction are new and which instantiate something
known.

\subsection{Decoders and the libraries this paper measures}
\label{sec:rel:decoders}

Minimum-weight perfect matching decoding for topological codes goes back to Dennis, Kitaev, Landahl
and Preskill \cite{dennis2002topological}, on Kitaev's surface code \cite{kitaev2003faulttolerant}
and in the form most callers use today \cite{fowler2012surface}. The matching problem is Edmonds'
\cite{edmonds1965paths}, and the implementation lineage runs through Blossom V
\cite{kolmogorov2009blossomv} to the two libraries whose exactness declarations
Section~\ref{sec:hunt} puts under load: PyMatching \cite{higgott2022pymatching}, whose version $2$ is
built on sparse blossom \cite{higgott2025sparseblossom}, and fusion-blossom \cite{fusionblossom}. Union-find decoding
\cite{delfosse2021almostlinear} trades exactness for almost-linear time and its weighted
interpretation \cite{wu2022unionfindweighted} is on our panel. Belief propagation with
ordered-statistics post-processing \cite{panteleev2021degenerate,roffe2020decoding} is standard for
quantum LDPC codes \cite{breuckmann2021qldpc,panteleev2022asymptotically,bravyi2024high}, and
search-based decoding \cite{tesseract} is a recent entrant. Colour codes
\cite{bombin2006topological,landahl2011colorcodes} have their own decoder \cite{chromobius}, whose
interface Section~\ref{sec:hunt:chromobius} reports on. What a surface code can and cannot be made to
do without extra resources also bounds what a decoder for it has to handle, and in our prior work
\cite{shen2026magicaxis} we give a conditional no-go for resource-free magic-axis measurement on a
static surface code. Surveys exist
\cite{iolius2024decoding,terhal2015quantum}, and the correlated, circuit-level and single-shot
variants \cite{higgott2023improved,higgott2023singleshot,fuentes2021degeneracy} are what make a
matching graph a lossy view of a detector error model.

Decoding is hard in general \cite{berlekamp1978inherent,iyer2015hardness,hsieh2011nphardness}, which
is why libraries here are approximate by design. Inside the graphlike scope of this paper an exact
solve stays polynomial, and Section~\ref{sec:cost} prices what it costs against the certificate. It is
also why decoders are increasingly evaluated as systems. Recent work covers real-time and
parallel-window decoding
\cite{skoric2023parallel,battistel2023realtime}, hardware decoders \cite{barber2025realtime},
predecoders and hierarchical schemes
\cite{delfosse2020hierarchical,chamberland2023techniques,smith2023local}, and learned decoders
\cite{torlai2017neural,bausch2024alphaqubit}. Every one is a program with an interface, and the
interface is what this paper tests.

\subsection{How a decoder is evaluated today}
\label{sec:rel:evaluation}

The established measure is the logical error rate. Sample errors from a noise model, decode, and count
how often the logical observable is wrong. It is the measure behind threshold estimates
\cite{wang2011surface,stephens2014fault}, behind resource projections
\cite{gidney2021factor,beverland2022assessing}, and behind the
hardware demonstrations that define progress in the field
\cite{google2023suppressing,google2025belowthreshold,bluvstein2024logical}.
Simulation at that scale is what Stim made routine \cite{gidney2021stim}, alongside the
fault-tolerance constructions it is used to study \cite{gidney2021honeycomb,horsman2012surface}.

Benchmarking a quantum computer is itself a studied problem, with volumetric frameworks
\cite{blumekohout2020volumetric}, randomised model circuits \cite{cross2019validating}, randomised
benchmarking \cite{magesan2011scalable}, capability measurement \cite{proctor2022measuring},
application-level suites \cite{lubinski2023application,tomesh2022supermarq} and a certification
literature \cite{eisert2020certification}. What all of these score is an outcome distribution. A
decoder library also returns a correction and a weight, and Section~\ref{sec:accessmap} measures what
scoring only the outcome leaves unexamined.

Measuring a deployed system against what it is documented to do is a stance this group has taken
before in other settings. Prior efforts include measurement-driven control of data-centre thermal
systems under distributional uncertainty \cite{shen2026gridthermal} and a network measurement of how
a policy instrument propagates through coupled electricity markets \cite{shen2026cbam}. Those systems
share no mechanism with a decoder library. What carries over is the order of work, which is to fix
what the system is supposed to guarantee before measuring whether it does.

\subsection{Testing without an oracle}
\label{sec:rel:oracle}

The general problem is old \cite{weyuker1982nontestable} and surveyed \cite{barr2015oracle}.
Metamorphic testing answers it by comparing outputs of related inputs, not any single output
\cite{chen2018metamorphic,segura2016survey,segura2020metamorphic}. It has been applied to compilers
\cite{le2014compiler,donaldson2017automated,chen2020survey} and machine-learning classifiers
\cite{xie2011testing}. It has also been applied to spacecraft data tools
\cite{lindvall2015metamorphic} and security \cite{chen2016metamorphic}. Program comprehension is
another application \cite{zhou2020metamorphic}. Our relations are of
established kinds --- input relabelling, equivalent re-encoding, and invariance of an argmin under
positive scaling. Akgün et al. apply this style to constraint solvers \cite{akgun2018metamorphic},
and Barus et al. a permutation relation to a greedy heuristic for an NP-hard problem
\cite{barus2011testing}.
What our prior-art search locates as this paper's contribution is the QEC-decoder instantiation:
these six presentation dimensions, assembled for that interface.

One piece of prior art is closer to this paper's subject than any of the above, and it is not
metamorphic. Maan and Paler enumerate error and syndrome pairs on small surface codes, run
minimum-weight matching and belief-propagation implementations through a common framework, apply the
returned corrections, and classify logical success or failure against an exhaustive lookup
\cite{maan2023testing}. That is correctness-oracle and differential testing of QEC decoder implementations, and it occupies
the ground a claim to be the first systematic testing of QEC decoders would need. What it is not is
capability-declaration-driven. It has no abstention. It has no gate on what a library declares. It is a study, not a reusable
cross-library contract.

The adjacent tradition is differential and fuzz testing of solvers and compilers
\cite{yang2011finding,brummayer2009fuzzing,winterer2020typeaware,mansur2020detecting},
with delta debugging to make a failure readable \cite{zeller2002simplifying,regehr2012testcase}, the
broader fuzzing literature \cite{godefroid2012sage,bohme2016coverage,manes2021art} and its evaluation
methodology \cite{klees2018evaluating}. Property-based testing supplies the execution model
\cite{claessen2000quickcheck,maciver2019hypothesis}.

\subsection{The closest prior art, and what is different here}
\label{sec:rel:closest}

Paxian and Biere supply the refutation principle behind Lemma~\ref{lem:certificate}, using it for
MaxSAT solvers \cite{paxian2023maxsat}: run a portfolio, record the smallest objective any solver
returned, and treat a claimed optimum above that value as an incorrect optimality result. A competing
feasible solution refutes optimality, with no complete oracle anywhere in the argument.

This paper contributes three distinctions. First, the source of the competing solution differs. Theirs comes from another solver. Ours comes
from the same decoder under a weight- and semantics-preserving relabelling of its input. The test
therefore needs one implementation, not a portfolio. Second, what a firing means differs. For them it is an incorrect optimality result; for us it is a
conformance failure only when the library declared exactness, and otherwise an instance-measured
observation about an approximate algorithm. Third, the two answers live in different index spaces and
must be carried into a common one and shown feasible before their weights may be compared. The proof
itself is one paragraph and we present it as one.

\subsection{Conformance testing and capability declarations}
\label{sec:rel:conformance}

ISO/IEC 9646 supplies the vocabulary of Section~\ref{sec:model}. ISO/IEC 9646
\cite{iso9646part1,iso9646part7} defines conformance testing against a declared capability set: an
implementation states what it supports, and only declared capabilities are tested, with an explicit
NO carrying its own consequences. The formal testing tradition around it is deep
\cite{lee1996principles,tretmans2008model,gaudel1995testing,utting2012taxonomy}.
Outside telecoms the same mechanism is live in cryptographic validation, where a client registers the
algorithms and capabilities it wants validated and the server generates vectors from that
registration \cite{acvp}. This paper contributes a capability vocabulary for decoders, the three
declaration states of Definition~\ref{def:declstate}, and the verdict gating that makes one
measurement a conformance failure for one library and an observation for another.

\subsection{Testing quantum software}
\label{sec:rel:qse}

Quantum software engineering has an emerging testing literature
\cite{miranskyy2019testing,ali2022when}: assertions
\cite{huang2019statistical}, property-based testing in Q\# \cite{honarvar2020property} and in
Qiskit \cite{pontolillo2026qucheck}, coverage criteria \cite{ali2021assessing}, search-based test generation \cite{wang2021generating}, differential
testing across stacks \cite{wang2021qdiff}, an empirical study of platform bugs
\cite{paltenghi2022bugs}, and metamorphic testing of a platform \cite{paltenghi2023morphq} and of
oracle programs \cite{abreu2022metamorphic,nguyen2026metamorphq}. Mutation tooling for quantum
programs exists as well \cite{mendiluze2021muskit,fortunato2022mutation,fortunato2022qmutpy}, as does
verified compilation \cite{hietala2021verified,amy2018towards}. Prior efforts on the systems side of
the same stack include our own work on communication-efficient distributed quantum computing
\cite{zhong2025uniq} and on privacy-preserving distributed quantum computing \cite{zhong2024dpdqc},
where the object of study is likewise a piece of software in the quantum stack and not a quantum
state. This literature targets quantum programs and platforms. A decoder is a classical program in the fault-tolerance stack. Its correctness statements are
combinatorial, not quantum. The literature above does not cover it.

\subsection{Mutation analysis, and what we use it for}
\label{sec:rel:mutation}

Mutation analysis is the standard way to ask what a test suite can see
\cite{demillo1978hints,jia2011analysis,papadakis2019mutation}, and it carries an established
literature on equivalent mutants \cite{kurtz2016analyzing,ammann2014establishing}, on how mutants
relate to real faults \cite{just2014mutants,andrews2005mutation,papadakis2018mutation}, and on what
test-suite metrics predict \cite{inozemtseva2014coverage,zhang2015assertions,li2009experimental}.
Section~\ref{sec:accessmap} uses mutation analysis to establish which field of the returned object
each evaluation method reads. That question has a deterministic answer, and the
mutants only exhibit it.

\subsection{Pre-registration and reproducibility}
\label{sec:rel:prereg}

The design discipline of this paper comes from outside computer science
\cite{nosek2018preregistration,chambers2013registered,munafo2017manifesto}, prompted by the same
concerns about analytic flexibility and selective reporting that motivated it there
\cite{ioannidis2005why,simmons2011false,baker2016reproducibility}. Computer systems research has its
own version of the argument \cite{collberg2016repeatability,krishnamurthi2015real}. Our statistical
apparatus is standard: Wilson intervals \cite{wilson1927probable,brown2001interval}, McNemar's paired
test \cite{mcnemar1947note}, a Jeffreys prior for the rate of a rare event
\cite{jeffreys1946invariant}, and Bonferroni or Benjamini--Hochberg control where a family of
comparisons is reported \cite{dunn1961multiple,benjamini1995controlling}.

\subsection{The position, stated once}
\label{sec:rel:position}

The relation set instantiates established testing techniques. MaxSAT fuzzing supplies the refutation
principle behind the certificate, and conformance testing and cryptographic validation supply the
capability-declaration mechanism. What this paper locates as new is the QEC-decoder instantiation:
these presentations, over these libraries, with a residual computed from the caller's own model, and
with the verdict layer gated on what each library declares. Two of the results are measurements over
a stated population. No returned correction failed to explain its syndrome in the caller's index
space over $717{,}244$ judged results across the panel, and bounded-distance correctness is
undeclared by every library on it.

\section{Conclusion}
\label{sec:conclusion}

A decoder's answer may expose a correction, a predicted observable vector and the weight of the
correction chosen, and none of the three is universal. Logical-error-rate evaluation validates the
induced logical class; it does not validate the edge-level correction beyond that class, and it
cannot validate the weight at all. This paper built a checker for the correction, proved the
certificate it rests on, ran it exhaustively over nine decoder configurations, and reported what came
back.

What came back is mostly a fact about documentation. Of $54$ capability questions a caller might put
to this panel's shipped documentation, four are answered. Bounded-distance correctness is the property
a caller most wants and no library claims it, declines it, or mentions it. A capability-gated
conformance regime therefore reaches $0.0\%$ of the cells where that property could otherwise have
been checked, on a population where its hypotheses hold nearly two-thirds of the time. That number is
a property of what the panel publishes and not of the checker. It measures how far such a regime
reaches into this panel today, and it moves the moment a library writes its guarantee down.

The rest is what a caller can use now. Over $717{,}244$ judged results, no malformed or
syndrome-infeasible correction was observed under the caller's own interpretation of the returned
indices. Neither of the two published exactness declarations was contradicted under any presentation
we constructed, in a pre-registered search of $160{,}000$ paired comparisons on much larger graphs, or
against an exact optimum on the $1599$ canonical syndromes our oracle settled --- and running that
last check is what exposed the oracle's own defect. Approximate solvers are presentation-sensitive: one moved
between the unique optimum and a $26\%$ heavier correction on the same syndrome under a different
detector numbering. Neither of them declares exactness, so that is an instance-measured observation
about an approximate algorithm and not a conformance failure. A fixed-presentation logical-error-rate benchmark cannot reveal that,
and even run under several presentations it stays blind to correction changes that preserve the
logical class. The certificate that finds the weight difference costs two decode calls and an addition, needs no
declaration from the library, and ships as a dependency.

One line of the map is the one we would put in front of a reader building the next decoder, and it
concerns the two components a benchmark does not read. On the correction-native path, which is the path
seven of the nine subjects put a caller on, an error rate detected $176$ of $219$ injected
correction corruptions at the pre-registered $\alpha$, $173$ family-wise, and missed the $43$ that
preserved logical parity; on the observable-native path it detected none of the $219$. The returned
weight was detected by nothing we compared: not by an error rate on either return path, not by
observable-width validation, not by exhaustive observable checking, not by exhaustive correction
checking against an exact oracle, and not by our own contract. That is $0$ of $72$ behaviours on one
population and $0$ of $36$ on the other.

It is not that either component cannot be checked. One summation over the caller's own weights
catches every weight corruption we injected, and this paper's own certificate verifier had been
performing that summation on every witness bundle while the comparison omitted it. A library that
returns a weight is returning a number none of the five methods we compared reads, and a caller can
start reading it today.

\bibliographystyle{ACM-Reference-Format}
\bibliography{refs}

@techreport{iso9646part1,
  author      = {{International Organization for Standardization} and {International Electrotechnical Commission}},
  title       = {Information Technology --- Open Systems Interconnection --- Conformance Testing Methodology and Framework --- Part 1: General Concepts},
  institution = {International Organization for Standardization},
  number      = {ISO/IEC 9646-1:1994},
  year        = {1994},
  url         = {https://www.iso.org/standard/17473.html},
  note        = {Also published as ITU-T Recommendation X.290}
}

@techreport{iso9646part7,
  author      = {{International Organization for Standardization} and {International Electrotechnical Commission}},
  title       = {Information Technology --- Open Systems Interconnection --- Conformance Testing Methodology and Framework --- Part 7: Implementation Conformance Statements},
  institution = {International Organization for Standardization},
  number      = {ISO/IEC 9646-7:1995},
  year        = {1995},
  url         = {https://www.iso.org/standard/3084.html},
  note        = {Also published as ITU-T Recommendation X.296}
}

@misc{acvp,
  author = {{National Institute of Standards and Technology}},
  title  = {Automated Cryptographic Validation Protocol},
  year   = {2026},
  url    = {https://pages.nist.gov/ACVP/},
  note   = {Accessed 2026-08-31}
}

@article{higgott2022pymatching,
  author = {Higgott, Oscar},
  title = {{PyMatching: A Python Package for Decoding Quantum Codes with Minimum-Weight Perfect Matching}},
  journal = {ACM Transactions on Quantum Computing},
  year = {2022},
  volume = {3},
  number = {3},
  pages = {1--16},
  doi = {10.1145/3505637}
}

@article{higgott2025sparseblossom,
  author = {Higgott, Oscar and Gidney, Craig},
  title = {{Sparse Blossom: correcting a million errors per core second with minimum-weight matching}},
  journal = {Quantum},
  year = {2025},
  volume = {9},
  pages = {1600},
  doi = {10.22331/q-2025-01-20-1600}
}

@inproceedings{fusionblossom,
  author    = {Wu, Yue and Zhong, Lin},
  title     = {{Fusion Blossom: Fast {MWPM} Decoders for {QEC}}},
  booktitle = {2023 IEEE International Conference on Quantum Computing and Engineering (QCE)},
  year      = {2023},
  pages     = {928--938},
  publisher = {IEEE},
  doi       = {10.1109/QCE57702.2023.00107}
}

@misc{wu2022unionfindweighted,
  author        = {Wu, Yue and Liyanage, Namitha and Zhong, Lin},
  title         = {An Interpretation of Union-Find Decoder on Weighted Graphs},
  year          = {2022},
  eprint        = {2211.03288},
  archivePrefix = {arXiv},
  primaryClass  = {quant-ph},
  doi           = {10.48550/arXiv.2211.03288}
}

@misc{tesseract,
  author        = {Aghababaie Beni, Laleh and Higgott, Oscar and Shutty, Noah},
  title         = {{Tesseract: A Search-Based Decoder for Quantum Error Correction}},
  year          = {2025},
  eprint        = {2503.10988},
  archivePrefix = {arXiv},
  primaryClass  = {quant-ph},
  doi           = {10.48550/arXiv.2503.10988}
}

@misc{chromobius,
  author        = {Gidney, Craig and Jones, Cody},
  title         = {New Circuits and an Open Source Decoder for the Color Code},
  year          = {2023},
  eprint        = {2312.08813},
  archivePrefix = {arXiv},
  primaryClass  = {quant-ph},
  doi           = {10.48550/arXiv.2312.08813}
}

@article{delfosse2021almostlinear,
  author = {Delfosse, Nicolas and Nickerson, Naomi H.},
  title = {{Almost-linear time decoding algorithm for topological codes}},
  journal = {Quantum},
  year = {2021},
  volume = {5},
  pages = {595},
  doi = {10.22331/q-2021-12-02-595}
}

@article{panteleev2021degenerate,
  author = {Panteleev, Pavel and Kalachev, Gleb},
  title = {{Degenerate Quantum LDPC Codes With Good Finite Length Performance}},
  journal = {Quantum},
  year = {2021},
  volume = {5},
  pages = {585},
  doi = {10.22331/q-2021-11-22-585}
}

@article{roffe2020decoding,
  author  = {Roffe, Joschka and White, David R. and Burton, Simon and Campbell, Earl},
  title   = {Decoding Across the Quantum Low-Density Parity-Check Code Landscape},
  journal = {Physical Review Research},
  year    = {2020},
  volume  = {2},
  number  = {4},
  pages   = {043423},
  doi     = {10.1103/PhysRevResearch.2.043423}
}

@article{higgott2023improved,
  author  = {Higgott, Oscar and Bohdanowicz, Thomas C. and Kubica, Aleksander and Flammia, Steven T. and Campbell, Earl T.},
  title   = {Improved Decoding of Circuit Noise and Fragile Boundaries of Tailored Surface Codes},
  journal = {Physical Review X},
  year    = {2023},
  volume  = {13},
  number  = {3},
  pages   = {031007},
  doi     = {10.1103/PhysRevX.13.031007}
}

@article{edmonds1965paths,
  author  = {Edmonds, Jack},
  title   = {Paths, Trees, and Flowers},
  journal = {Canadian Journal of Mathematics},
  year    = {1965},
  volume  = {17},
  pages   = {449--467},
  doi     = {10.4153/CJM-1965-045-4}
}

@article{kolmogorov2009blossomv,
  author = {Kolmogorov, Vladimir},
  title = {{Blossom V: a new implementation of a minimum cost perfect matching algorithm}},
  journal = {Mathematical Programming Computation},
  year = {2009},
  volume = {1},
  number = {1},
  pages = {43--67},
  doi = {10.1007/s12532-009-0002-8}
}

@article{gidney2021stim,
  author = {Gidney, Craig},
  title = {{Stim: a fast stabilizer circuit simulator}},
  journal = {Quantum},
  year = {2021},
  volume = {5},
  pages = {497},
  doi = {10.22331/q-2021-07-06-497}
}

@article{iolius2024decoding,
  author  = {deMarti iOlius, Antonio and Fuentes, Patricio and Or{\'u}s, Rom{\'a}n and Crespo, Pedro M. and Etxezarreta Martinez, Josu},
  title   = {Decoding Algorithms for Surface Codes},
  journal = {Quantum},
  year    = {2024},
  volume  = {8},
  pages   = {1498},
  doi     = {10.22331/q-2024-10-10-1498}
}

@article{higgott2023singleshot,
  author  = {Higgott, Oscar and Breuckmann, Nikolas P.},
  title   = {Improved Single-Shot Decoding of Higher-Dimensional Hypergraph-Product Codes},
  journal = {PRX Quantum},
  year    = {2023},
  volume  = {4},
  number  = {2},
  pages   = {020332},
  doi     = {10.1103/PRXQuantum.4.020332}
}

@article{skoric2023parallel,
  author  = {Skoric, Luka and Browne, Dan E. and Barnes, Kenton M. and Gillespie, Neil I. and Campbell, Earl T.},
  title   = {Parallel Window Decoding Enables Scalable Fault Tolerant Quantum Computation},
  journal = {Nature Communications},
  year    = {2023},
  volume  = {14},
  pages   = {7040},
  doi     = {10.1038/s41467-023-42482-1}
}

@article{battistel2023realtime,
  author  = {Battistel, Francesco and Chamberland, Christopher and Johar, Kauser and Overwater, Ramon W. J. and Sebastiano, Fabio and Skoric, Luka and Ueno, Yosuke and Usman, Muhammad},
  title   = {Real-Time Decoding for Fault-Tolerant Quantum Computing: Progress, Challenges and Outlook},
  journal = {Nano Futures},
  year    = {2023},
  volume  = {7},
  number  = {3},
  pages   = {032003},
  doi     = {10.1088/2399-1984/aceba6}
}

@article{barber2025realtime,
  author  = {Barber, Ben and Barnes, Kenton M. and Bialas, Tomasz and Bu{\u{g}}dayc{\i}, Okan and Campbell, Earl T. and Gillespie, Neil I. and Johar, Kauser and Rajan, Ram and Richardson, Adam W. and Skoric, Luka and Topal, Canberk and Turner, Mark L. and Ziad, Abbas B.},
  title   = {A Real-Time, Scalable, Fast and Resource-Efficient Decoder for a Quantum Computer},
  journal = {Nature Electronics},
  year    = {2025},
  volume  = {8},
  number  = {1},
  pages   = {84--91},
  doi     = {10.1038/s41928-024-01319-5}
}

@article{bausch2024alphaqubit,
  author  = {Bausch, Johannes and Senior, Andrew W. and Heras, Francisco J. H. and Edlich, Thomas and Davies, Alex and Newman, Michael and Jones, Cody and Satzinger, Kevin and Niu, Murphy Yuezhen and Blackwell, Sam and Holland, George and Kafri, Dvir and Atalaya, Juan and Gidney, Craig and Hassabis, Demis and Boixo, Sergio and Neven, Hartmut and Kohli, Pushmeet},
  title   = {Learning High-Accuracy Error Decoding for Quantum Processors},
  journal = {Nature},
  year    = {2024},
  volume  = {635},
  number  = {8040},
  pages   = {834--840},
  doi     = {10.1038/s41586-024-08148-8}
}

@article{torlai2017neural,
  author  = {Torlai, Giacomo and Melko, Roger G.},
  title   = {Neural Decoder for Topological Codes},
  journal = {Physical Review Letters},
  year    = {2017},
  volume  = {119},
  number  = {3},
  pages   = {030501},
  doi     = {10.1103/PhysRevLett.119.030501}
}

@article{chamberland2023techniques,
  author = {Chamberland, Christopher and Goncalves, Luis and Sivarajah, Prasahnt and Peterson, Eric and Grimberg, Sebastian},
  title = {{Techniques for combining fast local decoders with global decoders under circuit-level noise}},
  journal = {Quantum Science and Technology},
  year = {2023},
  volume = {8},
  number = {4},
  pages = {045011},
  doi = {10.1088/2058-9565/ace64d}
}

@article{smith2023local,
  author  = {Smith, Samuel C. and Brown, Benjamin J. and Bartlett, Stephen D.},
  title   = {Local Predecoder to Reduce the Bandwidth and Latency of Quantum Error Correction},
  journal = {Physical Review Applied},
  year    = {2023},
  volume  = {19},
  number  = {3},
  pages   = {034050},
  doi     = {10.1103/PhysRevApplied.19.034050}
}

@misc{delfosse2020hierarchical,
  author        = {Delfosse, Nicolas},
  title         = {Hierarchical Decoding to Reduce Hardware Requirements for Quantum Computing},
  year          = {2020},
  eprint        = {2001.11427},
  archivePrefix = {arXiv},
  primaryClass  = {quant-ph},
  doi           = {10.48550/arXiv.2001.11427}
}

@article{dennis2002topological,
  author = {Dennis, Eric and Kitaev, Alexei and Landahl, Andrew and Preskill, John},
  title = {{Topological quantum memory}},
  journal = {Journal of Mathematical Physics},
  year = {2002},
  volume = {43},
  number = {9},
  pages = {4452--4505},
  doi = {10.1063/1.1499754}
}

@article{kitaev2003faulttolerant,
  author = {Kitaev, A.Yu.},
  title = {{Fault-tolerant quantum computation by anyons}},
  journal = {Annals of Physics},
  year = {2003},
  volume = {303},
  number = {1},
  pages = {2--30},
  doi = {10.1016/s0003-4916(02)00018-0}
}

@article{fowler2012surface,
  author  = {Fowler, Austin G. and Mariantoni, Matteo and Martinis, John M. and Cleland, Andrew N.},
  title   = {Surface Codes: Towards Practical Large-Scale Quantum Computation},
  journal = {Physical Review A},
  year    = {2012},
  volume  = {86},
  number  = {3},
  pages   = {032324},
  doi     = {10.1103/PhysRevA.86.032324}
}

@article{bombin2006topological,
  author  = {Bombin, H. and Martin-Delgado, M. A.},
  title   = {Topological Quantum Distillation},
  journal = {Physical Review Letters},
  year    = {2006},
  volume  = {97},
  number  = {18},
  pages   = {180501},
  doi     = {10.1103/PhysRevLett.97.180501}
}

@misc{landahl2011colorcodes,
  author        = {Landahl, Andrew J. and Anderson, Jonas T. and Rice, Patrick R.},
  title         = {Fault-Tolerant Quantum Computing with Color Codes},
  year          = {2011},
  eprint        = {1108.5738},
  archivePrefix = {arXiv},
  primaryClass  = {quant-ph},
  doi           = {10.48550/arXiv.1108.5738}
}

@article{wang2011surface,
  author  = {Wang, David S. and Fowler, Austin G. and Hollenberg, Lloyd C. L.},
  title   = {Surface Code Quantum Computing with Error Rates over 1\%},
  journal = {Physical Review A},
  year    = {2011},
  volume  = {83},
  number  = {2},
  pages   = {020302},
  doi     = {10.1103/PhysRevA.83.020302}
}

@article{stephens2014fault,
  author  = {Stephens, Ashley M.},
  title   = {Fault-Tolerant Thresholds for Quantum Error Correction with the Surface Code},
  journal = {Physical Review A},
  year    = {2014},
  volume  = {89},
  number  = {2},
  pages   = {022321},
  doi     = {10.1103/PhysRevA.89.022321}
}

@article{terhal2015quantum,
  author = {Terhal, Barbara M.},
  title = {{Quantum error correction for quantum memories}},
  journal = {Reviews of Modern Physics},
  year = {2015},
  volume = {87},
  number = {2},
  pages = {307--346},
  doi = {10.1103/revmodphys.87.307}
}

@article{breuckmann2021qldpc,
  author  = {Breuckmann, Nikolas P. and Eberhardt, Jens Niklas},
  title   = {Quantum Low-Density Parity-Check Codes},
  journal = {PRX Quantum},
  year    = {2021},
  volume  = {2},
  number  = {4},
  pages   = {040101},
  doi     = {10.1103/PRXQuantum.2.040101}
}

@inproceedings{panteleev2022asymptotically,
  author = {Panteleev, Pavel and Kalachev, Gleb},
  title = {{Asymptotically good Quantum and locally testable classical LDPC codes}},
  booktitle = {Proceedings of the 54th Annual ACM SIGACT Symposium on Theory of Computing},
  year = {2022},
  pages = {375--388},
  publisher = {ACM},
  doi = {10.1145/3519935.3520017}
}

@article{bravyi2024high,
  author = {Bravyi, Sergey and Cross, Andrew W. and Gambetta, Jay M. and Maslov, Dmitri and Rall, Patrick and Yoder, Theodore J.},
  title = {{High-threshold and low-overhead fault-tolerant quantum memory}},
  journal = {Nature},
  year = {2024},
  volume = {627},
  number = {8005},
  pages = {778--782},
  doi = {10.1038/s41586-024-07107-7}
}

@article{horsman2012surface,
  author = {Horsman, Dominic and Fowler, Austin G and Devitt, Simon and Meter, Rodney Van},
  title = {{Surface code quantum computing by lattice surgery}},
  journal = {New Journal of Physics},
  year = {2012},
  volume = {14},
  number = {12},
  pages = {123011},
  doi = {10.1088/1367-2630/14/12/123011}
}

@misc{beverland2022assessing,
  author        = {Beverland, Michael E. and Murali, Prakash and Troyer, Matthias and Svore, Krysta M. and Hoefler, Torsten and Kliuchnikov, Vadym and Low, Guang Hao and Soeken, Mathias and Sundaram, Aarthi and Vaschillo, Alexander},
  title         = {Assessing Requirements to Scale to Practical Quantum Advantage},
  year          = {2022},
  eprint        = {2211.07629},
  archivePrefix = {arXiv},
  primaryClass  = {quant-ph},
  doi           = {10.48550/arXiv.2211.07629}
}

@article{gidney2021factor,
  author  = {Gidney, Craig and Eker{\aa}, Martin},
  title   = {How to Factor 2048 Bit {RSA} Integers in 8 Hours Using 20 Million Noisy Qubits},
  journal = {Quantum},
  year    = {2021},
  volume  = {5},
  pages   = {433},
  doi     = {10.22331/q-2021-04-15-433}
}

@article{gidney2021honeycomb,
  author = {Gidney, Craig and Newman, Michael and Fowler, Austin and Broughton, Michael},
  title = {{A Fault-Tolerant Honeycomb Memory}},
  journal = {Quantum},
  year = {2021},
  volume = {5},
  pages = {605},
  doi = {10.22331/q-2021-12-20-605}
}

@article{google2023suppressing,
  author  = {{Google Quantum AI}},
  title   = {Suppressing Quantum Errors by Scaling a Surface Code Logical Qubit},
  journal = {Nature},
  year    = {2023},
  volume  = {614},
  number  = {7949},
  pages   = {676--681},
  doi     = {10.1038/s41586-022-05434-1}
}

@article{google2025belowthreshold,
  author  = {{Google Quantum AI and Collaborators}},
  title   = {Quantum Error Correction Below the Surface Code Threshold},
  journal = {Nature},
  year    = {2025},
  volume  = {638},
  number  = {8052},
  pages   = {920--926},
  doi     = {10.1038/s41586-024-08449-y}
}

@article{bluvstein2024logical,
  author  = {Bluvstein, Dolev and Evered, Simon J. and Geim, Alexandra A. and Li, Sophie H. and Zhou, Hengyun and Manovitz, Tom and Ebadi, Sepehr and Cain, Madelyn and Kalinowski, Marcin and Hangleiter, Dominik and Bonilla Ataides, J. Pablo and Maskara, Nishad and others},
  title   = {Logical Quantum Processor Based on Reconfigurable Atom Arrays},
  journal = {Nature},
  year    = {2024},
  volume  = {626},
  number  = {7997},
  pages   = {58--65},
  doi     = {10.1038/s41586-023-06927-3}
}

@article{berlekamp1978inherent,
  author = {Berlekamp, E. and McEliece, R. and van Tilborg, H.},
  title = {{On the inherent intractability of certain coding problems (Corresp.)}},
  journal = {IEEE Transactions on Information Theory},
  year = {1978},
  volume = {24},
  number = {3},
  pages = {384--386},
  doi = {10.1109/tit.1978.1055873}
}

@article{iyer2015hardness,
  author = {Iyer, Pavithran and Poulin, David},
  title = {{Hardness of Decoding Quantum Stabilizer Codes}},
  journal = {IEEE Transactions on Information Theory},
  year = {2015},
  volume = {61},
  number = {9},
  pages = {5209--5223},
  doi = {10.1109/tit.2015.2422294}
}

@article{hsieh2011nphardness,
  author  = {Hsieh, Min-Hsiu and Le Gall, Fran{\c{c}}ois},
  title   = {{NP}-Hardness of Decoding Quantum Error-Correction Codes},
  journal = {Physical Review A},
  year    = {2011},
  volume  = {83},
  number  = {5},
  pages   = {052331},
  doi     = {10.1103/PhysRevA.83.052331}
}

@article{fuentes2021degeneracy,
  author = {Fuentes, Patricio and Etxezarreta Martinez, Josu and Crespo, Pedro M. and Garcia-Frias, Javier},
  title = {{Degeneracy and Its Impact on the Decoding of Sparse Quantum Codes}},
  journal = {IEEE Access},
  year = {2021},
  volume = {9},
  pages = {89093--89119},
  doi = {10.1109/access.2021.3089829}
}

@article{barr2015oracle,
  author = {Barr, Earl T. and Harman, Mark and McMinn, Phil and Shahbaz, Muzammil and Yoo, Shin},
  title = {{The Oracle Problem in Software Testing: A Survey}},
  journal = {IEEE Transactions on Software Engineering},
  year = {2015},
  volume = {41},
  number = {5},
  pages = {507--525},
  doi = {10.1109/tse.2014.2372785}
}

@article{weyuker1982nontestable,
  author = {Weyuker, E. J.},
  title = {{On Testing Non-Testable Programs}},
  journal = {The Computer Journal},
  year = {1982},
  volume = {25},
  number = {4},
  pages = {465--470},
  doi = {10.1093/comjnl/25.4.465}
}

@article{chen2018metamorphic,
  author = {Chen, Tsong Yueh and Kuo, Fei-Ching and Liu, Huai and Poon, Pak-Lok and Towey, Dave and Tse, T. H. and Zhou, Zhi Quan},
  title = {{Metamorphic Testing}},
  journal = {ACM Computing Surveys},
  year = {2018},
  volume = {51},
  number = {1},
  pages = {1--27},
  doi = {10.1145/3143561}
}

@article{segura2016survey,
  author = {Segura, Sergio and Fraser, Gordon and Sanchez, Ana B. and Ruiz-Cortes, Antonio},
  title = {{A Survey on Metamorphic Testing}},
  journal = {IEEE Transactions on Software Engineering},
  year = {2016},
  volume = {42},
  number = {9},
  pages = {805--824},
  doi = {10.1109/tse.2016.2532875}
}

@article{segura2020metamorphic,
  author = {Segura, Sergio and Towey, Dave and Zhou, Zhi Quan and Chen, Tsong Yueh},
  title = {{Metamorphic Testing: Testing the Untestable}},
  journal = {IEEE Software},
  year = {2020},
  volume = {37},
  number = {3},
  pages = {46--53},
  doi = {10.1109/ms.2018.2875968}
}

@article{zhou2020metamorphic,
  author = {Zhou, Zhi Quan and Sun, Liqun and Chen, Tsong Yueh and Towey, Dave},
  title = {{Metamorphic Relations for Enhancing System Understanding and Use}},
  journal = {IEEE Transactions on Software Engineering},
  year = {2020},
  volume = {46},
  number = {10},
  pages = {1120--1154},
  doi = {10.1109/tse.2018.2876433}
}

@article{chen2016metamorphic,
  author = {Chen, Tsong Yueh and Kuo, Fei-Ching and Ma, Wenjuan and Susilo, Willy and Towey, Dave and Voas, Jeffrey and Zhou, Zhi Quan},
  title = {{Metamorphic Testing for Cybersecurity}},
  journal = {Computer},
  year = {2016},
  volume = {49},
  number = {6},
  pages = {48--55},
  doi = {10.1109/mc.2016.176}
}

@article{xie2011testing,
  author = {Xie, Xiaoyuan and Ho, Joshua W.K. and Murphy, Christian and Kaiser, Gail and Xu, Baowen and Chen, Tsong Yueh},
  title = {{Testing and validating machine learning classifiers by metamorphic testing}},
  journal = {Journal of Systems and Software},
  year = {2011},
  volume = {84},
  number = {4},
  pages = {544--558},
  doi = {10.1016/j.jss.2010.11.920}
}

@inproceedings{lindvall2015metamorphic,
  author = {Lindvall, Mikael and Ganesan, Dharmalingam and Ardal, Ragnar and Wiegand, Robert E.},
  title = {{Metamorphic Model-Based Testing Applied on NASA DAT -- An Experience Report}},
  booktitle = {2015 IEEE/ACM 37th IEEE International Conference on Software Engineering},
  year = {2015},
  pages = {129--138},
  publisher = {IEEE},
  doi = {10.1109/icse.2015.348}
}

@article{donaldson2017automated,
  author = {Donaldson, Alastair F. and Evrard, Hugues and Lascu, Andrei and Thomson, Paul},
  title = {{Automated testing of graphics shader compilers}},
  journal = {Proceedings of the ACM on Programming Languages},
  year = {2017},
  volume = {1},
  number = {OOPSLA},
  pages = {1--29},
  doi = {10.1145/3133917}
}

@article{le2014compiler,
  author = {Le, Vu and Afshari, Mehrdad and Su, Zhendong},
  title = {{Compiler validation via equivalence modulo inputs}},
  journal = {ACM SIGPLAN Notices},
  year = {2014},
  volume = {49},
  number = {6},
  pages = {216--226},
  doi = {10.1145/2666356.2594334}
}

@inproceedings{yang2011finding,
  author    = {Yang, Xuejun and Chen, Yang and Eide, Eric and Regehr, John},
  title     = {Finding and Understanding Bugs in {C} Compilers},
  booktitle = {Proceedings of the 32nd ACM SIGPLAN Conference on Programming Language Design and Implementation},
  year      = {2011},
  pages     = {283--294},
  publisher = {ACM},
  doi       = {10.1145/1993498.1993532}
}

@article{chen2020survey,
  author = {Chen, Junjie and Patra, Jibesh and Pradel, Michael and Xiong, Yingfei and Zhang, Hongyu and Hao, Dan and Zhang, Lu},
  title = {{A Survey of Compiler Testing}},
  journal = {ACM Computing Surveys},
  year = {2020},
  volume = {53},
  number = {1},
  pages = {1--36},
  doi = {10.1145/3363562}
}

@inproceedings{brummayer2009fuzzing,
  author = {Brummayer, Robert and Biere, Armin},
  title = {{Fuzzing and delta-debugging SMT solvers}},
  booktitle = {Proceedings of the 7th International Workshop on Satisfiability Modulo Theories},
  year = {2009},
  pages = {1--5},
  publisher = {ACM},
  doi = {10.1145/1670412.1670413}
}

@article{paxian2023maxsat,
  author  = {Paxian, Tobias and Biere, Armin},
  title   = {{MaxSAT} Fuzzing and Delta Debugging},
  journal = {Journal of Artificial Intelligence Research},
  year    = {2026},
  volume  = {85},
  doi     = {10.1613/jair.1.19716}
}

@article{winterer2020typeaware,
  author = {Winterer, Dominik and Zhang, Chengyu and Su, Zhendong},
  title = {{On the unusual effectiveness of type-aware operator mutations for testing SMT solvers}},
  journal = {Proceedings of the ACM on Programming Languages},
  year = {2020},
  volume = {4},
  number = {OOPSLA},
  pages = {1--25},
  doi = {10.1145/3428261}
}

@inproceedings{mansur2020detecting,
  author = {Mansur, Muhammad Numair and Christakis, Maria and W{\"u}stholz, Valentin and Zhang, Fuyuan},
  title = {{Detecting critical bugs in SMT solvers using blackbox mutational fuzzing}},
  booktitle = {Proceedings of the 28th ACM Joint Meeting on European Software Engineering Conference and Symposium on the Foundations of Software Engineering},
  year = {2020},
  pages = {701--712},
  publisher = {ACM},
  doi = {10.1145/3368089.3409763}
}

@article{zeller2002simplifying,
  author = {Zeller, A. and Hildebrandt, R.},
  title = {{Simplifying and isolating failure-inducing input}},
  journal = {IEEE Transactions on Software Engineering},
  year = {2002},
  volume = {28},
  number = {2},
  pages = {183--200},
  doi = {10.1109/32.988498}
}

@inproceedings{regehr2012testcase,
  author    = {Regehr, John and Chen, Yang and Cuoq, Pascal and Eide, Eric and Ellison, Chucky and Yang, Xuejun},
  title     = {Test-Case Reduction for {C} Compiler Bugs},
  booktitle = {Proceedings of the 33rd ACM SIGPLAN Conference on Programming Language Design and Implementation},
  year      = {2012},
  pages     = {335--346},
  publisher = {ACM},
  doi       = {10.1145/2254064.2254104}
}

@inproceedings{claessen2000quickcheck,
  author = {Claessen, Koen and Hughes, John},
  title = {{QuickCheck}},
  booktitle = {Proceedings of the fifth ACM SIGPLAN international conference on Functional programming},
  year = {2000},
  pages = {268--279},
  publisher = {ACM},
  doi = {10.1145/351240.351266}
}

@article{maciver2019hypothesis,
  author = {MacIver, David and Hatfield-Dodds, Zac and Contributors, Many},
  title = {{Hypothesis: A new approach to property-based testing}},
  journal = {Journal of Open Source Software},
  year = {2019},
  volume = {4},
  number = {43},
  pages = {1891},
  doi = {10.21105/joss.01891}
}

@article{godefroid2012sage,
  author = {Godefroid, Patrice and Levin, Michael Y. and Molnar, David},
  title = {{SAGE: Whitebox Fuzzing for Security Testing}},
  journal = {Queue},
  year = {2012},
  volume = {10},
  number = {1},
  pages = {20--27},
  doi = {10.1145/2090147.2094081}
}

@inproceedings{klees2018evaluating,
  author = {Klees, George and Ruef, Andrew and Cooper, Benji and Wei, Shiyi and Hicks, Michael},
  title = {{Evaluating Fuzz Testing}},
  booktitle = {Proceedings of the 2018 ACM SIGSAC Conference on Computer and Communications Security},
  year = {2018},
  pages = {2123--2138},
  publisher = {ACM},
  doi = {10.1145/3243734.3243804}
}

@inproceedings{bohme2016coverage,
  author = {B{\"o}hme, Marcel and Pham, Van-Thuan and Roychoudhury, Abhik},
  title = {{Coverage-based Greybox Fuzzing as Markov Chain}},
  booktitle = {Proceedings of the 2016 ACM SIGSAC Conference on Computer and Communications Security},
  year = {2016},
  pages = {1032--1043},
  publisher = {ACM},
  doi = {10.1145/2976749.2978428}
}

@article{manes2021art,
  author = {Manes, Valentin J.M. and Han, HyungSeok and Han, Choongwoo and Cha, Sang Kil and Egele, Manuel and Schwartz, Edward J. and Woo, Maverick},
  title = {{The Art, Science, and Engineering of Fuzzing: A Survey}},
  journal = {IEEE Transactions on Software Engineering},
  year = {2021},
  volume = {47},
  number = {11},
  pages = {2312--2331},
  doi = {10.1109/tse.2019.2946563}
}

@article{demillo1978hints,
  author = {DeMillo, R.A. and Lipton, R.J. and Sayward, F.G.},
  title = {{Hints on Test Data Selection: Help for the Practicing Programmer}},
  journal = {Computer},
  year = {1978},
  volume = {11},
  number = {4},
  pages = {34--41},
  doi = {10.1109/c-m.1978.218136}
}

@article{jia2011analysis,
  author = {Jia, Yue and Harman, Mark},
  title = {{An Analysis and Survey of the Development of Mutation Testing}},
  journal = {IEEE Transactions on Software Engineering},
  year = {2011},
  volume = {37},
  number = {5},
  pages = {649--678},
  doi = {10.1109/tse.2010.62}
}

@incollection{papadakis2019mutation,
  author    = {Papadakis, Mike and Kintis, Marinos and Zhang, Jie M. and Jia, Yue and Le Traon, Yves and Harman, Mark},
  title     = {Mutation Testing Advances: An Analysis and Survey},
  booktitle = {Advances in Computers},
  editor    = {Memon, Atif M.},
  volume    = {112},
  year      = {2019},
  pages     = {275--378},
  publisher = {Elsevier},
  doi       = {10.1016/bs.adcom.2018.03.015}
}

@inproceedings{just2014mutants,
  author = {Just, Ren{\'e} and Jalali, Darioush and Inozemtseva, Laura and Ernst, Michael D. and Holmes, Reid and Fraser, Gordon},
  title = {{Are mutants a valid substitute for real faults in software testing?}},
  booktitle = {Proceedings of the 22nd ACM SIGSOFT International Symposium on Foundations of Software Engineering},
  year = {2014},
  pages = {654--665},
  publisher = {ACM},
  doi = {10.1145/2635868.2635929}
}

@inproceedings{andrews2005mutation,
  author    = {Andrews, James H. and Briand, Lionel C. and Labiche, Yvan},
  title     = {Is Mutation an Appropriate Tool for Testing Experiments?},
  booktitle = {Proceedings of the 27th International Conference on Software Engineering (ICSE 2005)},
  year      = {2005},
  pages     = {402--411},
  publisher = {IEEE},
  doi       = {10.1109/ICSE.2005.1553583}
}

@inproceedings{papadakis2018mutation,
  author = {Papadakis, Mike and Shin, Donghwan and Yoo, Shin and Bae, Doo-Hwan},
  title = {{Are mutation scores correlated with real fault detection?}},
  booktitle = {Proceedings of the 40th International Conference on Software Engineering},
  year = {2018},
  pages = {537--548},
  publisher = {ACM},
  doi = {10.1145/3180155.3180183}
}

@inproceedings{kurtz2016analyzing,
  author = {Kurtz, Bob and Ammann, Paul and Offutt, Jeff and Delamaro, M{\'a}rcio E. and Kurtz, Mariet and G{\"o}k{\c{c}}e, Nida},
  title = {{Analyzing the validity of selective mutation with dominator mutants}},
  booktitle = {Proceedings of the 2016 24th ACM SIGSOFT International Symposium on Foundations of Software Engineering},
  year = {2016},
  pages = {571--582},
  publisher = {ACM},
  doi = {10.1145/2950290.2950322}
}

@inproceedings{ammann2014establishing,
  author = {Ammann, Paul and Delamaro, Marcio Eduardo and Offutt, Jeff},
  title = {{Establishing Theoretical Minimal Sets of Mutants}},
  booktitle = {2014 IEEE Seventh International Conference on Software Testing, Verification and Validation},
  year = {2014},
  pages = {21--30},
  publisher = {IEEE},
  doi = {10.1109/icst.2014.13}
}

@inproceedings{li2009experimental,
  author = {Li, Nan and Praphamontripong, Upsorn and Offutt, Jeff},
  title = {{An Experimental Comparison of Four Unit Test Criteria: Mutation, Edge-Pair, All-Uses and Prime Path Coverage}},
  booktitle = {2009 International Conference on Software Testing, Verification, and Validation Workshops},
  year = {2009},
  pages = {220--229},
  publisher = {IEEE},
  doi = {10.1109/icstw.2009.30}
}

@inproceedings{inozemtseva2014coverage,
  author = {Inozemtseva, Laura and Holmes, Reid},
  title = {{Coverage is not strongly correlated with test suite effectiveness}},
  booktitle = {Proceedings of the 36th International Conference on Software Engineering},
  year = {2014},
  pages = {435--445},
  publisher = {ACM},
  doi = {10.1145/2568225.2568271}
}

@inproceedings{zhang2015assertions,
  author = {Zhang, Yucheng and Mesbah, Ali},
  title = {{Assertions are strongly correlated with test suite effectiveness}},
  booktitle = {Proceedings of the 2015 10th Joint Meeting on Foundations of Software Engineering},
  year = {2015},
  pages = {214--224},
  publisher = {ACM},
  doi = {10.1145/2786805.2786858}
}

@article{lee1996principles,
  author = {Lee, D. and Yannakakis, M.},
  title = {{Principles and methods of testing finite state machines-a survey}},
  journal = {Proceedings of the IEEE},
  year = {1996},
  volume = {84},
  number = {8},
  pages = {1090--1123},
  doi = {10.1109/5.533956}
}

@incollection{tretmans2008model,
  author    = {Tretmans, Jan},
  title     = {Model Based Testing with Labelled Transition Systems},
  booktitle = {Formal Methods and Testing},
  editor    = {Hierons, Robert M. and Bowen, Jonathan P. and Harman, Mark},
  series    = {Lecture Notes in Computer Science},
  volume    = {4949},
  year      = {2008},
  pages     = {1--38},
  publisher = {Springer},
  doi       = {10.1007/978-3-540-78917-8_1}
}

@inproceedings{gaudel1995testing,
  author    = {Gaudel, Marie-Claude},
  title     = {Testing Can Be Formal, Too},
  booktitle = {TAPSOFT '95: Theory and Practice of Software Development},
  editor    = {Mosses, Peter D. and Nielsen, Mogens and Schwartzbach, Michael I.},
  series    = {Lecture Notes in Computer Science},
  volume    = {915},
  year      = {1995},
  pages     = {82--96},
  publisher = {Springer},
  doi       = {10.1007/3-540-59293-8_188}
}

@article{utting2012taxonomy,
  author  = {Utting, Mark and Pretschner, Alexander and Legeard, Bruno},
  title   = {A Taxonomy of Model-Based Testing Approaches},
  journal = {Software Testing, Verification and Reliability},
  year    = {2012},
  volume  = {22},
  number  = {5},
  pages   = {297--312},
  doi     = {10.1002/stvr.456}
}

@inproceedings{miranskyy2019testing,
  author = {Miranskyy, Andriy and Zhang, Lei},
  title = {{On Testing Quantum Programs}},
  booktitle = {2019 IEEE/ACM 41st International Conference on Software Engineering: New Ideas and Emerging Results (ICSE-NIER)},
  year = {2019},
  pages = {57--60},
  publisher = {IEEE},
  doi = {10.1109/icse-nier.2019.00023}
}

@article{ali2022when,
  author = {Ali, Shaukat and Yue, Tao and Abreu, Rui},
  title = {{When software engineering meets quantum computing}},
  journal = {Communications of the ACM},
  year = {2022},
  volume = {65},
  number = {4},
  pages = {84--88},
  doi = {10.1145/3512340}
}

@inproceedings{huang2019statistical,
  author = {Huang, Yipeng and Martonosi, Margaret},
  title = {{Statistical assertions for validating patterns and finding bugs in quantum programs}},
  booktitle = {Proceedings of the 46th International Symposium on Computer Architecture},
  year = {2019},
  pages = {541--553},
  publisher = {ACM},
  doi = {10.1145/3307650.3322213}
}

@inproceedings{honarvar2020property,
  author = {Honarvar, Shahin and Mousavi, Mohammad Reza and Nagarajan, Rajagopal},
  title = {{Property-based Testing of Quantum Programs in Q\#}},
  booktitle = {Proceedings of the IEEE/ACM 42nd International Conference on Software Engineering Workshops},
  year = {2020},
  pages = {430--435},
  publisher = {ACM},
  doi = {10.1145/3387940.3391459}
}

@inproceedings{ali2021assessing,
  author = {Ali, Shaukat and Arcaini, Paolo and Wang, Xinyi and Yue, Tao},
  title = {{Assessing the Effectiveness of Input and Output Coverage Criteria for Testing Quantum Programs}},
  booktitle = {2021 14th IEEE Conference on Software Testing, Verification and Validation (ICST)},
  year = {2021},
  pages = {13--23},
  publisher = {IEEE},
  doi = {10.1109/icst49551.2021.00014}
}

@inproceedings{wang2021generating,
  author    = {Wang, Xinyi and Arcaini, Paolo and Yue, Tao and Ali, Shaukat},
  title     = {Generating Failing Test Suites for Quantum Programs With Search},
  booktitle = {Search-Based Software Engineering},
  editor    = {O'Reilly, Una-May and Devroey, Xavier},
  series    = {Lecture Notes in Computer Science},
  volume    = {12914},
  year      = {2021},
  pages     = {9--25},
  publisher = {Springer},
  doi       = {10.1007/978-3-030-88106-1_2}
}

@article{paltenghi2022bugs,
  author = {Paltenghi, Matteo and Pradel, Michael},
  title = {{Bugs in Quantum computing platforms: an empirical study}},
  journal = {Proceedings of the ACM on Programming Languages},
  year = {2022},
  volume = {6},
  number = {OOPSLA1},
  pages = {1--27},
  doi = {10.1145/3527330}
}

@inproceedings{paltenghi2023morphq,
  author = {Paltenghi, Matteo and Pradel, Michael},
  title = {{MorphQ: Metamorphic Testing of the Qiskit Quantum Computing Platform}},
  booktitle = {2023 IEEE/ACM 45th International Conference on Software Engineering (ICSE)},
  year = {2023},
  pages = {2413--2424},
  publisher = {IEEE},
  doi = {10.1109/icse48619.2023.00202}
}

@inproceedings{wang2021qdiff,
  author = {Wang, Jiyuan and Zhang, Qian and Xu, Guoqing Harry and Kim, Miryung},
  title = {{QDiff: Differential Testing of Quantum Software Stacks}},
  booktitle = {2021 36th IEEE/ACM International Conference on Automated Software Engineering (ASE)},
  year = {2021},
  pages = {692--704},
  publisher = {IEEE},
  doi = {10.1109/ase51524.2021.9678792}
}

@inproceedings{mendiluze2021muskit,
  author = {Mendiluze, Enaut and Ali, Shaukat and Arcaini, Paolo and Yue, Tao},
  title = {{Muskit: A Mutation Analysis Tool for Quantum Software Testing}},
  booktitle = {2021 36th IEEE/ACM International Conference on Automated Software Engineering (ASE)},
  year = {2021},
  pages = {1266--1270},
  publisher = {IEEE},
  doi = {10.1109/ase51524.2021.9678563}
}

@article{fortunato2022mutation,
  author  = {Fortunato, Daniel and Campos, Jos{\'e} and Abreu, Rui},
  title   = {Mutation Testing of Quantum Programs: A Case Study With {Qiskit}},
  journal = {IEEE Transactions on Quantum Engineering},
  year    = {2022},
  volume  = {3},
  pages   = {1--17},
  doi     = {10.1109/TQE.2022.3195061}
}

@inproceedings{fortunato2022qmutpy,
  author = {Fortunato, Daniel and Campos, Jos{\'e} and Abreu, Rui},
  title = {{QMutPy: a mutation testing tool for Quantum algorithms and applications in Qiskit}},
  booktitle = {Proceedings of the 31st ACM SIGSOFT International Symposium on Software Testing and Analysis},
  year = {2022},
  pages = {797--800},
  publisher = {ACM},
  doi = {10.1145/3533767.3543296}
}

@article{hietala2021verified,
  author = {Hietala, Kesha and Rand, Robert and Hung, Shih-Han and Wu, Xiaodi and Hicks, Michael},
  title = {{A verified optimizer for Quantum circuits}},
  journal = {Proceedings of the ACM on Programming Languages},
  year = {2021},
  volume = {5},
  number = {POPL},
  pages = {1--29},
  doi = {10.1145/3434318}
}

@article{amy2018towards,
  author = {Amy, Matthew},
  title = {{Towards Large-scale Functional Verification of Universal Quantum Circuits}},
  journal = {Electronic Proceedings in Theoretical Computer Science},
  year = {2019},
  volume = {287},
  pages = {1--21},
  doi = {10.4204/eptcs.287.1}
}

@article{cross2019validating,
  author  = {Cross, Andrew W. and Bishop, Lev S. and Sheldon, Sarah and Nation, Paul D. and Gambetta, Jay M.},
  title   = {Validating Quantum Computers Using Randomized Model Circuits},
  journal = {Physical Review A},
  year    = {2019},
  volume  = {100},
  number  = {3},
  pages   = {032328},
  doi     = {10.1103/PhysRevA.100.032328}
}

@article{blumekohout2020volumetric,
  author = {Blume-Kohout, Robin and Young, Kevin C.},
  title = {{A volumetric framework for quantum computer benchmarks}},
  journal = {Quantum},
  year = {2020},
  volume = {4},
  pages = {362},
  doi = {10.22331/q-2020-11-15-362}
}

@article{lubinski2023application,
  author = {Lubinski, Thomas and Johri, Sonika and Varosy, Paul and Coleman, Jeremiah and Zhao, Luning and Necaise, Jason and Baldwin, Charles H. and Mayer, Karl and Proctor, Timothy},
  title = {{Application-Oriented Performance Benchmarks for Quantum Computing}},
  journal = {IEEE Transactions on Quantum Engineering},
  year = {2023},
  volume = {4},
  pages = {1--32},
  doi = {10.1109/tqe.2023.3253761}
}

@inproceedings{tomesh2022supermarq,
  author = {Tomesh, Teague and Gokhale, Pranav and Omole, Victory and Ravi, Gokul Subramanian and Smith, Kaitlin N. and Viszlai, Joshua and Wu, Xin-Chuan and Hardavellas, Nikos and Martonosi, Margaret R. and Chong, Frederic T.},
  title = {{SupermarQ: A Scalable Quantum Benchmark Suite}},
  booktitle = {2022 IEEE International Symposium on High-Performance Computer Architecture (HPCA)},
  year = {2022},
  pages = {587--603},
  publisher = {IEEE},
  doi = {10.1109/hpca53966.2022.00050}
}

@article{proctor2022measuring,
  author  = {Proctor, Timothy and Rudinger, Kenneth and Young, Kevin and Nielsen, Erik and Blume-Kohout, Robin},
  title   = {Measuring the Capabilities of Quantum Computers},
  journal = {Nature Physics},
  year    = {2022},
  volume  = {18},
  number  = {1},
  pages   = {75--79},
  doi     = {10.1038/s41567-021-01409-7}
}

@article{eisert2020certification,
  author = {Eisert, Jens and Hangleiter, Dominik and Walk, Nathan and Roth, Ingo and Markham, Damian and Parekh, Rhea and Chabaud, Ulysse and Kashefi, Elham},
  title = {{Quantum certification and benchmarking}},
  journal = {Nature Reviews Physics},
  year = {2020},
  volume = {2},
  number = {7},
  pages = {382--390},
  doi = {10.1038/s42254-020-0186-4}
}

@article{magesan2011scalable,
  author  = {Magesan, Easwar and Gambetta, J. M. and Emerson, Joseph},
  title   = {Scalable and Robust Randomized Benchmarking of Quantum Processes},
  journal = {Physical Review Letters},
  year    = {2011},
  volume  = {106},
  number  = {18},
  pages   = {180504},
  doi     = {10.1103/PhysRevLett.106.180504}
}

@article{collberg2016repeatability,
  author = {Collberg, Christian and Proebsting, Todd A.},
  title = {{Repeatability in computer systems research}},
  journal = {Communications of the ACM},
  year = {2016},
  volume = {59},
  number = {3},
  pages = {62--69},
  doi = {10.1145/2812803}
}

@article{krishnamurthi2015real,
  author = {Krishnamurthi, Shriram and Vitek, Jan},
  title = {{The real software crisis}},
  journal = {Communications of the ACM},
  year = {2015},
  volume = {58},
  number = {3},
  pages = {34--36},
  doi = {10.1145/2658987}
}

@article{baker2016reproducibility,
  author = {Baker, Monya},
  title = {{1,500 scientists lift the lid on reproducibility}},
  journal = {Nature},
  year = {2016},
  volume = {533},
  number = {7604},
  pages = {452--454},
  doi = {10.1038/533452a}
}

@article{ioannidis2005why,
  author = {Ioannidis, John P. A.},
  title = {{Why Most Published Research Findings Are False}},
  journal = {PLoS Medicine},
  year = {2005},
  volume = {2},
  number = {8},
  pages = {e124},
  doi = {10.1371/journal.pmed.0020124}
}

@article{nosek2018preregistration,
  author  = {Nosek, Brian A. and Ebersole, Charles R. and DeHaven, Alexander C. and Mellor, David T.},
  title   = {The Preregistration Revolution},
  journal = {Proceedings of the National Academy of Sciences},
  year    = {2018},
  volume  = {115},
  number  = {11},
  pages   = {2600--2606},
  doi     = {10.1073/pnas.1708274114}
}

@article{chambers2013registered,
  author = {Chambers, Christopher D.},
  title = {{Registered Reports: A new publishing initiative at Cortex}},
  journal = {Cortex},
  year = {2013},
  volume = {49},
  number = {3},
  pages = {609--610},
  doi = {10.1016/j.cortex.2012.12.016}
}

@article{munafo2017manifesto,
  author  = {Munaf{\`o}, Marcus R. and Nosek, Brian A. and Bishop, Dorothy V. M. and Button, Katherine S. and Chambers, Christopher D. and Percie du Sert, Nathalie and Simonsohn, Uri and Wagenmakers, Eric-Jan and Ware, Jennifer J. and Ioannidis, John P. A.},
  title   = {A Manifesto for Reproducible Science},
  journal = {Nature Human Behaviour},
  year    = {2017},
  volume  = {1},
  number  = {1},
  pages   = {0021},
  doi     = {10.1038/s41562-016-0021}
}

@article{simmons2011false,
  author  = {Simmons, Joseph P. and Nelson, Leif D. and Simonsohn, Uri},
  title   = {False-Positive Psychology: Undisclosed Flexibility in Data Collection and Analysis Allows Presenting Anything as Significant},
  journal = {Psychological Science},
  year    = {2011},
  volume  = {22},
  number  = {11},
  pages   = {1359--1366},
  doi     = {10.1177/0956797611417632}
}

@article{wilson1927probable,
  author = {Wilson, Edwin B.},
  title = {{Probable Inference, the Law of Succession, and Statistical Inference}},
  journal = {Journal of the American Statistical Association},
  year = {1927},
  volume = {22},
  number = {158},
  pages = {209--212},
  doi = {10.1080/01621459.1927.10502953}
}

@article{brown2001interval,
  author  = {Brown, Lawrence D. and Cai, T. Tony and DasGupta, Anirban},
  title   = {Interval Estimation for a Binomial Proportion},
  journal = {Statistical Science},
  year    = {2001},
  volume  = {16},
  number  = {2},
  pages   = {101--133},
  doi     = {10.1214/ss/1009213286}
}

@article{mcnemar1947note,
  author = {McNemar, Quinn},
  title = {{Note on the Sampling Error of the Difference Between Correlated Proportions or Percentages}},
  journal = {Psychometrika},
  year = {1947},
  volume = {12},
  number = {2},
  pages = {153--157},
  doi = {10.1007/bf02295996}
}

@article{jeffreys1946invariant,
  author = {Jeffreys, Harold},
  title = {{An invariant form for the prior probability in estimation problems}},
  journal = {Proceedings of the Royal Society of London. Series A. Mathematical and Physical Sciences},
  year = {1946},
  volume = {186},
  number = {1007},
  pages = {453--461},
  doi = {10.1098/rspa.1946.0056}
}

@article{dunn1961multiple,
  author = {Dunn, Olive Jean},
  title = {{Multiple Comparisons among Means}},
  journal = {Journal of the American Statistical Association},
  year = {1961},
  volume = {56},
  number = {293},
  pages = {52--64},
  doi = {10.1080/01621459.1961.10482090}
}

@article{benjamini1995controlling,
  author = {Benjamini, Yoav and Hochberg, Yosef},
  title = {{Controlling the False Discovery Rate: A Practical and Powerful Approach to Multiple Testing}},
  journal = {Journal of the Royal Statistical Society Series B: Statistical Methodology},
  year = {1995},
  volume = {57},
  number = {1},
  pages = {289--300},
  doi = {10.1111/j.2517-6161.1995.tb02031.x}
}

@inproceedings{akgun2018metamorphic,
  author    = {Akg{\"u}n, {\"O}zg{\"u}r and Gent, Ian P. and Jefferson, Christopher and Miguel, Ian and Nightingale, Peter},
  title     = {Metamorphic Testing of Constraint Solvers},
  booktitle = {Principles and Practice of Constraint Programming -- CP 2018},
  editor    = {Hooker, John},
  series    = {Lecture Notes in Computer Science},
  volume    = {11008},
  year      = {2018},
  pages     = {727--736},
  publisher = {Springer},
  doi       = {10.1007/978-3-319-98334-9_46}
}

@inproceedings{abreu2022metamorphic,
  author = {Abreu, Rui and Fernandes, Jo{\~a}o Paulo and Llana, Luis and Tavares, Guilherme},
  title = {{Metamorphic testing of oracle quantum programs}},
  booktitle = {Proceedings of the 3rd International Workshop on Quantum Software Engineering},
  year = {2022},
  pages = {16--23},
  publisher = {ACM},
  doi = {10.1145/3528230.3529189}
}

@misc{nguyen2026metamorphq,
  author        = {Nguyen, Ngoc Nhi and Le, John and Vu, Thai T. and Nguyen, Thi Thuy Nga and Shen, Jun},
  title         = {{MetaMorphQ: Physics-Based Metamorphic Testing of Variational Quantum Circuits}},
  year          = {2026},
  eprint        = {2606.28742},
  archivePrefix = {arXiv},
  primaryClass  = {quant-ph},
  doi           = {10.48550/arXiv.2606.28742}
}

@inproceedings{barus2011testing,
  author    = {Barus, A. C. and Chen, T. Y. and Grant, D. and Kuo, F.-C. and Lau, M. F.},
  title     = {Testing of Heuristic Methods: A Case Study of Greedy Algorithm},
  booktitle = {Software Engineering Techniques},
  editor    = {Huzar, Zbigniew and Koci, Radek and Meyer, Bertrand and Walter, Bartosz and Zendulka, Jaroslav},
  series    = {Lecture Notes in Computer Science},
  volume    = {4980},
  year      = {2011},
  pages     = {246--260},
  publisher = {Springer},
  doi       = {10.1007/978-3-642-22386-0_19}
}

@inproceedings{maan2023testing,
  author = {Maan, Arshpreet Singh and Paler, Alexandru},
  title = {{Testing the Accuracy of Surface Code Decoders}},
  booktitle = {2023 IEEE International Conference on Rebooting Computing (ICRC)},
  year = {2023},
  pages = {1--5},
  publisher = {IEEE},
  doi = {10.1109/icrc60800.2023.10386986}
}

@misc{shen2026magicaxis,
  author        = {Jiachen Shen and Hui Zhong},
  title         = {{A conditional no-go for resource-free magic-axis measurement on a static surface code}},
  year          = {2026},
  eprint        = {2607.16968},
  archiveprefix = {arXiv},
  verified      = {arXiv abstract page resolved 2026-09-01; title and both authors match the record}
}

@misc{zhong2025uniq,
  author        = {Hui Zhong and Jiachen Shen and Lei Fan and Xinyue Zhang and Hao Wang and Miao Pan and Zhu Han},
  title         = {{UNIQ: Communication-Efficient Distributed Quantum Computing via Unified Nonlinear Integer Programming}},
  year          = {2025},
  eprint        = {2512.00401},
  archiveprefix = {arXiv},
  verified      = {arXiv abstract page resolved 2026-09-01; title and seven-author list match the record}
}

@misc{zhong2024dpdqc,
  author        = {Hui Zhong and Keyi Ju and Jiachen Shen and Xinyue Zhang and Xiaoqi Qin and Tomoaki Ohtsuki and Miao Pan and Zhu Han},
  title         = {{Differential Privacy Preserving Distributed Quantum Computing}},
  year          = {2024},
  eprint        = {2412.12387},
  archiveprefix = {arXiv},
  verified      = {arXiv abstract page resolved 2026-09-01; title and eight-author list match the record}
}

@misc{shen2026gridthermal,
  author        = {Jiachen Shen and Jian Shi and Yijie Yang and Chenye Wu and Dan Wang and Ju Bin Song and Zhu Han},
  title         = {{Grid-Interactive Thermal Management of AI Data Centers via Contextual Distributionally Robust Optimization}},
  year          = {2026},
  eprint        = {2607.00099},
  archiveprefix = {arXiv},
  verified      = {arXiv abstract page resolved 2026-09-01; title and seven-author list match the record}
}

@misc{shen2026cbam,
  author        = {Jiachen Shen and Jian Shi and Dan Wang and Han Zhu},
  title         = {{Will the Carbon Border Adjustment Mechanism Impact European Electricity Prices? A GNN-Based Network Analysis}},
  year          = {2026},
  eprint        = {2605.03304},
  archiveprefix = {arXiv},
  verified      = {arXiv abstract page resolved 2026-09-01; title and four-author list match the record}
}

@article{pontolillo2026qucheck,
  author  = {Pontolillo, Gabriel and Mousavi, Mohammad Reza and Grzesiuk, Marek},
  title   = {{QuCheck: A Property-based Testing Framework for Quantum Programs in Qiskit}},
  journal = {ACM Transactions on Quantum Computing},
  year    = {2026},
  volume  = {7},
  number  = {4},
  articleno = {24},
  numpages = {32},
  publisher = {ACM},
  doi     = {10.1145/3815169},
  verified = {Crossref resolved 2026-09-01; title, three-author list, journal, volume and issue match the record}
}

\appendix
\section{Definitions, collected}
\label{app:definitions}

The main text introduces each definition where it is used. This appendix collects them in one place,
in the form the implementation actually enforces, so that a reader checking a claim against the code
has one reference to work from.

\subsection{The object under test}

\begin{definition}[Detector error model]
\label{app:def:dem}
A \emph{detector error model} $M$ over a detector set $D$ and an observable set $L$ is a finite
sequence of mechanisms $M_0, \dots, M_{m-1}$. Each $M_i$ carries a probability $p_i \in (0, 1/2)$, a
set $\mathrm{det}(M_i) \subseteq D$, and a set $\mathrm{obs}(M_i) \subseteq L$. The \emph{index
space} of $M$ is $\{0, \dots, m-1\}$ in the order the caller supplied, and every returned index in
this paper is in that space.
\end{definition}

\begin{definition}[Weighted decoding graph]
\label{app:def:graph}
$M$ is \emph{graphlike} when $1 \le |\mathrm{det}(M_i)| \le 2$ for every $i$; a mechanism flipping
no detector is outside the fragment of Definition~\ref{def:scope} and is removed before an adapter
sees the model. Its decoding graph is $\graph = (D \cup \{b\}, E, w, \lambda)$, where $b$ is a
virtual boundary vertex, mechanism $i$ contributes an edge on $\mathrm{det}(M_i)$ if that set has two
elements and an edge from its single detector to $b$ otherwise, $w_i = \log\!\big((1 - p_i)/p_i\big)
> 0$ is its log-likelihood-ratio weight, and $\lambda(e_i) \in \mathbb{F}_2^{|L|}$ is the indicator
of $\mathrm{obs}(M_i)$. Edges carry identities, so parallel mechanisms on one endpoint pair remain
distinct. For $S \subseteq E$, $\bd_D S$ is the set of detectors incident to an odd number of edges of
$S$, and $\wt{S} = \sum_{e \in S} w_e$. A correction $\corr$ \emph{explains} $\syn$ when
$\bd_D \corr = \syn$.
\end{definition}

Definition~\ref{def:subject} of the main text fixes what a subject exposes: $\mathrm{dec\_corr}$
returning a set of mechanism indices, $\mathrm{dec\_obs}$ returning a predicted observable vector, and
optionally the total weight of the chosen correction. Definition~\ref{def:scope} fixes that the whole
framework is defined over $M$ and reads no topology, no gate set and no device description.

\subsection{Capabilities and declarations}

\begin{definition}[Capabilities and flags]
\label{app:def:caps}
\textbf{$L_0$, mandatory correction-interface conformance.} A subject exposing $\mathrm{dec\_corr}$
must return a duplicate-free list $\corr$ of in-range mechanism indices satisfying
$\bd_D \corr = \syn$ in the caller's index space. $L_0$ is incurred by exposing
$\mathrm{dec\_corr}$; it is neither declared nor declinable, and it asserts nothing about optimality,
bounded-distance correctness, determinism or maximum-likelihood decoding. Both a malformed return and
a well-formed return that does not explain the syndrome are $L_0$ failures. Every subject exposing
$\mathrm{dec\_corr}$ incurs it; a subject exposing only $\mathrm{dec\_obs}$ has no correction for it
to constrain, which is the case Section~\ref{sec:hunt:chromobius} reports on.

\textbf{The optional capabilities.} $L_1$: bounded-distance correctness, that every error of weight
at most the correction radius is mapped into the identity logical class, in the ordered sense of
Definition~\ref{app:def:bd}. $L_2$: exact minimum-weight optimisation. $L_3$: exact
maximum-likelihood decoding. The flags are $\mathit{det}$, a determinism guarantee; $\mathit{rand}$, a
declared randomised schedule; and $\mathit{int\text{-}weights}$, integer internal weights. All three
capabilities and all three flags carry the declaration states of Definition~\ref{def:declstate}.

Of the optional capabilities, only $L_1$ and $L_2$ are related by a proved statement, and
Theorem~\ref{thm:incomparable} shows they do not form a chain over the general weight domain;
Remark~\ref{app:rem:domain} gives the restricted domain on which one direction holds. No implication
involving $L_3$ is claimed or used.
\end{definition}

Definition~\ref{def:declstate} fixes the three declaration states $\yes$, $\no$ and $\silent$ as facts
about documentation at a pinned version, never inferred from a name, from behaviour, or from the
absence of another claim. Definition~\ref{def:dispatch} fixes what a gate does with each: a verdict
on $\yes$, a legal skip on $\no$, and $\unassessable$ on $\silent$. Definition~\ref{def:layer}
separates a conformance verdict from an instance-measured observation, and
Definition~\ref{def:yield} defines the dispatch yields over cells. The verdict layer covers both a
declared optional capability and a mandatory one, which is why it is not called claim-based.

\begin{remark}[A yield carries its syndrome distribution]
\label{app:rem:yield}
A yield is a function of the population and of the distribution of syndromes over it. Sampling
weight-one physical errors gives verified-unique optima on essentially every instance; sampling
syndromes uniformly over the graph gives unique optima on $100\%$, $66.8\%$ and $30.5\%$ of instances
at $d = 3, 5, 7$ respectively. The yields of Table~\ref{tab:yield} are over the evaluation grid
described there --- 29 instance configurations with up to 30 cases each, 791 cases in all, crossed
with the nine subjects to give 7119 cells --- and not over the exhaustive population of
Table~\ref{tab:instances}, which is where every relation count in this paper is taken instead.

The direct and derived columns share that denominator of 7119, since a gate is a fact about a
subject. The applicability column does not: hypotheses hold or fail on a case regardless of who is
being asked, so its denominator is 791. The three are reported in three columns for that reason.
\end{remark}

\subsection{Presentations and relations}

Definition~\ref{def:presentation} fixes a relabelling as a pair of bijections, one on
$D \cup \{b\}$ fixing $b$ and one on edges, preserving incidence, every edge weight and every
observable label; a presentation is the image of the instance under one. Definition~\ref{def:applicability} fixes the relation dispatched to each subject. It records a
relation over a quantity the subject does not return as \textsc{not-applicable}, not as a pass.

\begin{definition}[Faithful split]
\label{app:def:split}
A realised mechanism split is \emph{faithful} on a syndrome when the minimum-weight solution of the
split model is the image of a minimum-weight solution of the original. It need not be: for
$p = p_1 + p_2 - 2p_1p_2$ with $p_1, p_2 \in (0,1/2)$ both replacements are strictly heavier than the
mechanism they replace, so the argmin can move. Faithfulness is decided per cell, and a cell on which
the split is not faithful is excluded and counted, never scored.
\end{definition}

\begin{definition}[Faithful scaling]
\label{app:def:scale}
A realised weight scaling by $\gamma > 0$ is \emph{faithful} when the mapping from the caller's
weights to the weights the subject receives is exactly $w \mapsto \gamma w$ on every edge. The
probability representation clamps at both ends of the $\gamma$ ladder, and a cell in which the
realised transformation is not a scaling is excluded and counted, never scored.
\end{definition}

\subsection{Properties}

\begin{definition}[Residual, restated]
\label{app:def:resid}
Gate $L_0$. Hypothesis: the subject exposes $\mathrm{dec\_corr}$. Claim: $\bd_D \corr = \syn$ with the
detectors of each mechanism read in the caller's index space. A returned index outside range, or a
repeated index, is a malformed return and is rejected before any weight is computed.
\end{definition}

\begin{definition}[Bounded-distance correctness, restated]
\label{app:def:bd}
Gate $L_1$. Hypotheses: a code-capacity instance with perfect syndrome extraction; weights that order
candidates by Hamming weight; and an injected error of weight at most $t$, with $t$ derived from the
certified distance of the concrete instance. Claim, as one conjunction in this order: the returned
correction explains the syndrome, and the residual $\corr + e$ lies in the identity logical class.
Where the subject also exposes $\mathrm{dec\_obs}$, the claim extends to what that operation returns:
the observable vector must have length $|L(M)|$ and must agree with the logical class the returned
correction induces. The width half is what Definition~\ref{app:def:cert} reports as
$\textsc{a2-shape}$, and it is included here because a subject that returns a correct correction
alongside a truncated observable vector has failed the caller in exactly the way $L_1$ exists to
forbid.
\end{definition}

\begin{definition}[Exact optimality, restated]
\label{app:def:opt}
Gate $L_2$. Hypotheses: the subject exposes $\mathrm{dec\_corr}$, and an independent exact optimum is
computable for the instance. Claim: the returned solution's weight, recomputed under the caller's
weights, equals that optimum.
\end{definition}

\begin{definition}[Matching-core equivariance, restated]
\label{app:def:equiv}
Gate $L_2$. Hypotheses: a permutation $\relab$ of the weighted graph fixing $b$, \emph{exactly}
preserving the solver-visible weight of every mapped edge and intertwining incidence; a
verified-unique minimum-weight solution on the true graph; and a $\mathrm{dec\_corr}$ interface.
Claim: $\mathrm{Dec}(\relab \syn) = \relab(\mathrm{Dec}(\syn))$ as edge sets. Weight preservation must be exact, not approximate. An approximately weight-preserving map does not
carry an argmin without a bound relating the perturbation to the optimality gap. The permutations in
this paper carry each weight to itself. The hypothesis is therefore met identically. The
implementation also asserts agreement to a relative tolerance of $10^{-12}$. It refuses the cell
otherwise. This guards against a construction defect. It does not weaken the hypothesis.
\end{definition}

\begin{definition}[Purity, restated]
\label{app:def:purity}
Gate $\mathit{det}$ for a conformance verdict; a declared $\mathit{rand}$ is a legal skip; silence on
both dispatches the property on the instance-measured layer alone. Hypothesis: a warm-up sequence is
supplied. Claim: decoding $\syn$ on an instance that has already decoded other syndromes returns the
same correction, observable vector and weight as decoding $\syn$ on a fresh instance.
\end{definition}

\begin{definition}[Failure certificate]
\label{app:def:cert}
Failure certificates are residual-first. A returned correction that leaves a residual yields
$\textsc{a2-residual}$ carrying the support of $\bd_D \corr \oplus \syn$. A returned observable vector
of length other than $|L(M)|$ yields $\textsc{a2-shape}$. Otherwise a disagreement in the observable
class yields $\textsc{a2-logical}$ carrying the support of the difference. A reduction of an instance
preserves a failure exactly when it preserves this certificate.
\end{definition}

The ordering in Definition~\ref{app:def:cert} is forced, not stylistic.
Appendix~\ref{app:proofs} gives the reason. A correction that does not explain the syndrome has no
defined logical class. Reporting a logical-class mismatch there names a quantity that does not exist.

\section{Proofs}
\label{app:proofs}

\subsection{The two capabilities are incomparable}

\begin{theorem}[Theorem~\ref{thm:incomparable}, restated]
\label{app:thm:incomparable}
$L_2 \not\Rightarrow L_1$ and $L_1 \not\Rightarrow L_2$, over legal log-likelihood-ratio weights.
\end{theorem}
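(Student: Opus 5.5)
Both directions are separating examples, so the plan is to exhibit two explicit (decoder, instance) pairs over legal weights, meaning every $w_e = \log((1-p_e)/p_e)$ with $p_e \in (0,1/2)$. Any finite positive real weight is realised by $p_e = 1/(1+e^{w_e})$. So any assignment of positive weights is legal, and the graph of the construction can be chosen freely inside the graphlike fragment of Definition~\ref{def:scope}. Each direction then reduces to a combinatorial check on a small graph.

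For $L_2 \not\Rightarrow L_1$ I would use the repetition code of distance $d \ge 3$ as a concrete instance. Its decoding graph is a path $b - v_1 - \dots - v_{d-1} - b$ with $d$ edges, and the logical observable is carried by one boundary edge. The edge set of the whole path is the logical cycle: it has empty detector boundary and odd observable parity. I would give one cycle edge $h$ weight $d$ and every other cycle edge weight $1$. If the instance has off-cycle edges, they get weight larger than $d-1$. Take the single error $e = \{h\}$, of weight one, which is inside the radius $t = \lfloor (d-1)/2 \rfloor \ge 1$. Its syndrome is $\bd_D\{h\}$, and exactly two explanations avoid off-cycle edges: $\{h\}$, of weight $d$, and the complement $C \setminus \{h\}$, of weight $d-1$. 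The step that needs care is showing that the complement is the \emph{unique} minimum. Any other explanation differs from one of these two by a nonempty cycle-space element. Such an element either uses an off-cycle edge, which costs more than $d-1$, or is the whole cycle, which just swaps the two candidates. So any exact minimum-weight decoder, and hence any decoder satisfying $L_2$, must return $C \setminus \{h\}$. The residual $\corr + e = C$ is a nontrivial logical. The decoder therefore explains the syndrome but fails the second conjunct of $L_1$ as ordered in Definition~\ref{app:def:bd}, on an error of weight one. I expect this uniqueness argument, made rigorous for a general code, to be the main obstacle. This is why I would fix the repetition or path instance, where the cycle space is one-dimensional and the check is immediate. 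Hamming ordering is deliberately violated, which is consistent with Remark~\ref{app:rem:domain}.

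For $L_1 \not\Rightarrow L_2$ I would take any instance whose graph contains a stabiliser, that is, a nonempty edge set $Z$ with $\bd_D Z = \emptyset$ and $\lambda(Z) = 0$. The rotated or unrotated distance-$3$ surface code of Table~\ref{tab:instances} supplies one: a single plaquette in the dual sense, a face of the matching graph. Start from any bounded-distance-correct decoder $\mathrm{Dec}$. One example is the exact decoder on Hamming-ordering weights, which is correct by Remark~\ref{app:rem:domain}. Another is a lookup table on errors of weight at most $t$. Define $\mathrm{Dec}'(\syn) = \mathrm{Dec}(\syn) \,\triangle\, Z$. Then $\bd_D \mathrm{Dec}'(\syn) = \syn$ still holds, and the residual changes by $Z$, which lies in the identity class. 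So $\mathrm{Dec}'$ satisfies $L_1$ wherever $\mathrm{Dec}$ does. On the empty syndrome, however, $\mathrm{Dec}$ correctly returns $\emptyset$, because the residual must be a stabiliser and the minimum is zero. So $\mathrm{Dec}'$ returns $Z$, of weight $\wt{Z} > 0$, while the optimum is $0$, and $L_2$ fails. The only thing to verify is the existence of $Z$ with trivial observable label. This is a standard property of surface-code matching graphs, and for the concrete distance-$3$ instance it can be read off directly.
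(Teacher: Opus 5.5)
Your two constructions are the paper's own. The first is a minimum-support logical cycle $C$ with one edge of weight $d$, the rest of weight $1$, and every off-cycle edge heavier than $d-1$. The second is a bounded-distance-correct decoder whose output is XORed with a fixed nonidentity stabiliser.

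\textbf{The step that fails is your choice of the repetition code in the first direction.} In this paper the correction radius is derived from the \emph{certified} distance of the concrete instance. For a repetition instance that is the full-Pauli distance $1$; see the caption of Table~\ref{tab:instances} and Appendix~\ref{app:population}, which says no repetition instance is dispatched to bounded-distance correctness at all. So on your path graph $b - v_1 - \dots - v_{d-1} - b$ the radius is $t = 0$. The weight-one error $\{h\}$ then lies outside it, the $L_1$ claim covers only the zero error, and an exact decoder meets it. The instance separates nothing. Your $t = \lfloor (d-1)/2 \rfloor$ is the sector-restricted radius, which the paper deliberately does not use.

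The retreat to the repetition code is also unnecessary, because your cycle-space argument is already fully general. For a simple cycle $C$, the only nonempty subset of $C$ with empty detector boundary is $C$ itself. So every explanation other than $\{h\}$ and $C \setminus \{h\}$ contains an off-cycle edge and costs more than $d-1$ on that edge alone. That is exactly the paper's uniqueness step. Run the construction on a code whose full-Pauli distance is $d \ge 3$, for instance the unrotated $d=3$ surface code I5, whose matching graph contains a logical cycle of length $3$ through $b$. Your ``main obstacle'' then does not arise.

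\textbf{There is a smaller slip in the second direction.} After ``start from any bounded-distance-correct decoder'' you conclude $\mathrm{Dec}(\emptyset) = \emptyset$ ``because the residual must be a stabiliser and the minimum is zero''. That does not follow: $L_1$ only forces $\mathrm{Dec}(\emptyset)$ to be some stabiliser, and the optimum does not bind an $L_1$ decoder. The paper simply assumes $\mathrm{Dec}(\emptyset) = \emptyset$, which holds for any minimum-weight base decoder and for both of your named examples. Alternatively, split cases. If $\mathrm{Dec}(\emptyset)$ is already a nonempty stabiliser, its weight is positive against an optimum of $0$, so $\mathrm{Dec}$ itself is the witness.

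With those two repairs the proposal coincides with the paper's proof.
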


\begin{proof}[$L_2 \not\Rightarrow L_1$]
Let $\graph$ be the decoding graph of a code of certified distance $d \ge 3$, so the correction
radius is $t = \lfloor (d-1)/2 \rfloor \ge 1$, and let $C = e_0, e_1, \dots, e_{d-1}$ be a simple
cycle of length $d$ whose edge set is a minimum-support logical operator: flipping every edge of $C$
leaves every detector satisfied and flips a declared observable, and no shorter edge set does. Assign $w(e_0) = d$, assign $w(e_j) = 1$ for $1 \le j \le d-1$, and assign every edge
outside $C$ a weight strictly greater than $d - 1$. All weights are positive and each arises from a
probability $p = 1/(1 + e^{w})$ in $(0, 1/2)$, so the assignment is legal.

Inject the single error $e_0$. The syndrome is $\syn = \bd_D\{e_0\}$, the two endpoints of $e_0$. Two
feasible explanations are $S_1 = \{e_0\}$ at weight $d$ and $S_2 = C \setminus \{e_0\}$ at weight
$d - 1$; $S_2$ is feasible because $\bd_D C = \emptyset$ implies
$\bd_D(C \setminus \{e_0\}) = \bd_D\{e_0\} = \syn$. Any explanation using an edge outside $C$ costs
more than $d-1$ on that edge alone. Hence $S_2$ is the unique minimum-weight explanation, and an $L_2$
subject returns it. The residual is $S_2 + \{e_0\} = C$, which is a logical operator, so the returned
correction fails bounded-distance correctness on an error of weight one. Since $L_1$ requires
correction of every error of weight at most $t \ge 1$, the subject satisfies $L_2$ and violates $L_1$.

The construction concerns the two objectives, not any implementation. Minimum
weight orders candidates by $\sum_e \log\frac{1-p_e}{p_e}$ and bounded-distance correctness orders
them by Hamming weight, and those orders disagree as soon as the mechanism probabilities are
non-uniform enough for one unlikely mechanism to outweigh several likely ones.
\end{proof}

\begin{proof}[$L_1 \not\Rightarrow L_2$]
Let $\mathrm{Dec}$ be a decoder that satisfies $L_1$ on a code with a nonempty stabiliser group and
that returns $\emptyset$ on the empty syndrome, which any minimum-weight decoder does. Modify it to
return $\mathrm{Dec}(\syn) \oplus s$ for a fixed nonidentity stabiliser $s$ with
$\bd_D s = \emptyset$ and $\wt{s} > 0$. Feasibility is preserved, since
$\bd_D(\corr \oplus s) = \bd_D \corr$. The logical class of the residual is preserved, since $s$ is a
stabiliser, so bounded-distance correctness still holds and the modified decoder satisfies $L_1$. On
the empty syndrome the unique minimum-weight explanation is $\emptyset$, of weight $0$, while the
modified decoder returns $\emptyset \oplus s = s$, of weight $\wt{s} > 0$. So it violates $L_2$.
\end{proof}

Both directions together give what Definition~\ref{app:def:caps} states: $L_1$ and $L_2$ do not form
a chain under the proposed implication relation. The dispatch rule of Definition~\ref{def:dispatch}
follows, in that a gate on $\ell$ is opened by a declaration of $\ell$ alone.

\begin{remark}[The domain matters, and the restricted domain is the interesting one]
\label{app:rem:domain}
The construction for $L_2 \not\Rightarrow L_1$ uses weights that do not order candidate corrections by
physical Hamming weight. Restrict the domain to a code-capacity instance whose weights do order them
that way, which is the hypothesis Definition~\ref{app:def:bd} imposes anyway, and the two objectives
agree on the candidates that matter: an exact minimum-weight decoder then corrects every error of
weight at most $t$, so $L_2 \Rightarrow L_1$ there. The capabilities are therefore incomparable over
the general weight domain and ordered over that restricted one. The implication never converts into a
declaration --- inferring one declaration from another is the error Definition~\ref{def:declstate}
exists to forbid --- but it does bind a subject that declares $L_2$ to $L_1$ on any individual case
the evaluator has verified to satisfy its hypotheses. That is what the derived assurance yield of
Definition~\ref{def:yield} counts, and it is why the yield table reports it beside the direct one.
\end{remark}

\subsection{The certificate}

\begin{lemma}[Lemma~\ref{lem:certificate}, restated with the transport made explicit]
\label{app:lem:certificate}
Let $\graph = (D \cup \{b\}, E, w, \lambda)$ and let $\relab$ be a relabelling in the sense of
Definition~\ref{def:presentation}: a bijection $\relab_D : D \cup \{b\} \to D' \cup \{b'\}$ with
$\relab_D(b) = b'$, and a bijection $\relab_E : E \to \relab E$ whose endpoint map is $\relab_D$,
with $w(\relab_E e) = w(e)$ and $\lambda(\relab_E e) = \lambda(e)$ for every $e$. Edges carry identities. The map $\relab_E$ acts on edges, not endpoint pairs. Parallel edges stay
distinguishable. Let $S_1 \subseteq E$ satisfy $\bd_D S_1 = \syn$ and let
$S_2' \subseteq \relab E$ satisfy $\bd_D S_2' = \relab_D \syn$. Write $S_2 = \relab_E^{-1}(S_2')$. If
$\wt{S_1} \ne \wt{S_2}$ then the heavier of the two does not attain
$\min\{\wt S : \bd_D S = \syn\}$.
\end{lemma}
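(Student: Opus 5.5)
The plan is to prove the restated lemma in three moves, each of which uses only the structure that Definition~\ref{def:presentation} grants the relabelling. First, transport the boundary. Second, transport the weight. Third, conclude from feasibility alone, exactly as in the main-text proof of Lemma~\ref{lem:certificate}. The only new work is to make the transport explicit at the level of edge identities, so that parallel edges cause no ambiguity.

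\emph{Step one, the boundary commutes with the relabelling.} Fix a detector $v \in D$ and any $S \subseteq E$. I would show that the edges of $\relab_E S$ incident to $\relab_D v$ are exactly the images of the edges of $S$ incident to $v$. This holds because $\relab_E$ acts edge by edge with endpoint map $\relab_D$, and $\relab_D$ is a bijection. Incidence counts are therefore equal, so their parities agree and $\bd_D(\relab_E S) = \relab_D(\bd_D S)$.

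Two points need care here. The boundary vertex must be handled separately: $\relab_D(b) = b'$, and $\bd_D$ ignores $b$ and $b'$ alike, so no detector is ever sent to the boundary or back. And because edges carry identities, a multiset of parallel edges is counted edge by edge, not by endpoint pair.

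Applying the identity to $S = S_2$, where $\relab_E S_2 = S_2'$, gives $\relab_D(\bd_D S_2) = \relab_D\syn$. Injectivity of $\relab_D$ then yields $\bd_D S_2 = \syn$.

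\emph{Step two, the weights are preserved.} Since $w(\relab_E e) = w(e)$ for every edge and $\relab_E$ is a bijection, summing over $S_2$ gives $\wt{S_2'} = \wt{S_2}$. So both totals live under the single weight function $w$ of the caller's graph.

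\emph{Step three, the conclusion.} At this point both $S_1$ and $S_2$ lie in the feasible family $\{S \subseteq E : \bd_D S = \syn\}$, which is nonempty and finite, so its minimum exists. Without loss of generality suppose $\wt{S_1} > \wt{S_2}$. Then $\min \le \wt{S_2} < \wt{S_1}$, and $S_1$ does not attain the minimum. The other case is symmetric.

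I expect no step to be a real obstacle. The only delicate point is step one, and specifically the claim that the endpoint-map condition on $\relab_E$ is enough for incidence multiplicities to match, including for boundary edges and parallel edges. I would discharge it by writing each edge as an identity carrying an endpoint set and checking the bijection on incident-edge sets directly. No uniqueness of the optimum, and no property of $\lambda$, is used.
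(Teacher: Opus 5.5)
Your proposal is correct and follows the paper's proof essentially step for step. You derive $\bd_D(\relab_E S) = \relab_D(\bd_D S)$ from edge-wise incidence preservation, use injectivity of $\relab_D$ to get $\bd_D S_2 = \syn$, transport the weight edge by edge, and conclude from feasibility alone. Your remarks on the boundary vertex, on parallel edges, and on the feasible family being nonempty and finite (so the minimum exists) only make explicit what the paper's one-paragraph proof leaves implicit.
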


\begin{proof}
Incidence is preserved edge by edge, so for any $S \subseteq E$ a detector $v$ is incident to an odd
number of edges of $S$ if and only if $\relab_D v$ is incident to an odd number of edges of
$\relab_E S$; hence $\bd_D(\relab_E S) = \relab_D(\bd_D S)$. Applying this to $S_2$ and using
$\relab_E S_2 = S_2'$ gives
$\relab_D(\bd_D S_2) = \bd_D S_2' = \relab_D \syn$, and $\relab_D$ is injective, so
$\bd_D S_2 = \syn$. Thus $S_1$ and $S_2$ are both feasible for $\syn$ in the canonical presentation.
Since $w(\relab_E e) = w(e)$ for every edge, $\wt{S_2} = \wt{S_2'}$, so the two totals are comparable
under one weight function. Assume without loss of generality $\wt{S_1} > \wt{S_2}$. Then $S_2$ is
feasible for $\syn$ and strictly cheaper, so $\wt{S_1} > \min\{\wt S : \bd_D S = \syn\}$.
\end{proof}

\begin{remark}[What the lemma does not need]
\label{app:rem:notneeded}
The proof uses feasibility and weight preservation and nothing else. It does not solve the instance,
does not assume the minimum is unique, does not consult the true error, and does not compare two
implementations. It is a witness-producing sufficient condition. It fires exactly on the feasible presentation pairs whose
total weights differ, which is why Section~\ref{sec:measurement} reports the certificate count as a
count of witnesses.
\end{remark}

\subsection{Equivariance forces the answer}

\begin{theorem}[Soundness of matching-core equivariance]
\label{app:thm:equiv}
Let $\graph = (D \cup \{b\}, E, w)$ and let $\relab$ satisfy the hypotheses of
Definition~\ref{app:def:equiv}. If $\syn$ has a unique minimum-weight explanation $E^*$, then
$\relab \syn$ has the unique minimum-weight explanation $\relab E^*$, and any subject that returns a
minimiser returns exactly that set.
\end{theorem}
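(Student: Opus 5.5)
The plan is to show that the relabelling induces a weight-preserving bijection between the feasible families of the two presentations. Uniqueness then transfers, and the statement about subjects follows from the definition of returning a minimiser.

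\textbf{Step 1: the map is a bijection of feasible families.} Write $\mathcal{F}(\syn) = \{S \subseteq E : \bd_D S = \syn\}$, and $\mathcal{F}'(\relab\syn)$ for the same family in the permuted graph. The proof of Lemma~\ref{app:lem:certificate} already gives $\bd_D(\relab_E S) = \relab_D(\bd_D S)$ for every $S \subseteq E$, using only that incidence is preserved edge by edge. Since $\relab_D$ is injective, $S \in \mathcal{F}(\syn)$ if and only if $\relab_E S \in \mathcal{F}'(\relab\syn)$. Since $\relab_E$ is a bijection on edges, the induced map on subsets is a bijection $\mathcal{F}(\syn) \to \mathcal{F}'(\relab\syn)$, with inverse induced by $\relab_E^{-1}$.

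\textbf{Step 2: weights are preserved, so minimisers correspond.} Definition~\ref{app:def:equiv} requires \emph{exact} preservation of every solver-visible weight. Hence $\wt{\relab_E S} = \wt{S}$ for every $S$, and the two minima agree. A set $T' = \relab_E T$ attains the minimum over $\mathcal{F}'(\relab\syn)$ exactly when $T$ attains it over $\mathcal{F}(\syn)$.

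\textbf{Step 3: uniqueness transfers.} If $\relab\syn$ had two distinct minimisers, their preimages under the bijection would be two distinct minimisers of $\syn$, contradicting uniqueness of $E^*$. So $\relab E^*$ is the unique minimiser. Any subject that returns a minimiser on $\relab\syn$ has only one option, and must return $\relab E^*$ as an edge set. This is the claimed $\mathrm{Dec}(\relab\syn) = \relab(\mathrm{Dec}(\syn))$, applied with $\mathrm{Dec}(\syn) = E^*$.

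\textbf{Main obstacle.} The mathematics is routine; the delicate point is the meaning of ``weight'' and ``edge set''. First, exactness of weight preservation must be in the solver's own domain, not the caller's. If the subject integerises or rounds weights, preservation in the caller's weights does not imply preservation of the solver's objective, and the argmin could move. I would therefore state the hypothesis on solver-visible weights, as Definition~\ref{app:def:equiv} does, and note that Step 2 uses nothing else. Second, edges must carry identities so that $\relab_E$ acts on edges rather than endpoint pairs. Otherwise parallel edges would make ``returns exactly that set'' ambiguous; the simple-graph and edge-identity conventions of Definition~\ref{app:def:graph} settle this.
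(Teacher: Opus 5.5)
Your proposal is correct and follows the paper's proof essentially step for step. The paper likewise uses the boundary identity $\bd_D(\relab S) = \relab(\bd_D S)$ from Lemma~\ref{app:lem:certificate} to get a bijection $F(\syn) \to F(\relab\syn)$, and uses edge-by-edge weight preservation to conclude that this bijection carries minimisers to minimisers and preserves the cardinality of the argmin, so that a subject returning a minimiser has only $\relab E^*$ available.
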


\begin{proof}
As in Lemma~\ref{app:lem:certificate}, $\relab$ induces a bijection on edge sets with
$\bd_D(\relab S) = \relab(\bd_D S)$, so it maps the feasible family $F(\syn) = \{S : \bd_D S = \syn\}$
bijectively onto $F(\relab \syn)$. It preserves $w$ edge by edge, so it preserves $\wt{\cdot}$ on
every set. A weight-preserving bijection between feasible families carries minimisers to minimisers
and preserves the cardinality of the argmin. With $\operatorname{argmin} F(\syn) = \{E^*\}$ we get
$\operatorname{argmin} F(\relab \syn) = \{\relab E^*\}$, and a subject returning a minimiser has only
that element available.
\end{proof}

\begin{corollary}
\label{app:cor:det}
Under the hypotheses of Theorem~\ref{app:thm:equiv} the returned set is forced, so no determinism
declaration is needed for the instance pair. Observable masks play no part in the optimisation, so the
property is a statement about the matching core and is evaluated on $\mathrm{dec\_corr}$.
\end{corollary}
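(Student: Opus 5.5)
The corollary follows from Theorem~\ref{app:thm:equiv} with no new machinery, and I would prove it in two short steps, one for each sentence of the statement.

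\emph{First sentence.} Fix $\relab$ and $\syn$ satisfying the hypotheses of Definition~\ref{app:def:equiv}, with verified-unique optimum $E^*$. Consider any subject that returns a minimiser. Theorem~\ref{app:thm:equiv} gives $\operatorname{argmin} F(\syn) = \{E^*\}$ and $\operatorname{argmin} F(\relab\syn) = \{\relab E^*\}$. So on input $\syn$ the only admissible return is $E^*$, and on $\relab\syn$ the only admissible return is $\relab E^*$. In particular, any two calls on $\syn$ agree, whatever the call history, internal randomness or schedule. They also satisfy $\mathrm{Dec}(\relab\syn) = \relab(\mathrm{Dec}(\syn))$. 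Hence the equivariance claim holds as an identity of edge sets without appeal to a $\mathit{det}$ flag. The point to state explicitly is the logical form: determinism is a consequence of the hypotheses, not an additional one. A nondeterministic subject that returns a minimiser still has a singleton of choices. A subject whose return differs has therefore returned a non-minimiser, which is exactly what the $L_2$ gate licenses as an adverse verdict.

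\emph{Second sentence.} Here I would observe that the objective $\wt{S} = \sum_{e\in S} w_e$ and the feasible family $F(\syn) = \{S : \bd_D S = \syn\}$ are defined from incidence and $w$ alone. The label $\lambda$ appears in neither. Consequently $E^*$, and the whole argument above, are unchanged if $\lambda$ is replaced or dropped. Theorem~\ref{app:thm:equiv} already states its graph as $(D\cup\{b\},E,w)$ without $\lambda$, so no observable-preserving hypothesis on $\relab$ is required. The resulting statement concerns the returned edge set, which is the output of $\mathrm{dec\_corr}$. It is not a statement about $\mathrm{dec\_obs}$, whose value depends on $\lambda$.

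The only delicate point I expect is the scope of ``returns a minimiser''. The forcing argument applies only to a subject whose return attains the minimum, and without a declaration nothing guarantees that. I would therefore phrase the corollary's conclusion conditionally: for an $L_2$-declaring subject the return is forced, and otherwise any deviation is itself a witness of non-minimality. I would also cite exact weight preservation, noting that an approximate map would break the bijection argument, as Definition~\ref{app:def:equiv} warns.
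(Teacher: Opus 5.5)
Your proposal is correct and follows the paper's route. The paper reads the corollary directly off the final step of the proof of Theorem~\ref{app:thm:equiv}: a singleton argmin leaves a minimiser-returning subject no freedom, so determinism follows from the verified-unique optimum rather than being an extra hypothesis, and the second sentence rests on the theorem's graph being $(D \cup \{b\}, E, w)$ with no $\lambda$, exactly as you note (your conditional scope, forced for any subject that returns a minimiser and otherwise a witness of non-minimality, is the right reading of what the statement leaves implicit).
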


\subsection{Scaling cannot move the answer, and why one subject family moves anyway}

\begin{proposition}[Positive scaling preserves the argmin]
\label{app:prop:scale}
Let $\gamma > 0$ and let $w' = \gamma w$. Then for every syndrome $\syn$,
$\operatorname{argmin}_{S \in F(\syn)} \sum_{e \in S} w'_e
 = \operatorname{argmin}_{S \in F(\syn)} \sum_{e \in S} w_e$, and the two argmins have the same
cardinality.
\end{proposition}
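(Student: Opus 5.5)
The plan is to show that positive scaling multiplies every total weight by the same positive constant, and that such a map preserves the comparison order between candidate corrections. The feasible family $F(\syn) = \{S \subseteq E : \bd_D S = \syn\}$ depends only on incidence, not on weights, so both argmins range over the same finite set. This is the observation that makes the statement a pure fact about the objective, in the same spirit as Theorem~\ref{app:thm:equiv}, where a weight-preserving bijection carried minimisers to minimisers. Here the bijection on $F(\syn)$ is the identity, and the map on totals is multiplication by $\gamma$ rather than the identity.

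\emph{Step one.} By linearity of the sum, for every $S \in F(\syn)$,
\[
\sum_{e \in S} w'_e = \gamma \sum_{e \in S} w_e,
\]
so $W'(S) = \gamma\, \wt{S}$.

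\emph{Step two.} Since $\gamma > 0$, the map $x \mapsto \gamma x$ is strictly increasing on $\mathbb{R}$. Hence for $S, T \in F(\syn)$ we have $W'(S) \le W'(T)$ if and only if $\wt{S} \le \wt{T}$. Two consequences follow.
\begin{enumerate}
\item $\min_{S \in F(\syn)} W'(S) = \gamma \min_{S \in F(\syn)} \wt{S}$. The minimum exists because $F(\syn)$ is finite. If $F(\syn)$ is empty, both argmins are empty and the claim is trivial.
\item A set $S$ attains the scaled minimum exactly when it attains the unscaled one.
\end{enumerate}
The two argmins are therefore the same subset of $F(\syn)$. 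Equal sets have equal cardinality, so the cardinality clause follows immediately; in particular, a verified-unique optimum stays unique.

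There is no real obstacle in the mathematics. The only point needing care is the scope of the statement, which is what Definition~\ref{app:def:scale} and the later discussion of belief propagation rely on. The proposition concerns the exact objective with weights $w'_e = \gamma w_e$ applied on every edge. It says nothing about a solver that rounds weights to integers, clamps probabilities, or reinitialises from the channel probabilities $1/(1+e^{\gamma w})$. Each of those mechanisms breaks either the faithful-scaling hypothesis or the premise that the solver minimises $\sum_e w'_e$. I would close the argument by stating this explicitly, so that a moved answer under R6 is read either as an excluded, unfaithful cell or as evidence that the subject is not an exact minimiser of the scaled objective. It should never be read as a counterexample to the proposition.
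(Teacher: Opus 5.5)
Your proof is correct and follows essentially the same route as the paper: linearity gives $\sum_{e\in S} w'_e = \gamma \sum_{e\in S} w_e$ for every $S$, and since $\gamma > 0$ the scaled objective is a strictly increasing function of the original, so the order on $F(\syn)$, and hence the argmin and its cardinality, is unchanged. Your additions are consistent with how the paper uses the result for R6: finiteness of $F(\syn)$, the empty feasible family, and the caveat that rounding, clamping or BP reinitialisation fall outside the proposition.
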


\begin{proof}
$\sum_{e \in S} w'_e = \gamma \sum_{e \in S} w_e$ for every $S$, and $\gamma > 0$, so the objective is
a strictly increasing function of the original objective and the order of feasible sets is unchanged.
\end{proof}

Proposition~\ref{app:prop:scale} is what relation R6 tests, and it is why a subject that changes its
answer under scaling has changed it for a reason outside the optimisation problem.
Section~\ref{sec:measurement:intweights} reports that only the two belief-propagation configurations
changed their answers, and not the subjects that declare integer internal weights. Their
initialisation depends on the channel probabilities the weights encode, so a scaling can move them
even where it provably cannot move the minimum-weight solution.

\subsection{The residual precedes the class}

\begin{theorem}[Residual precedes class]
\label{app:thm:residualfirst}
For a stabiliser code with stabiliser group $\mathcal{S}$ and parity-check map $\mathrm{syn}(\cdot)$,
the residual $r = e + \corr$ lies in the normaliser $N(\mathcal{S})$ if and only if
$\mathrm{syn}(r) = 0$. Consequently, when $\bd_D \corr \ne \syn$ the logical class
$[r] \in N(\mathcal{S})/\mathcal{S}$ is undefined.
\end{theorem}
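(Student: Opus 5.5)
The plan is to work in the standard Pauli-group/symplectic picture. For a Pauli operator $P$ (modulo phases), $\mathrm{syn}(P)$ is the vector of commutation bits with the stabiliser generators $g_1,\dots,g_r$. The $j$-th bit is $0$ exactly when $P$ commutes with $g_j$.

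The first step is the equivalence. $P\in N(\mathcal{S})$ means $P\mathcal{S}P^\dagger=\mathcal{S}$. For Pauli operators, each $PgP^\dagger=\pm g$. Since $-I\notin\mathcal{S}$ and $\mathcal{S}$ is a group, $PgP^\dagger=-g\in\mathcal{S}$ would force $-I\in\mathcal{S}$. So $P$ normalises $\mathcal{S}$ iff it commutes with every element of $\mathcal{S}$. It suffices to check this on the generators, because commutation is multiplicative: the commutation bit is a homomorphism $\mathcal{S}\to\mathbb{F}_2$. Hence $P\in N(\mathcal{S})$ iff $\mathrm{syn}(P)=0$. Apply this with $P=r=e+\corr$, written additively in the symplectic representation.

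The second step connects this to the detector language. The syndrome map is linear over $\mathbb{F}_2$, so $\mathrm{syn}(r)=\mathrm{syn}(e)+\mathrm{syn}(\corr)$. In the graphlike, code-capacity setting of Definition~\ref{app:def:bd}, the detectors are the stabiliser checks. Each mechanism is an edge, and $\bd_D$ restricted to the detector vertices is exactly $\mathrm{syn}$, since the boundary vertex $b$ carries no check. By definition of the injected syndrome, $\mathrm{syn}(e)=\syn$. Therefore $\mathrm{syn}(r)=\syn\oplus\bd_D\corr$, which is zero iff $\bd_D\corr=\syn$.

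The consequence then follows. The quotient $N(\mathcal{S})/\mathcal{S}$ assigns a coset $[r]$ only to elements of $N(\mathcal{S})$. If $\bd_D\corr\ne\syn$, then $\mathrm{syn}(r)\ne0$, so $r\notin N(\mathcal{S})$ and $[r]$ is undefined.

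I expect the main obstacle to be the bookkeeping identification of $\bd_D$ with $\mathrm{syn}$ rather than the group theory. I would handle it by fixing the decoding graph of Definition~\ref{app:def:graph} for a single sector. There, each mechanism is a single-qubit Pauli flipping exactly the checks it anticommutes with. Detectors are identified with checks, and $b$ is excluded from $\bd_D$. With that identification stated, the rest is routine.
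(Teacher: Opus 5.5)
Your proposal is correct and follows the paper's own proof. Vanishing syndrome is equivalent to commuting with every generator, hence to membership in $N(\mathcal{S})$, and linearity gives $\mathrm{syn}(r) = \mathrm{syn}(e) + \mathrm{syn}(\corr) \neq 0$ when $\bd_D \corr \neq \syn$, so $r$ has no class in $N(\mathcal{S})/\mathcal{S}$. Beyond the paper, you justify two steps it takes for granted: that the normaliser equals the centraliser, via $-I \notin \mathcal{S}$, and that $\bd_D$ coincides with $\mathrm{syn}$ on the code-capacity decoding graph, since the boundary vertex carries no check.
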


\begin{proof}
An operator commutes with every stabiliser generator exactly when its syndrome vanishes, which is the
definition of membership in $N(\mathcal{S})$ for a Pauli operator. If $\bd_D \corr \ne \syn$, then
$\mathrm{syn}(r) = \mathrm{syn}(e) + \mathrm{syn}(\corr) \ne 0$, so $r \notin N(\mathcal{S})$ and the
quotient $N(\mathcal{S})/\mathcal{S}$ does not contain it.
\end{proof}

Because of Theorem~\ref{app:thm:residualfirst}, Definition~\ref{app:def:cert} evaluates the residual
condition first. It reports $\textsc{a2-residual}$, not a logical-class mismatch. A checker
that reports a class disagreement when the residual is nonzero is reporting a quantity that does not
exist.

\section{The conditions a firing must satisfy}
\label{app:conditions}

These twelve conditions were fixed before the search of Section~\ref{sec:hunt} was run, in a
pre-registration committed on its own, before the code that acts on it existed. They are reproduced
here in the form they were committed in, because a condition chosen after seeing a result is not a
condition. A weight disagreement may be reported as a conformance violation of a declared exactness
capability only if every one of them holds.

\begin{enumerate}
\item \textbf{The declaration covers this exact API path.} Disqualify if exactness is claimed only for
another solver mode; if correlated, approximate, streaming or truncated modes are enabled; if the
graph or weight domain is outside documented support; or if the claim belongs to a matching core while
the disagreement arises in a separate approximate front end. This bites hardest for a colour-code
decoder that approximates its problem as matching and then calls an exact matcher: a disagreement in
the outer pipeline is not a violation of the matcher's exactness claim.

\item \textbf{Both presentations are the same solver-visible problem.} Verify
$\graph' = \relab(\graph)$ --- vertices, boundary status, incidences, edge identities, final weights
and enabled options --- after any adaptation layer, on the object actually handed to the native
library. Disqualify if relabelling changes simplification, parallel-edge merging, edge deletion,
boundary representation, quantisation, clipping, integer overflow or shortest-path preprocessing.

\item \textbf{No ambiguous parallel edges.} The search uses a simple graph, one edge per endpoint
pair, asserted at construction. Disqualify any firing whose returned representation gives only
endpoint pairs while two parallel edges of different weight or identity exist.

\item \textbf{Edge identities transported exactly.} A recorded bijection $\relab_E : E \to E'$.
Disqualify the case if fault identifiers are shared. Disqualify the case if identifiers are generated
from post-relabelling insertion order without a recorded inverse. Disqualify the case if the adapter
maps a selected path to a parity projection, not the full chosen edge set.

\item \textbf{Both returned solutions are feasible in the canonical graph.} If one is not, the finding
is an interface violation of the residual property and is reported as that, not as this.

\item \textbf{Weights computed independently.} Never from the decoder's reported weight field.
$\wt{S} = \sum_{e \in S} w_e$ from the frozen canonical edge table. Section~\ref{sec:accessmap}
measures that no method in this study validates that field, our own contract among them.

\item \textbf{Exact integer arithmetic for the primary result.} This is the protection against the
objection that a firing is a floating-point tie-break. Weights are non-negative integers with
$\sum_E w_e < 2^{52}$ and all sums accumulated as exact integers; for the solver that takes native
even integers, all reachable sums are below a pre-committed $2^{60}$ and a firing must differ by at
least two native units. A disagreement visible only in floating-point weights is reported separately
as numerical presentation sensitivity and may not be the headline.

\item \textbf{No adapter-dependent graph construction.} Both presentations go through the library's
native graph interface from the same canonical graph. Never one from a detector error model and one
from an edge list, and never let insertion order select a different merge policy. Disqualify if the
firing disappears when reproduced by a standalone script that imports none of our code.

\item \textbf{Fresh decoder objects and fresh processes.} For each paired comparison: a clean process,
build for presentation A, decode once, terminate; then independently for B; then repeat in reverse
order. A result that depends on prior calls is a purity finding, not this one.

\item \textbf{Randomness pinned or excluded.} A reported firing must reproduce under a pinned
execution.

\item \textbf{Survives the inverse transformation.} Re-run the returned canonical set forward and
require $\relab(S_2) = S_2'$ exactly. This catches the case where the apparent decoder defect is an
inverse-permutation defect in the harness.

\item \textbf{A standalone bundle reproduces it}: canonical graph, boundary, exact integer weights,
canonical syndrome, both relabelling maps, both raw returned edge sets, independently computed
residuals and weights, library version and configuration, and a replay command. If it cannot be
reproduced from that bundle without importing our code, there is no adverse verdict.
\end{enumerate}

\begin{table}[t]
\caption{The classification fixed in advance for every outcome a weight disagreement can have. The
first row is the only one that is a conformance verdict.}
\label{tab:classification}
\small
\begin{tabular}{ll}
\toprule
condition & classification \\
\midrule
all twelve pass & conformance violation of the declared capability \\
the graph differs after adaptation or preprocessing & adapter or construction mismatch \\
objective equality holds under native integer values & no violation; weight-computation error \\
only a small floating-point disagreement & numerical presentation sensitivity \\
one output does not explain the syndrome & residual property violation \\
disagreement only after prior calls & purity violation \\
the declaration does not cover the configuration & out of scope \\
only an approximate outer mapping differs & approximate front-end behaviour \\
\bottomrule
\end{tabular}
\end{table}

Table~\ref{tab:classification} is what makes the null of Section~\ref{sec:hunt} readable. A search
whose positive outcome has eight possible names, only one of which is a verdict, cannot be steered
towards a verdict after the fact. Two of the eight names were used during the study: the colour-code
probe closed as out of scope under condition~1, and the instrument-health counters that conditions~2,
5 and~11 require were recorded as clean throughout the four cells reported in
Table~\ref{tab:hunt}.

\section{Per-subject and per-instance results}
\label{app:tables}

Section~\ref{sec:measurement} reports the panel's results by effect. This appendix reports them by
subject and by instance, so that a reader can locate any number in the main text in the cell it came
from.

\subsection{The panel, relation by relation}

\begin{table}[t]
\caption{Changed answers and moved optimality, per subject and per relation, over the exhaustive
population of Table~\ref{tab:instances}. ``chg'' counts syndromes whose returned correction changed
under the relation; ``flip'' counts the subset of those on which the optimality of the answer moved,
and is defined only on the 2867 syndromes with a proved-unique optimum. ``base subopt'' counts
suboptimal answers in the unmodified presentation. RESID counts corrections that failed to explain
their syndrome, and OPT is the conformance layer, dispatched only where $L_2$ is
declared. ``---'' in the declared column means the subject declares nothing optional; $L_0$ is
mandatory and so is never listed as a declaration.}
\label{tab:panelfull}
\small
\begin{tabular}{llrrrrrrr}
\toprule
subject & declared & R1 chg/flip & R2 chg/flip & R5 chg & R6 chg/flip & base subopt & RESID & OPT \\
\midrule
\textsc{pm}     & $L_2$ & 0 / 0 & 0 / 0 & 0 & 0 / 0 & 0 & \textbf{0} & \textbf{0} \\
\textsc{pmc}    & --- & 0 / 0 & 0 / 0 & 0 & 0 / 0 & 0 & \textbf{0} & \textbf{0} \\
\textsc{fb}     & $L_2$ & 0 / 0 & 0 / 0 & 0 & 0 / 0 & 0 & \textbf{0} & \textbf{0} \\
\textsc{tess}   & --- & 0 / 0 & 0 / 0 & 0 & 0 / 0 & 0 & \textbf{0} & --- \\
\textsc{mwpf-j} & --- & 3 / 3 & 3 / 3 & 0 & 0 / 0 & 2 & \textbf{0} & --- \\
\textsc{mwpf-s} & --- & 6 / 6 & 6 / 5 & 1 & 0 / 0 & 3 & \textbf{0} & --- \\
\textsc{mwpf-u} & --- & 43 / 18 & 91 / 26 & 248 & 0 / 0 & 458 & \textbf{0} & --- \\
\textsc{ldpc}   & --- & 10 / 10 & 0 / 0 & 19 & \textbf{1598 / 1590} & 18 & \textbf{0} & --- \\
\textsc{ldpc-r} & $\mathit{rand}$ & 84 / 84 & 0 / 0 & 0 & \textbf{1748 / 1730} & 47 & \textbf{0} & --- \\
\bottomrule
\end{tabular}
\end{table}

Three columns of Table~\ref{tab:panelfull} carry the section's claims. RESID is zero in every cell,
which over all relations and instances is the $717{,}244$ judged results of
Observation~\ref{obs:resid}. OPT is zero wherever it is dispatched. This is a sanity check that the instrument agrees with the
two exactness declarations, not a discovery. An exact optimiser returning the optimum is what the
declaration says. The informative zero in that column's
neighbourhood is \textsc{tess}, which behaves exactly on every syndrome while declaring nothing, and
is the reason an instance-measured result may never be promoted into a capability claim.

The R6 column is the one that overturned a prediction. The four subjects carrying
$\mathit{int\text{-}weights}$ --- one by declaration, three by measurement --- changed nothing under
weight scaling; of the five carrying no such flag, two changed $1598$ and $1748$ answers and three
changed nothing. Proposition~\ref{app:prop:scale} says the argmin cannot move, so
the movement is not in the optimisation: rescaling the weights rescales the channel probabilities that
belief propagation is initialised from.

\subsection{Where the certificate fires, against what the oracle can see}

\begin{table}[t]
\caption{Certificate recall, per subject and per relation, on the syndromes where the comparison is
defined. The denominator is the suboptimal variant answers an exact oracle finds on unique-optimum
syndromes; the numerator is how many of those the weight comparison also witnesses. Both relations are
reported, because either alone would be a selected result.}
\label{tab:recall}
\small
\begin{tabular}{lrrr}
\toprule
subject & R1 & R2 & pooled \\
\midrule
\textsc{mwpf-j} & 1/1  & 1/1  & \textbf{2/2} \\
\textsc{mwpf-s} & 4/5  & 3/4  & \textbf{7/9} \\
\textsc{ldpc-r} & 48/59 & 0/47 & \textbf{48/106} \\
\textsc{ldpc}   & 6/20 & 0/18 & \textbf{6/38} \\
\textsc{mwpf-u} & 2/444 & 24/476 & \textbf{26/920} \\
\bottomrule
\end{tabular}
\end{table}

Pooled recall in Table~\ref{tab:recall} ranges from $3\%$ to $100\%$ by subject, which is the
quantitative form of Observation~\ref{obs:whenfires}: the certificate is a witness-producing lower
bound and not a detector. Suboptimality that both presentations share leaves the totals equal and
produces nothing. In the other direction the certificate reaches where this particular oracle-based
count cannot, since that count is defined against a designated optimal edge set and there is no such
set when two optima have equal weight, while a weight difference between two feasible answers is a
proof either way.

\subsection{Beyond the enumerable population}

\begin{table}[t]
\caption{Certificates on the rotated surface code at code capacity, R1 / R2, over a deterministic
stride-spread selection from the error patterns of weight at most four, capped per instance. That
selection is the benchmark's population and is not the instance's complete syndrome space, which
cannot be enumerated at these distances. No randomness and no seed. No infeasible pair was encountered anywhere, and both
corrections are checked for feasibility before a certificate is issued.}
\label{tab:scale}
\small
\begin{tabular}{lrrrrrrrrrr}
\toprule
$d$ & edges & syndromes & \textsc{pm} & \textsc{fb} & \textsc{tess} & \textsc{mwpf-j} & \textsc{mwpf-s} & \textsc{mwpf-u} & \textsc{ldpc} & \textsc{ldpc-r} \\
\midrule
7  & 43  & 3797 & 0/0 & 0/0 & 0/0 & 0/0 & 5/3 & 0/2 & 80/5 & 101/3 \\
9  & 73  & 4527 & 0/0 & 0/0 & 0/0 & 0/0 & 0/0 & 0/2 & 27/8 & 42/5 \\
11 & 111 & 4604 & 0/0 & 0/0 & 0/0 & 0/0 & 0/0 & 0/1 & 10/0 & 29/1 \\
13 & 157 & 4651 & 0/0 & 0/0 & 0/0 & 0/0 & 0/0 & 0/0 & 4/2 & 17/1 \\
15 & 211 & 4709 & 0/0 & 0/0 & 0/0 & 0/0 & 0/0 & 0/0 & 2/0 & 12/0 \\
\bottomrule
\end{tabular}
\end{table}

Table~\ref{tab:scale} is the same check run where the exhaustive population stops. \textsc{pmc} is
omitted because it computes what \textsc{pm} computes. The two exactness-declaring subjects produce no
certificate at any distance, which is the code-capacity companion to the circuit-level search of
Section~\ref{sec:hunt}. The belief-propagation configurations produce certificates at every distance,
and the count falls with $d$ because the error patterns are of bounded weight while the graph grows.

\subsection{The observable-width hazard}

\begin{table}[t]
\caption{Relations R3 and R3b on the second population, over the declared-observable strata
$k \in \{1, 2, 8, 30, 31, 32, 33, 62, 63, 64, 65\}$. The seven correction-returning subjects do not
compute an observable vector, so both relations are \textsc{not-applicable} to them and are never
totalled with a pass.}
\label{tab:width}
\small
\begin{tabular}{lrr}
\toprule
subject & R3 width violations & R3b offset violations \\
\midrule
\textsc{pm}  & \textbf{391 / 391} & 0 / 391 \\
\textsc{pmc} & \textbf{391 / 391} & 0 / 391 \\
the seven correction-returning subjects & \textsc{n/a} $\times 11$ each & \textsc{n/a} $\times 11$ each \\
\bottomrule
\end{tabular}
\end{table}

Table~\ref{tab:width} is the R3 and R3b result. A model declaring $k+1$ observables produced a width-$k$
vector in every case: the library sizes its
output from the fault identifiers it was given, and our adapter does not widen it. This was
pre-registered as a caller-side interface hazard, reported with its documented mitigation --- calling
the library's own routine to raise the fault-identifier count, verified to change the returned width
from $k$ to $k+1$ --- and it is not a library defect and not a detection. R3b is the complementary
case: when the top observable is carried by a mechanism, the width is right. The hazard is precisely
that a declared but unexercised observable disappears.

\subsection{Transformation faithfulness}

R5 and R6 rebuild the model, not relabel it. The transformation can therefore move the optimum. A
cell on which it does is excluded and counted, not scored. R5 was judged over $38{,}340$
results with $12{,}537$ excluded on that ground, by Definition~\ref{app:def:split}. R6 was judged over
$460{,}080$ results with $153{,}360$ excluded, which is exactly one third: four of the twelve $\gamma$
rungs, at both ends of the ladder, where the probability representation clamps and the transformation
stops being a scaling. R6's exclusions are decided analytically, not by a solver for each scaled graph. A realised scaling
is a positive scaling of the objective. Proposition~\ref{app:prop:scale} therefore settles it. This
made the full population affordable.

\subsection{Purity}

Relation R4 was judged over $102{,}240$ results across three protocols, namely a fresh adapter per
call, reverse order, and immediate repeat, on every instance. All eight dispatched subjects produced
zero differences in every protocol, including zero first-versus-second differences on the
immediate-repeat protocol. No subject on the panel declares $\mathit{det}$, so none of those results
is a conformance verdict; they are instance-measured observations, which is the layer this relation
reaches on a panel that documents no determinism guarantee. \textsc{ldpc-r}, which declares a
randomised schedule, is a legal skip, and was measured anyway as a control over $12{,}780$ results,
where it also showed zero differences. The control supports one claim. This subject's declaration is conservative on these instances. The
control does not show that the gate is load-bearing. This is an observation about the declaration,
not a finding against the library.

\section{Replication over independent relabellings}
\label{app:replication}

Exhausting a syndrome space replicates over syndromes. The remaining free choice the experiment
makes is \emph{which} relabelling to apply. This appendix reports the replication that varies it, with twenty independent relabellings drawn per instance and per relation
and the whole measurement repeated on each.

\subsection{The design}

A draw fixes one relabelling per instance and per relation, and every subject on the panel receives
the same draw. The draws are therefore paired across subjects, which is what licenses the paired tests
below and forbids comparing two subjects by whether their marginal ranges overlap. Seeds are derived
by hashing the study seed together with the instance identifier and the relation name, so a draw is
reproducible and is not a function of process state. The derivation is stated here so that a reader
can regenerate any draw.

\subsection{Per-subject results over the twenty draws}

\begin{table}[t]
\caption{Changed answers on unique-optimum syndromes and certificates produced, over 20 independent
relabelling draws per relation. ``draw 0'' is the draw reported in the main text, included so that a
reader can see it is not an outlier. The four subjects at the top produced no change and no
certificate in any of their 40 draws.}
\label{tab:replicationfull}
\small
\begin{tabular}{llrrrrrrrr}
\toprule
& & \multicolumn{4}{c}{changed on unique optima} & \multicolumn{4}{c}{certificates} \\
\cmidrule(lr){3-6}\cmidrule(lr){7-10}
subject & rel. & draw 0 & min & median & max & draw 0 & min & median & max \\
\midrule
\textsc{pm}      & R1 & 0 & 0 & 0 & 0 & 0 & 0 & 0 & 0 \\
\textsc{pm}      & R2 & 0 & 0 & 0 & 0 & 0 & 0 & 0 & 0 \\
\textsc{pmc}     & R1 & 0 & 0 & 0 & 0 & 0 & 0 & 0 & 0 \\
\textsc{pmc}     & R2 & 0 & 0 & 0 & 0 & 0 & 0 & 0 & 0 \\
\textsc{fb}      & R1 & 0 & 0 & 0 & 0 & 0 & 0 & 0 & 0 \\
\textsc{fb}      & R2 & 0 & 0 & 0 & 0 & 0 & 0 & 0 & 0 \\
\textsc{tess}    & R1 & 0 & 0 & 0 & 0 & 0 & 0 & 0 & 0 \\
\textsc{tess}    & R2 & 0 & 0 & 0 & 0 & 0 & 0 & 0 & 0 \\
\midrule
\textsc{mwpf-j}  & R1 & 4 & 1 & 2.0 & 4 & 11 & 2 & 10.0 & 12 \\
\textsc{mwpf-j}  & R2 & 2 & 1 & 2.0 & 5 & 5 & 5 & 10.0 & 16 \\
\textsc{mwpf-s}  & R1 & 5 & 1 & 3.5 & 6 & 20 & 5 & 19.5 & 30 \\
\textsc{mwpf-s}  & R2 & 5 & 2 & 5.0 & 8 & 25 & 13 & 24.5 & 38 \\
\textsc{mwpf-u}  & R1 & 20 & 3 & 20.0 & 43 & 28 & 14 & 29.0 & 42 \\
\textsc{mwpf-u}  & R2 & 96 & 69 & 104.5 & 160 & 67 & 55 & 72.0 & 108 \\
\midrule
\textsc{ldpc}    & R1 & 7 & 6 & 12.0 & 17 & 175 & 160 & 180.0 & 224 \\
\textsc{ldpc}    & R2 & 0 & 0 & 0.0 & 0 & 3 & 0 & 2.0 & 3 \\
\textsc{ldpc-r}  & R1 & 109 & 52 & 79.0 & 135 & 283 & 154 & 217.5 & 293 \\
\textsc{ldpc-r}  & R2 & 0 & 0 & 0.0 & 0 & 1 & 0 & 1.0 & 1 \\
\bottomrule
\end{tabular}
\end{table}

Two features of Table~\ref{tab:replicationfull} are worth naming. The two belief-propagation
configurations produce many certificates under R1 and almost none under R2, while the union-find
solver produces them under both: the relations ask different questions, and a subject can be sensitive
to mechanism ordering without being sensitive to detector numbering. And the certificate count is not
a monotone function of the changed-answer count, because a change between two optima of equal weight
produces no certificate.

\subsection{The paired tests}

\begin{table}[t]
\caption{Paired sign tests over the 20 draws, on the count of changed answers at unique optima. Each
row compares two subjects that saw the same relabelling on every draw. $p$ is two-sided exact.}
\label{tab:signtests}
\small
\begin{tabular}{llrrrr}
\toprule
comparison & rel. & first greater & ties & second greater & $p$ \\
\midrule
\textsc{mwpf-j} vs \textsc{mwpf-s} & R1 & 0 & 4 & 16 & $3.05 \times 10^{-5}$ \\
\textsc{mwpf-j} vs \textsc{mwpf-s} & R2 & 0 & 0 & 20 & $1.91 \times 10^{-6}$ \\
\textsc{mwpf-j} vs \textsc{mwpf-u} & R1 & 0 & 1 & 19 & $3.81 \times 10^{-6}$ \\
\textsc{mwpf-j} vs \textsc{mwpf-u} & R2 & 0 & 0 & 20 & $1.91 \times 10^{-6}$ \\
\textsc{mwpf-s} vs \textsc{mwpf-u} & R1 & 2 & 0 & 18 & $4.02 \times 10^{-4}$ \\
\textsc{mwpf-s} vs \textsc{mwpf-u} & R2 & 0 & 0 & 20 & $1.91 \times 10^{-6}$ \\
\bottomrule
\end{tabular}
\end{table}

The six tests of Table~\ref{tab:signtests} order the three configurations of one library:
joint-single-hair changes fewest answers, single-hair more, union-find most, under both relations. The
checker performs $111$ paired comparisons in total, so a Bonferroni threshold at $\alpha = 0.05$ is
$4.50 \times 10^{-4}$, and all six survive it against a largest observed value of
$4.02 \times 10^{-4}$. We report the correction, not the six tests alone. The six were selected after the comparisons were
computed. A threshold that accounts for the whole family is the only honest way to present a selected
subset.

\subsection{What the four zeros mean}

The subjects at the top of Table~\ref{tab:replicationfull} produced no changed answer on a
unique-optimum syndrome and no certificate in any draw. They did change the correction they returned
on syndromes with two optima of equal weight, which is a freedom the problem grants. Across their
forty draws each there were $12{,}835$ such changes for \textsc{pm}, the same for \textsc{pmc},
$17{,}017$ for \textsc{fb} and $20{,}253$ for \textsc{tess}. The invariant that holds is about unique
optima, and those counts come from syndromes where it does not apply. A decoder that returns the
unique optimum whenever one exists is presentation-invariant on exactly those syndromes, by
definition, and free elsewhere.

That one-line argument also explains the study's central failed prediction. The pre-registration
predicted that the first two relations would hold for all subjects. It held for the four that returned
the unique optimum on every such syndrome, for the reason just given, and failed for the five that did
not. The contrapositive of that reason is what the certificate of Lemma~\ref{lem:certificate}
exploits.

\section{The full detection matrices}
\label{app:detection}

Section~\ref{sec:accessmap} groups the thirteen operators by the component they perturb. This appendix
gives the matrices operator by operator, for both populations, so that the grouping can be checked and
so that a reader designing a different operator mix can see what each one contributed.

\subsection{Primary population}

\begin{table}[t]
\caption{Detection per operator on the primary population. 387 distinct behaviours after
deduplication across operators within each instance, with 48 cross-operator collisions removed and
counted; 6 equivalent and 381 live. Equivalence is judged over the full domain the detectors exercise,
so no behaviour is marked equivalent and then detected by anything.}
\label{tab:detectionprimary}
\footnotesize
\setlength{\tabcolsep}{4pt}
\begin{tabular}{llrrrrrrrr}
\toprule
& & & \multicolumn{2}{c}{error rate} & & & & & \\
\cmidrule(lr){4-5}
operator & perturbs & live & obs. & corr. & width & obs.\ chk & corr.\ chk & contract & weight \\
\midrule
O1 permute indices  & correction & 72 & 0 & 66 & 0 & 0 & \textbf{72} & \textbf{68} & 24 \\
O2 shift indices    & correction & 50 & 0 & 50 & 0 & 0 & \textbf{50} & \textbf{50} & 18 \\
O3 drop an edge     & correction & 19 & 0 & 19 & 0 & 0 & \textbf{19} & \textbf{19} & \textbf{19} \\
O4 add an edge      & correction & 59 & 0 & 22 & 0 & 0 & \textbf{59} & \textbf{59} & \textbf{59} \\
O5 duplicate an edge& correction & 19 & 0 & 19 & 0 & 0 & \textbf{19} & \textbf{19} & \textbf{19} \\
O6 sort by weight   & correction & 0  & \multicolumn{7}{l}{all six behaviours equivalent} \\
\midrule
O7 truncate obs.    & observables & 6  & 2 & 0 & \textbf{6} & \textbf{6} & 0 & 0 & 0 \\
O8 pad obs.         & observables & 17 & 0 & 0 & \textbf{17} & \textbf{17} & 0 & 0 & 0 \\
O9 rotate obs.      & observables & 6  & 0 & 0 & 0 & 0 & 0 & \textbf{6} & 0 \\
O10 flip an obs. bit& observables & 12 & \textbf{12} & 0 & 0 & \textbf{12} & 0 & 6 & 0 \\
\midrule
O11 stale answer    & call history & 41 & \textbf{41} & \textbf{41} & 0 & 39 & \textbf{41} & \textbf{41} & 0 \\
O12 round weights in& model in     & 8  & 0 & 0 & 0 & 0 & 2 & 0 & 0 \\
\midrule
\textbf{O13 perturb the returned weight} & \textbf{weight} & \textbf{72} & \textbf{0} & \textbf{0} & \textbf{0} & \textbf{0} & \textbf{0} & \textbf{0} & \textbf{72} \\
\midrule
pooled & & 381 & 55 & 217 & 23 & 74 & 262 & 268 & 211 \\
\% & & & 14.4 & 57.0 & 6.0 & 19.4 & 68.8 & 70.3 & 55.4 \\
Wilson 95\% & & & [11.3, & [51.9, & [4.1, & [15.8, & [63.9, & [65.6, & [50.4, \\
 & & & 18.3] & 61.8] & 8.9] & 23.7] & 73.2] & 74.7] & 60.3] \\
\bottomrule
\end{tabular}
\end{table}

Three rows in Table~\ref{tab:detectionprimary} are worth reading beyond the grouping in the main text.
O6, which sorts the edges of the returned correction by weight, produced six behaviours. All six
were equivalent. Sorting a set that the caller reads as a set changes nothing observable. Each
behaviour is therefore reported as equivalent, not as undetected. O10 flips an observable bit. It separates the two observable-based arms from the width check. A
flipped bit has the wrong value, not the wrong length. O12, which rounds the model weights before decoding, is detected twice by
the correction check and never by the contract, because rounding is applied consistently and a
relation compares a decoder with itself.

The correction-native error-rate column varies across the correction operators. The pooled figure
hides this variation. The variation is the mechanism, not noise. O2, O3 and O5 move the logical
class on every behaviour they produce: shifting every index, dropping an edge and duplicating an edge
each change the parity the caller computes. O1 moves it on $66$ of $72$, since a permutation of the
returned indices can happen to land on a set of the same parity. O4 moves it on only $22$ of $59$,
because an added edge that touches no logical operator leaves the class alone while still handing the
caller a correction it should not apply. A benchmark on the correction-native path therefore sees most of this class. It misses the quietest
part. The $43$ misses are concentrated in the operator that adds. They are not concentrated in the
operator that removes.

The weight column repays reading beyond its own row. It fires on $139$ of the $219$ correction
behaviours as well, because a correction whose edges have been permuted, dropped, added or duplicated
no longer carries the weight the decoder computed for the answer it actually chose: the returned
object has become internally inconsistent, and the check notices. It fires on none of the observable
or call-history classes, which is right, since those leave the correction and its weight agreeing with
each other.

The library's own test suite is recorded as \textsc{not-applicable} throughout. These behaviours live
in a wrapper, not in the library. That decision was made before the study ran, not after the results
were seen. It is never counted as a miss.

The weight arm's tolerance is a measured quantity. An
unmutated pymatching returns a weight agreeing with the caller's own summation only to about
$2.2 \times 10^{-8}$ relative on the rotated surface instances, so at a $10^{-9}$ tolerance the check
fired on the \emph{baseline} for two of six instances, and the differential rule then suppressed the
arm across every mutant of those instances, reporting $48$ of $72$ instead of $72$. The tolerance is
$10^{-6}$, three orders above that measured noise and five below the smallest perturbation O13
applies, which multiplies the weight by a factor drawn from $[1.5, 3.0]$.

\subsection{Secondary population, never pooled with the primary}

The second population declares more than one logical observable, at strata $k \in \{2, 8, 32\}$. It
has its own denominator and is never pooled with the primary; the analysis script refuses to pool
them. It contributes $225$ distinct behaviours, $50$ cross-operator collisions removed, $3$ equivalent
and $222$ live, again with no behaviour marked equivalent and then detected.
Table~\ref{tab:detectionsecondary} pools it.

\begin{table}[t]
\caption{Pooled detection on the secondary population, with its own denominator.}
\label{tab:detectionsecondary}
\small
\begin{tabular}{lrr}
\toprule
method & score & Wilson 95\% \\
\midrule
error rate, observable-native & 57/222 = 25.7\% & [20.4, 31.8] \\
error rate, correction-native & 107/222 = 48.2\% & [41.7, 54.7] \\
width              & 31/222 = 14.0\% & [10.0, 19.1] \\
observable check   & 84/222 = 37.8\% & [31.7, 44.4] \\
correction check   & 106/222 = 47.7\% & [41.3, 54.3] \\
contract           & 146/222 = 65.8\% & [59.3, 71.7] \\
weight check       & 82/222 = 36.9\% & [30.9, 43.5] \\
\bottomrule
\end{tabular}
\end{table}

The two separations reappear on this population in the form the main text describes. On the
correction class the two observable-based arms detect $0$ of $98$, the correction check detects $98$
of $98$ and the contract $95$; the three it misses are again the ones R5 detected alone. The
correction-native error rate detects $95$ of $98$ here against $176$ of $219$ on the primary
population, and the reason for the difference is the number of observables: with two, eight or
thirty-two logical operators in the model, a perturbation of the correction has more parities to
disturb and far fewer ways to leave the class alone. The blind fraction of the correction class falls
from $20\%$ at one observable to $3\%$ above it. On the observable class the contract recovers part of
what it misses at one observable, $14$ of $31$, through the offset relation. And the weight class is
again detected by none of the established arms on either return path, $0$ of $36$, and by all $36$ of
the weight check.

\subsection{Statistical treatment}

The error rate runs one exact McNemar test per behaviour per return path on paired shots, so $1206$
tests over the $603$ behaviours of the two populations, at a nominal $\alpha = 0.01$. Bonferroni over
that whole family gives a threshold of $0.01/1206 = 8.29 \times 10^{-6}$.

Every one of the $112$ observable-native detections survives it, with a largest credited $p$ of
$1.86 \times 10^{-9}$. Of the $324$ correction-native detections, $321$ survive and \emph{three do
not}: two behaviours on instance I2 at $p = 1.95 \times 10^{-3}$ and one on I5 at
$1.50 \times 10^{-3}$, all three from O4, the operator that adds an edge. They are credited in the
tables at the nominal $\alpha$, and dropping them would move the correction-native pooled score from
$57.0\%$ to $56.2\%$ on the primary population and the correction class from $176$ to $173$ of $219$.
That the exceptions are all O4 is the same fact the operator discussion above reports from the other
side: an added edge that misses the logical operators produces the weakest signal this arm can carry,
and at nine discordant shots out of twenty thousand it is the one place where the paired test is
running near its resolution.

All interval estimates use Wilson intervals, not normal approximations. This choice matters at the
small cell counts in Table~\ref{tab:detectionprimary}.

\subsection{The falsifier that fired}

Section~\ref{sec:access:design} states what fired and why. The numbers are these. The first run of
the correction check gave $215$ of $219$ against a pre-registered prediction of all of O1--O5; the
arm had been given only the base presentation, while the property it implements is defined over every
presentation. After the repair the protocol prescribes, the re-run gave $219$ of $219$. Every number
in this appendix and in Section~\ref{sec:accessmap} comes from the repaired arm. Of the five
pre-registered falsifiers, this one fired and the other four did not.

\section{The witness bundle and its verifier}
\label{app:schema}

\subsection{Schema}

A bundle is one JSON object and Table~\ref{tab:schema} is the whole of it. Everything the verifier
needs is present, and nothing it needs is fetched.

\begin{table}[t]
\caption{The bundle schema. Every field is required except \texttt{variant\_edges\_given}, whose
absence causes the corresponding check to be skipped and reported as skipped.}
\label{tab:schema}
\small
\begin{tabular}{lp{0.62\linewidth}}
\toprule
field & content \\
\midrule
\texttt{schema} & the schema identifier the verifier expects \\
\texttt{kind} & the claim the bundle makes; here, a suboptimality witness \\
\texttt{origin} & study, instance identifier, code family, distance, relation, seed \\
\texttt{subject} & library name, pinned version, and whether it declares exactness \\
\texttt{graph} & \texttt{num\_detectors}; \texttt{edges} as endpoint pairs, with $-1$ for the virtual
boundary; \texttt{weights} as a representation tag and a list of strings \\
\texttt{variant\_edges\_given} & the edge list actually handed to the library for the second call \\
\texttt{relabelling} & \texttt{vertex\_permutation} and \texttt{edge\_permutation}, both explicit \\
\texttt{syndrome} & the detectors that fired, as indices into the canonical graph \\
\texttt{output\_A\_edge\_ids} & the first returned correction, in canonical indices \\
\texttt{output\_B\_edge\_ids\_variant\_space} & the second returned correction, in the indices it was
returned in \\
\texttt{expected} & the two claimed weights as strings, and which side is claimed heavier \\
\bottomrule
\end{tabular}
\end{table}

Weights are carried as strings and never as JSON numbers. The representation tag is either
\texttt{int}, in which case each value is an exact decimal integer, or \texttt{hex}, in which case
each value is an IEEE-754 double in round-trip hexadecimal form. Both survive serialisation exactly.
Totals over hexadecimal weights are accumulated with compensated summation, which matters because the
claim is that two totals differ.

\subsection{The eleven checks}

The verifier is a standard-library program that imports nothing from this project and nothing from any
decoder library. It runs these checks in order and stops at the first failure.

\begin{enumerate}
\item \textbf{schema} --- the identifier is the expected one.
\item \textbf{graph well formed} --- the detector count is a positive integer, the edge list is
non-empty, every edge is a pair of integers, the first endpoint is a detector in range, the second is
a detector in range or the boundary, and there is one weight per edge.
\item \textbf{relabelling is a bijection} --- both permutations are permutations, of the detector set
and of the edge set respectively.
\item \textbf{variant graph is the image} --- for every edge, the variant edge at the permuted position
has the incidence the relabelling predicts, with boundary status handled explicitly on both sides.
\item \textbf{syndrome in range} --- every detector named exists in the graph.
\item \textbf{outputs well formed} --- both returned corrections contain in-range edge identifiers and
no duplicates.
\item \textbf{output B transported} --- the second correction is carried back through the edge
permutation into canonical identifiers.
\item \textbf{both outputs explain the syndrome} --- the detector boundary of each transported
correction equals the syndrome exactly.
\item \textbf{weights match the claim} --- each claimed total is recomputed from the graph and
compared; exactly for integers, and within $10^{-9}$ absolute for doubles.
\item \textbf{weights differ} --- the two recomputed totals are different. A bundle whose totals agree
is not a witness and is rejected here.
\item \textbf{heavier side agrees} --- the side the bundle names as heavier is the heavier one.
\end{enumerate}

Check~10 is where a negative control is caught, and check~11 is the conclusion of
Lemma~\ref{lem:certificate} recomputed from the bundle. A reader who runs the verifier has taken our
word for nothing: not the arithmetic, not the transport, and not the feasibility of either correction.

\subsection{Result and controls}

All $639$ certificates produced by the study were exported and re-verified: $639$ passed and $0$
failed. A positive control passes all eleven checks. A negative control, built by taking a real
witness and setting the two weights equal, is rejected at check~10 with the reason that the two
corrections have the same weight and the bundle is therefore not a suboptimality witness.

What this establishes is bounded, and the bound is worth stating. It says each certificate is
internally checkable without our tooling. It says nothing about whether the population of certificates
was well chosen, which is a question about Section~\ref{sec:relations} and not about the verifier.

\section{Reproduction}
\label{app:reproduction}

\subsection{Environment}

Every measurement in this paper was made on one workstation, with no quantum hardware and no
simulator of a quantum device beyond stabiliser simulation. The pinned environment is recorded inside each artifact. It is not confined to the prose. The
workstation runs CPython $3.11.4$ on Windows with Stim $1.16.0$. It includes PyMatching $2.4.0$ and
fusion-blossom $0.2.13$. The \texttt{ldpc} package is version $2.4.1$, and NumPy is version $2.2.6$.
SciPy is version $1.17.1$, and NetworkX is version $3.6.1$. Two subjects run in a separate environment for packaging reasons. That
environment lacks the oracle's numerical dependency, so the tesseract subject appears in the
correctness tables and not in the cost tables.

\subsection{Freezing, and what a freeze is for}

The evaluation pipeline is frozen in a manifest. For every file that participates in a verdict, the
manifest records its path, its role, its size and its SHA-256. It also records the interpreter, the
platform, every library version, the git commit, whether the working tree was dirty at the time, and
a combined digest over the whole set. Each freeze names the freeze it supersedes, and the manifest carries the admission rule that
governs which defects may enter a prospective denominator, so that the rule cannot be restated after
a result is seen.

A freeze is not a reproducibility convenience. It makes ``this analysis was fixed before that data existed'' checkable by someone else. The
discipline works only if the freeze is verified, not asserted. The combined digest validates each freeze against its current manifest and its recorded supersession
chain.

\subsection{What is in the artifact}

\begin{itemize}
\item The harness --- code construction, the matching-graph reduction, the relation implementations,
the exact oracle, and the adapters for every subject on the panel.
\item The drivers, one script per study, each writing a JSON artifact with its own provenance header
--- the protocol commit, the implementation commit, whether the tree was dirty at the start, and the
head commit at the moment of writing.
\item The commit identifier of the protocol each study ran under, recorded in the artifact that
study produced. The confirmatory protocols were committed alone and before the code that acts on
them existed. The single exploratory protocol is marked exploratory because it was not.
\item The frozen pipeline manifests, in supersession order.
\item The errata file, recording each correction made to an instrument during this work and the
result it affects.
\item The $639$ witness bundles and the standalone verifier of Appendix~\ref{app:schema}, which is
also published as an installable package.
\item The claim inventory, which records for each claim the permitted wording, the forbidden
formulation, and the artifact it is evidenced by. Every number in this paper was read from an artifact
named there at the moment of writing.
\end{itemize}

\subsection{Running the studies}

Each study is one command and writes one artifact. The cross-library measurement collects raw returns
first and judges them in a second process that imports no decoder library, so the oracle half of the
study can be run in an environment where no subject is installed --- which is also how we know the
judging code cannot consult a subject. The mutation study writes its own lock file and refuses to
start if another run holds it, which serialises runs that would otherwise race for one output file.

The independent verifier runs on a single bundle or on a directory of them, exits zero only if every
bundle passes every check, and prints a per-check reason otherwise. It is the one component a reader
can run without installing any decoder library at all.

\subsection{Numbers, and where each one came from}

The claim inventory names, for every table in this paper, the artifact it was read from. A script re-reads those artifacts and diffs the values against the manuscript source. It catches any
number that drifts from its artifact mechanically, not through proofreading. The bibliography follows the same discipline. Each entry is resolved against a registry by digital
object identifier or preprint identifier. Its title is compared against the resolved record. An entry
that cannot be resolved is dropped, not guessed at. Two entries in this paper are standards documents whose registries refuse programmatic
requests; both are marked as manually verified, with the clauses quoted by an independent reader
recorded alongside them.

\section{The instance population in full}
\label{app:population}

\subsection{How the instances are built}

Every instance in Table~\ref{tab:instances} is a stabiliser memory experiment compiled to a detector
error model, at code capacity with $p = 0.01$ and perfect syndrome extraction. For the particular
bounded-distance property evaluated here, code capacity is not a convenience but a requirement, since
that property is stated in terms of physical error weight and needs syndrome extraction to be
noiseless. Other bounded-distance guarantees exist against circuit faults and are stated over
spacetime fault distance; they are outside the property this paper checks.

Two further hypotheses are not met everywhere in the table, and both matter. Bounded-distance
correctness is evaluated against an injected error of weight at most $t$, so it speaks only about the
cells that carry such an error, not about every enumerated syndrome. And the repetition stabiliser
code has full-Pauli distance $1$, which under the frozen precondition gives $t = 0$, so no repetition
instance is dispatched to it at all. The familiar distance-$d$ guarantee for repetition codes is
sector-restricted --- it holds inside the protected Pauli sector --- and the frozen precondition is
stated over full-Pauli distance and does not restrict to that sector. That is a finding about our precondition, not about repetition decoding. Repetition decoding does
perform bounded-distance correction in its sector.

The ground-truth check matrix is built by single-qubit injection, not read off the detector error
model. The reason is a real hazard, not a stylistic preference. A model generator merges
error mechanisms that share a detector-and-observable signature, and on a rotated surface code two
distinct single-qubit errors do share one whenever they differ by a weight-two boundary stabiliser.
That merging is correct as a model and it destroys the correspondence between a column and a physical
qubit, which is the correspondence a bounded-distance hypothesis is stated over. So the ground-truth
model keeps one column per physical qubit, while the decoder consumes the merged model, which is what
a decoder sees in production.

\subsection{Sizes, syndromes and unique optima}

Table~\ref{tab:populationfull} gives every instance and its syndrome count.

\begin{table}[t]
\caption{The primary population in full. Every syndrome of every instance is tested, so there is no
sample and no sampling error. ``unique'' counts syndromes whose minimum-weight solution an exact
solver proves unique; no syndrome was left undecided.}
\label{tab:populationfull}
\small
\begin{tabular}{llrrrrr}
\toprule
id & family & $d$ & detectors & edges & syndromes & unique optima \\
\midrule
I1 & repetition       & 3 & 2  & 3  & 4    & 4 \\
I2 & repetition       & 5 & 4  & 5  & 16   & 16 \\
I3 & repetition       & 7 & 6  & 7  & 64   & 64 \\
I4 & rotated surface  & 3 & 4  & 7  & 16   & 16 \\
I5 & unrotated surface& 3 & 6  & 13 & 64   & 31 \\
I6 & rotated surface  & 5 & 12 & 21 & 4096 & 2736 \\
\midrule
total & & & & & \textbf{4260} & \textbf{2867} \\
\bottomrule
\end{tabular}
\end{table}

Every invariant that compares a chosen correction against a designated edge set is counted over the
$2867$ syndromes with a unique optimum, because two distinct optima of equal weight leave a decoder
free to return either. Every invariant that only requires feasibility --- the residual property, and
the certificate of Lemma~\ref{lem:certificate} --- is counted over all $4260$. The distinction is
about which comparison is well posed, and it is worth stating precisely, because optimality itself
remains well posed everywhere: the returned weight can always be compared with the optimum value. What
uniqueness buys is the comparison with a particular \emph{edge set}, which is what the equivariance
invariant needs. Keeping the two denominators apart is also what lets the certificate reach syndromes
the edge-set comparison cannot, since two feasible answers of different weight are a proof whether or
not the optimum is unique.

Every relation runs over the whole population of the table above, with no size cap anywhere in the
harness, which is what makes the counts in Section~\ref{sec:measurement} panel-wide.

\subsection{The second population}

Relations R3 and R3b need a model whose declared observable count is the only quantity that varies, so
they run on a family built for the purpose: mechanism $k$ flips detector $k$ and observable $k$. Every
syndrome there has exactly one explanation, so no tie-breaking is involved and no oracle is needed.
The declared-observable strata are
$k \in \{1, 2, 8, 30, 31, 32, 33, 62, 63, 64, 65\}$, chosen to bracket the $32$-bit and $64$-bit
machine word boundaries on both sides, which is where a width defect would be expected to live if it
were a packing defect. The mutation study of Section~\ref{sec:accessmap} uses the multi-observable
strata $k \in \{2, 8, 32\}$ from the same family as its secondary population, with its own denominator
that is never pooled with the primary one.

\subsection{Circuit level}

The search of Section~\ref{sec:hunt} runs on circuit-derived graphs, where the graphs stop being small
and an exhaustive population is out of reach. Two instances are used, the rotated surface code at $d = 11$
over $11$ rounds, giving $1320$ detectors and $6718$ edges, and at $d = 13$ over $13$ rounds, giving
$2184$ detectors and $11{,}398$ edges. Weights there are replaced by exact even integers drawn from a
broad deterministic spectrum in $[2, 4096]$, with every reachable sum well inside the exactly
representable range, so the comparison happens in integer arithmetic and a firing must differ by at
least two units in the solver's own domain.

The physical weights are replaced. The resulting circuit-derived matching graphs carry artificial
integer weights, not an extracted circuit-level noise model. The search is a solver stress test on
realistic topology, not a noise-model experiment. Syndromes are the
detector boundaries of deterministic edge subsets, stratified over
$|\syn| \in \{8, 16, 32, 64\}$ as the protocol fixed in advance. The lower bound is deliberate: syndromes with very few fired detectors make
exact matching easy and measure almost nothing. Eight hundred syndromes and fifty relabellings per
instance give $40{,}000$ paired comparisons per cell, and four cells give the $160{,}000$ of
Table~\ref{tab:hunt}. Both presentations are built through each library's own graph interface from the
same canonical graph, the graph is asserted simple at construction, and the permuted graph is verified
edge by edge against the image of the canonical one --- and the inverse map applied and compared ---
before either decode call.

\section{The declaration audit}
\label{app:audit}

Table~\ref{tab:declarations} is the paper's main finding, and a finding about documentation has to
ship the documentation. This appendix gives the audit behind it: what was read, at which version,
what it said in its own words, and what was searched for and not found.

\subsection{Method}

Declarations were extracted on 2026-08-28 from the \emph{installed} packages, not from project
websites. A website is not versioned against the artefact a caller depends on. For each
package the source is its \texttt{dist-info/METADATA}, which carries that project's README verbatim,
together with the docstrings of the classes and methods a caller actually calls. Every quotation below
is from those files at the pinned version in Table~\ref{tab:panel}.

A capability was recorded $\yes$ only where a sentence claims it. When no sentence claims a capability, the audit records $\silent$. It also records the search terms.
This shows the reader that the absence was established by a search, not an assumption. A $\no$ requires a sentence declining the capability, and none was
found anywhere on the panel.

\subsection{The complete matrix}
\label{app:audit:matrix}

Table~\ref{tab:auditmatrix} contains every cell. It covers nine subject configurations by six
capability dimensions, totalling $54$ codings. The codings are emitted from the adapters by
\texttt{scripts/r12\_audit\_matrix.py}, not retyped. The published matrix and the yield of
Table~\ref{tab:yield} therefore read the same declarations. Every cell is published, because a list
of only the cells where a sentence was found leaves a reader unable to separate a capability that was
searched for and not found from one that was never searched for.

\begin{table}[t]
\caption{The declaration audit, complete: nine subject configurations by six capability
dimensions. \yes{} means a sentence in the shipped documentation claims the capability;
\silent{} means the terms listed below were searched and no sentence was found; \no{} means a
sentence declines it, and no cell on this panel is \no{}. A dagger marks a flag our own
driver established, not by a sentence. The flag is kept typographically distinct for the
reason Definition~\ref{def:declstate} gives. Four of the $54$ cells carry a declaration.}
\label{tab:auditmatrix}
\small
\setlength{\tabcolsep}{4pt}
\begin{tabular}{L{0.30\linewidth}cccccc}
\toprule
configuration, pinned version & $L_1$ & $L_2$ & $L_3$ & $\mathit{det}$ &
$\mathit{rand}$ & $\mathit{int\text{-}w.}$ \\
\midrule
pymatching 2.4.0 & \silent & \yes & \silent & \silent & \silent & \silent \\
pymatching 2.4.0, correlations on & \silent & \silent & \silent & \silent & \silent & \silent \\
fusion-blossom 0.2.13 & \silent & \yes & \silent & \silent & \silent & \yes \\
ldpc 2.4.1 & \silent & \silent & \silent & \silent & \silent & \silent \\
ldpc 2.4.1, random serial schedule & \silent & \silent & \silent & \silent & \yes & \silent \\
mwpf, joint-single-hair & \silent & \silent & \silent & \silent & \silent & \silent\textsuperscript{$\dagger$} \\
mwpf, single-hair & \silent & \silent & \silent & \silent & \silent & \silent\textsuperscript{$\dagger$} \\
mwpf, union-find & \silent & \silent & \silent & \silent & \silent & \silent\textsuperscript{$\dagger$} \\
tesseract-decoder & \silent & \silent & \silent & \silent & \silent & \silent \\
\bottomrule
\end{tabular}
\end{table}

Every $\silent$ cell was searched. The terms are fixed per dimension, not chosen per library. No cell
is silent because the coder stopped looking early. For $L_1$: \texttt{distance},
\texttt{bounded}, \texttt{correct}, \texttt{guarantee}, \texttt{up to}. For $L_2$: \texttt{exact},
\texttt{optimal}, \texttt{minimum-weight}, \texttt{MWPM}, \texttt{approximation}. For $L_3$:
\texttt{maximum likelihood}, \texttt{degenerate}, \texttt{ML decod}. For $\mathit{det}$:
\texttt{determinis}, \texttt{reproducib}, \texttt{same result}, \texttt{seed}. For $\mathit{rand}$:
\texttt{random}, \texttt{stochastic}, \texttt{schedule}, \texttt{shuffle}. For
$\mathit{int\text{-}weights}$: \texttt{integer}, \texttt{int weight}, \texttt{discret},
\texttt{round}, \texttt{truncat}. The files searched are the ones named in Section~\ref{app:audit},
per subject, and the per-cell record of source file, state, provenance, quotation and search terms is
in \texttt{artifacts/r12\_audit\_matrix.json}.

The matrix carries the paper's finding in its plainest form. Of $54$ capability questions a caller
might put to this panel's documentation, four are answered. Two of those four are the same claim of
exactness, from the two libraries that make it; the other two are flags. The column that matters most
to a caller, $L_1$, is empty across all nine rows, and so are $L_3$ and $\mathit{det}$.

\subsection{What each library says}

Table~\ref{tab:audit} gives the evidence behind the matrix's non-empty cells and near misses. The same
artifact emits the table. It is not transcribed, so the search terms shown against each cell are the
per-dimension terms Section~\ref{app:audit:matrix} fixes and cannot drift from them.

\begin{table}[t]
\caption{The evidence behind every cell the matrix does not leave empty, and behind the
near misses. A quotation is verbatim from the pinned package's own shipped documentation,
extracted 2026-08-28, cut at the clause that carries the claim. A \silent{} row appears here
only where the audit found something short of a claim and the coder recorded why it is not
one; every other \silent{} cell rests on the per-dimension search terms above. Emitted from
\texttt{artifacts/r12\_audit\_matrix.json}.}
\label{tab:audit}
\small
\begin{tabular}{L{0.16\linewidth}L{0.09\linewidth}L{0.66\linewidth}}
\toprule
configuration & state & the sentence, or what was found instead \\
\midrule
pymatching 2.4.0 & $L_2$ \yes & ``The new version is also exact --- unlike previous versions of PyMatching, no approximation is made.'' \\
\midrule
pymatching 2.4.0, correlations on & $L_2$ \silent & the mode is documented as running sparse blossom twice, reweighting after the first pass; the exactness sentence is not repeated for it \\
\midrule
fusion-blossom 0.2.13 & $L_2$ \yes & ``Correctness: This is an exact MWPM solver, verified against the Blossom V library with millions of randomized test cases.'' \\
 & $\mathit{int\text{-}weights}$ \yes & ``we can safely use integers to save weights by e.g. multiplying the weights by 1e6 and truncate to nearest integer.'' \\
\midrule
ldpc 2.4.1 & $L_2$ \silent & OSD is documented as a fallback used when BP does not converge; nothing claims exactness, optimality or bounded-distance correctness \\
\midrule
ldpc 2.4.1, random serial schedule & $L_2$ \silent & as LDPC; the randomised schedule option changes nothing about the claim \\
 & $\mathit{rand}$ \yes & ``random\_serial\_schedule ... Whether to use a random serial schedule order.'' \\
\midrule
mwpf, joint-single-hair & $L_2$ \silent & the README states hypergraph parity-factor decoding is NP-hard and declines the exactness conjecture in general, naming no solver \\
 & $\mathit{int\text{-}weights}$ \silent & no sentence claims integer internal weights; the driver integerises before the call \\
\midrule
mwpf, single-hair & $L_2$ \silent & as MWPF-J; the README never mentions this solver class \\
 & $\mathit{int\text{-}weights}$ \silent & as MWPF-J \\
\midrule
mwpf, union-find & $L_2$ \silent & as MWPF-J; the README's general statement about the exactness conjecture names no solver \\
 & $\mathit{int\text{-}weights}$ \silent & as MWPF-J \\
\midrule
tesseract-decoder & $L_2$ \silent & no documented exactness claim was located \\
\bottomrule
\end{tabular}
\end{table}

\subsection{Why a name and a paraphrase are not declarations}

Two codings on this panel turn on the distinction Definition~\ref{def:declstate} draws. A solver class
named for exactness is a name, and the audit reads names as evidence of nothing, so the mwpf family
is $\silent$ at $L_2$. A README sentence that declines an exactness \emph{conjecture} in general,
naming no solver, is not a sentence declining the capability of any particular solver, so
Table~\ref{tab:declarations} reports $\no$ as empty and not as $1/7$. Both are the general rule
applied to concrete text, and both are the reason the rule is stated in the strong form it has.

\subsection{Determinism has no documentary source anywhere on the panel}

No library on the panel documents a determinism guarantee. The audit searched for a determinism guarantee and recorded its absence for every subject. That
absence exposed the two defects reported in Section~\ref{sec:measurement:yield}. A mechanical probe,
not a declaration, populated one flag. Corollary~\ref{app:cor:det} shows that one requirement was
redundant. The probe
is gone and so is the requirement. $\mathit{det}$ is now demanded only by purity. There, the guarantee is the property under test, not a
precondition. The panel's silence means that the property runs on its instance-measured layer alone.

\subsection{What this audit is not}

It is one coder reading shipped documentation once. There is no second independent coder, no blinded
adjudication of disagreements and therefore no inter-rater statistic, and no attempt to sample a
registry of decoder libraries in a way that would support a statement about the field. Those are the
things a claim about the \emph{ecosystem} would require, and their absence is exactly why every claim
in this paper is scoped to the audited panel. What the audit supports is a statement about nine
configurations of five libraries at pinned versions, which is what Table~\ref{tab:declarations} says
and no more.

\end{document}